\def \cC{\mathcal{C}}
\def \cD{\mathcal{D}}
\def \cF{\mathcal{F}}
\def \cG{\mathcal{G}}
\def \cH{\mathcal{H}}
\def \cL{\mathcal{L}}
\def \cS{\mathcal{S}}
\def \cT{\mathcal{T}}
\def \bD{\mathbb{D}}
\def \bF{\mathbb{F}}
\def \bG{\mathbb{G}}
\def \bH{\mathbb{H}}
\def \P{\mathsf P}
\def \E{\mathsf E}
\def \N{\mathbb{N}}
\def \R{\mathbb{R}}
\def \ud{\hspace{1pt} \mathrm{d}}
\def \e{\mathrm{e}}
\def \given{\hspace{1.5pt}|\hspace{1.5pt}}
\def \Given{\hspace{1.5pt}\Big|\hspace{1.5pt}}
\newcommand{\eps}{\varepsilon}
\newcommand{\ind}{\mathbf{1}}
\newtheorem{theorem}{Theorem}[section]
\newtheorem{lemma}[theorem]{Lemma}
\newtheorem{corollary}[theorem]{Corollary}
\newtheorem{proposition}[theorem]{Proposition}
\newtheorem{remark}[theorem]{Remark}
\newtheorem{assumption}[theorem]{Assumption}
\title[Optimal Annuitization with Piecewise Deterministic Mortality]{Optimal Annuitization with stochastic mortality:\\
Piecewise Deterministic Mortality Force}
\author[Buttarazzi]{Matteo Buttarazzi}
\author[De Angelis]{Tiziano De Angelis}
\author[Stabile]{Gabriele Stabile}
\subjclass[2020]{91G80, 62P05, 60G40, 35R35; {\em JEL Classification.} G22}
\keywords{optimal annuitization, stochastic mortality, piecewise deterministic Markov processes, optimal stopping, free boundary problems}
\address{M.~Buttarazzi: School of Management and Economics, Dept. ESOMAS, University of Torino, Corso
Unione Sovietica, 218 Bis, 10134, Torino, Italy.}
\email{\href{mailto:matteo.buttarazzi@unito.it}{matteo.buttarazzi@unito.it}}
\address{T.~De Angelis: School of Management and Economics, Dept. ESOMAS, University of Torino, Corso
Unione Sovietica, 218 Bis, 10134, Torino, Italy; Collegio Carlo Alberto, Piazza Arbarello 8, 10122,
Torino, Italy.}
\email{\href{mailto:tiziano.deangelis@unito.it}{tiziano.deangelis@unito.it}}
\address{G.~Stabile: Department MEMOTEF, Sapienza University of Rome, Via del Castro Laurenziano, 9, Rome, Italy}
\email{\href{mailto:gabriele.stabile@uniroma1.it}{gabriele.stabile@uniroma1.it}}
\date{\today}
\numberwithin{equation}{section}
\begin{document}

\begin{abstract}
This paper addresses the problem of determining the optimal time for an individual to convert retirement savings into a lifetime annuity. The individual invests their wealth into a dividend-paying fund that follows the dynamics of a geometric Brownian motion, exposing them to market risk. At the same time, they face an uncertain lifespan influenced by a stochastic mortality force. The latter is modelled as a piecewise deterministic Markov process (PDMP), which captures sudden and unpredictable changes in the individual’s mortality force. The individual aims to maximise expected lifetime linear utility from consumption and bequest, balancing market risk and longevity risk under an irreversible, all-or-nothing annuitization decision. The problem is formulated as a three-dimensional optimal stopping problem and, by exploiting the PDMP structure, it is reduced to a sequence of nested one-dimensional problems. 
We solve the optimal stopping problem and find a rich structure for the optimal annuitization rule, which cover all parameter specifications.
Our theoretical analysis is complemented by a numerical example illustrating the impact of a single health shock on annuitization timing, along with a sensitivity analysis of key model parameters.
\end{abstract}

\maketitle
\section{Introduction}
Lifespan uncertainty makes retirement planning particularly challenging, requiring individuals to weigh guaranteed income options against more flexible but riskier alternatives.
A lifetime annuity is an insurance product that pays the annuitant a lifelong guaranteed income in exchange for a premium. It provides a valuable hedging instrument against both longevity risk (i.e., the risk of outliving one's savings) and market risk. However, annuitization is typically an irreversible decision. Once the annuity is purchased, the capital is locked into the contract and, although early withdrawals may be possible, they often incur significant penalties, see \cite{de2024variable} or \cite{landriault2021high}. 
Moreover, annuities do not normally provide residual value to the heirs upon the annuitant's death. Then, they are less appealing to individuals with a strong bequest motive. 

An alternative to annuitization is the so-called do-it-yourself approach, in which individuals manage their retirement wealth by investing in financial assets. This strategy offers greater flexibility and the potential for higher returns or larger bequests. However, it exposes individuals to both market volatility and longevity risk. 
As a result, the decision of whether and when to annuitize involves balancing multiple considerations, including investment risk, lifetime uncertainty, liquidity needs, and the desire to leave wealth to heirs.

In this paper, we study the problem of an individual whose retirement wealth at time zero is invested in the financial market, and it may be converted into an annuity at any time of the individual's choosing. There are two sources of randomness in our model: the financial market and the mortality force. The former is summarised by the evolution of a financial fund according to a geometric Brownian motion and the latter is described by a piecewise deterministic Markov process (PDMP). The optimal annuitization problem is formulated as an optimal stopping problem involving dividend payments before annuitization, a bequest in case of premature death of the individual and the value of the annuity at the time the individual decides to convert their wealth. We solve the problem in closed form, and we show that the annuitization decision exhibits radically different features, depending on the parameter choices. In particular, we observe five different geometries of the stopping set for the optimal stopping problem (cf.\ Theorem \ref{maintheorem}). That provides a rich economic interpretation of the interplay between financial and demographic risk.

Our model for the dynamics of the mortality force is novel, and we draw a detailed comparison to the existing literature in the next subsection. Afterwards, we are going to elucidate the mathematical contribution of the paper.

\subsection{Mortality models}

Mortality modelling is crucial for pricing annuities. 
Since Yaari's seminal contribution \cite{yaari1965uncertain}, a rich literature has analysed the annuitisation decision under different mortality assumptions. 
One stream of research models mortality risk using a \textit{constant mortality force} (see, e.g., \cite{gerrard2012choosing}, \cite{liang2014optimal}, \cite{liang2018annuitization} and \cite{stabile2006optimal}). This approach implies that mortality risk does not vary with age or time, which oversimplifies the ageing dynamics, but it is appealing for mathematical tractability.
To better reflect age-related mortality dynamics, another line of research employs a \textit{deterministic time-dependent mortality force} (see \cite{de2019free}, \cite{hainaut2014optimal}, \cite{hainaut2006life}, \cite{milevsky2006optimal}, \cite{milevsky2007annuitization} among others). 
These models capture the increasing risk of death with age by allowing the mortality force to follow a predetermined, time-dependent trajectory. 
\textit{Stochastic mortality models} are better suited to capture uncertainty by incorporating randomness into the evolution of the mortality force. 
The Lee-Carter model \cite{lee1992modeling} pioneered discrete-time stochastic mortality modelling, and it has been extended in several directions (cf.,\ e.g., \cite{renshaw2003lee} and \cite{li2005coherent}).
The first continuous-time stochastic mortality model was introduced by Milevsky and Promislow \cite{milevsky2001mortality}, who employed a mean-reverting Brownian Gompertz process to price mortality-linked derivatives. Dahl \cite{dahl2004stochastic} later proposed a more general diffusion framework, which includes the Milevsky-Promislow model as a special case. 
These foundational contributions have inspired a broad and growing literature on stochastic models.

Incorporating \textit{jumps} into mortality models has been a clear focus of recent research. In a discrete-time setting, Chen and Cox \cite{chen2007modeling} extended the Lee-Carter model by introducing permanent and transitory changes in the mortality force.
Biffis \cite{biffis2005affine} introduced the first continuous-time stochastic mortality framework with jumps, employing an affine jump-diffusion specification to model mortality force. Luciano and Vigna \cite{luciano2008mortality} show that a jump process is more effective than a diffusive component alone in stochastic mortality modelling. Finally, Hainaut and Devolder \cite{hainaut2008mortality} explore pure-jump Lévy processes to model the mortality force. Additional contributions to this very vast literature can be found in \cite{ahmadi2015modeling}, \cite{cox2010mortality}, \cite{cox2006multivariate}, \cite{milidonis2011mortality}, among others.

Our work on optimal annuitization builds on the framework of \cite{de2019free} and \cite{hainaut2014optimal}. In those papers, the insurance company prices the annuity using an {\em objective} mortality force, whereas the individual evaluates their expected future cashflows (before and after annuitization) using a {\em subjective} mortality force. This ambivalence generates complex mechanisms that incentivise/disincentivise annuitization depending on the individual perceived {\em fairness} of the annuity price. 
Those papers use a deterministic, time-dependent (objective and subjective) mortality force. Instead, we incorporate a stochastic mortality model to capture uncertainty in the {\em subjective} mortality of the individual. This is reasonable, because the {\em objective} mortality force is calculated as an average over the population and therefore it is less affected by idiosyncratic risk.     
We assume that the individual's subjective mortality force evolves in continuous time as a PDMP, as introduced by Davis \cite{davis2018markov}, taking constant (random) values between jumps.
The randomness of the jump times reflects the uncertainty surrounding the occurrence of health-related events, while the distribution of jump sizes captures the uncertainty in the magnitude of their impact on the individual’s health. Together, these ingredients capture a broad spectrum of possible mortality trajectories. For instance, large but infrequent jumps represent significant health shocks, such as severe illnesses or accidents, that substantially increase mortality risk. Alternatively, frequent small jumps represent the gradual, random worsening of life conditions.

\subsection{Mathematical contribution}
From a mathematical perspective, the problem is initially formulated as a three-dimensional continuous-time optimal stopping problem.
The individual invests their retirement wealth in a dividend-paying fund following a geometric Brownian motion. Under the \textit{all-or-nothing} framework (e.g., \cite{milevsky2007annuitization}, \cite{de2019free}), the individual has the option at any time to irreversibly convert their entire wealth into an immediate lifetime annuity. 
The individual seeks to maximise expected discounted value of dividends and annuity payments. In order to incorporate the individual's bequest motive, we also introduce a linear payoff which is received by the individual's heirs in case of death prior to annuitization. Such a linear term is modulated by a parameter $\nu \in [0,1]$ that captures the strength of the individual’s bequest motive. To justify the presence of the parameter $\nu$ we may think of the problem as a utility maximisation with linear utility, rather than a rational pricing exercise. 

The state dynamics in the optimal stopping problem is three-dimensional because it includes the fund's value $X$, the number of health shocks $n$ that have occurred and the current state of the mortality force $\mu$. However, by exploiting the structure of the piecewise deterministic mortality force process and applying the dynamic programming principle, we reduce the complexity by transforming the original problem into a sequence of nested one-dimensional optimal stopping problems. The transformation enables a recursive formulation of the problem. 
Thanks to that, we need to solve the optimal annuitization problem {\em only} in between any two consecutive jumps of the mortality force: if at the jump time the individual has not annuitized yet, their payoff equals the value function of the same problem with the current fund's value but with a new state for the mortality force. Despite the dimension reduction, the resulting problem is not amenable to explicit solution methods based on a guess-and-verify approach, because the recursive nature of the payoff remains rather implicit. We perform a careful probabilistic study of the value function that leads us to a {\em complete} solution of the problem, and it unveils a rich structure for {\em all} possible optimal annuitization rules. We find five different annuitization regimes which may coexist within the individual's life span (i.e., switching from one state of mortality to another, the individual may change their annuitization strategy across the ones illustrated in Theorem \ref{maintheorem}). In particular, for a given value of the mortality force, optimal annuitization may occur in any of the following, mutually exclusive, situations: (i) when the retirement wealth exceeds an upper threshold, (ii) when it falls below a lower threshold, (iii) when it exits a given interval of values, (iv) never, no matter the retirement wealth or (v) immediately, no matter the retirement wealth.

We also prove that the value function satisfies a free boundary problem that reflects the nested structure of the payoff, and it satisfies the \emph{smooth fit condition} in all cases.
The theoretical analysis is complemented by a numerical study of our framework, followed by a sensitivity analysis showing how the optimal annuitization thresholds respond to variations in key model parameters.

\subsection{Structure of the paper}

The rest of the paper is structured as follows. In Section \ref{ProblemFormulation}, we introduce the mathematical formulation of the problem and state our main result. Section \ref{numericalanalysis} presents a numerical illustration of the theoretical findings, focusing on the case of a single jump in the mortality force. The theoretical analysis is developed in two stages, enabled by the recursive representation of the value function presented in Proposition \ref{Prop:recursive}. In Section \ref{ConstantForceOfMortality}, we first consider the case of a constant mortality force. Building on that, in Section \ref{OptimalAnnuitizationwithJumps} we extend the analysis to the more general model. Finally, Appendix \ref{app:markovian} contains technical results used throughout the paper.

\section{Problem Formulation}
\label{ProblemFormulation}
In this section, we present a mathematical formulation of the optimal annuitization problem. Our framework incorporates uncertainty in both mortality and financial markets. 
We assume that the individual and the insurance company share the same beliefs regarding the financial market, but may hold different views about demographic risk.
\subsection{Financial and demographic modelling} 
Consider a probability space $(\Omega,\cG,\P)$ equipped with a Brownian motion $(B_t)_{t\ge 0}$. 
Let $\bF^B\coloneqq (\cF^B_t)_{t\geq 0}$ be the right-continuous completion of the Brownian filtration with the $\P$-null sets, with $\cF^B_\infty\coloneqq \vee_{t\geq 0} \cF^B_t$. 
The individual's wealth dedicated to retirement needs is invested in a financial fund, the value of which is denoted by $(X_t)_{t\ge0}$, and it evolves according to 
\begin{align}\begin{aligned}
\ud X_t = (\theta-\alpha)X_t\ud t+\sigma X_t \ud B_t, \ \ \ \ \ \ \ t>0,
\label{Eq:X}
\end{aligned}\end{align}
with $X_0=x$,
where $\theta>0$ is the average continuous return of the financial investment, $\alpha\geq 0$ is the constant dividend rate and $\sigma>0$ is the volatility coefficient. We use $X^x_t$ when we want to stress the starting point $x$ of the dynamics.

We consider an individual of age $\eta\ge 0$ at time $0$. The value of $\eta$ is fixed throughout the paper, and with no loss of generality, time is measured in years. The individual uses a subjective mortality force to assess their survival probability.
We model the subjective mortality force with a Piecewise-Deterministic Markov Process (PDMP), see \cite{davis2018markov}.
The state space for the PDMP must account for the number of jumps that have occurred and the current value of the mortality force. It is denoted $E=\N_0\times[\mu_m,\mu_M]$ for constants $0<\mu_m\leq \mu_M<\infty$ representing the range of possible values of the mortality rate and $\N_0\coloneqq\N\cup\{0\}$.
Let $\mathcal{E}$ denote the Borel $\sigma$-algebra on $E$ and let $\mathcal{P}(E)$ be the set of probability measures on $(E,\mathcal{E})$. In order to describe the dynamics of the PDMP, we consider a mapping $y\mapsto Q(y,\cdot)$ from $E$ to $\mathcal{P}(E)$ with the properties: (i) for each fixed $A\in\mathcal{E}$, the map $y\mapsto Q(y,A)$ is measurable, and (ii)
$Q(y,\{y\})=0$.
Let $(\lambda_n)_{n\in\N}$ be a sequence in $[0,\infty)$ and $(e_n)_{n\in\N}$ be a sequence of i.i.d.\ random variables with $e_n\sim \exp(\lambda_n)$. Define $\tau_0=0$ and
\begin{align}\begin{aligned} \tau_n\coloneqq \sum_{j=1}^n e_j , \ \ \ \ \ n\in\N.\end{aligned}\end{align}
Note that $(e_n)_{n\in\N}$ is the sequence of so-called interarrival times, i.e., the time intervals between consecutive jumps in the mortality force. The time $\tau_n$ is the \textit{n}-th jump time.
Given a deterministic
$\bar \mu_0\in [\mu_m,\mu_M]$, for every $n\in\N$ we let $\bar \mu_n:\Omega \to [\mu_m,\mu_M]$. Then we introduce a process $(\chi_t)_{t\geq0}$, starting at time zero from $\chi_0=(0,\bar \mu_0)$ and defined as
\begin{align}
    \begin{aligned}
        &\chi_t(\omega)\coloneqq (n,\bar \mu_n(\omega)), \hspace{0.5cm} \forall t \in [\tau_n(\omega),\tau_{n+1}(\omega)),
    \end{aligned}
\end{align}
for $n\in\N_0$. The law of the process is specified as
\begin{align}\begin{aligned} \P(\chi_{\tau_{n+1}}\in A\given \sigma(\chi_{\tau_n}))=Q(\chi_{\tau_n},A), \hspace{0.5cm} A \in \mathcal{E}. \end{aligned}\end{align}

\begin{remark}\label{rem:q}
In the general construction described in \cite{davis2018markov}, both coordinates in the process $(\chi_t)_{t\ge 0}$ jump to random values. Instead, in our model, the first coordinate of the process $(\chi_t)_{t\ge 0}$ represents the counter of jumps in the mortality force; therefore, it increases by one after each jump, with probability one. 
In particular, in our setting $Q((n,\mu),(n+1,\Gamma))=q(n,\mu,\Gamma)$ for any Borel subset $\Gamma$ of $[\mu_m,\mu_M]$, where for each $(n,\mu)$ the function $q(n,\mu,\cdot)$ is a probability measure on the Borel $\sigma$-algebra on $[\mu_m,\mu_M]$. 
\end{remark}

We define $\bH\coloneqq (\cH_t)_{t\ge 0}$ as the natural filtration of $(\chi_t)_{t\geq 0}$ augmented with the $\P$-null sets and with $\cH_\infty\coloneqq \vee_{t\geq 0} \cH_t$. 
We know that $\bH$ is right-continuous, see \cite[Ch.\ 2.25, Thm.\ 25.3, p. 63]{davis2018markov} and $(\chi_t)_{t\geq 0}$ is a homogeneous strong Markov process with respect to $\bH$, see \cite[Ch.\ 2.25, Thm.\ 25.5, p. 64]{davis2018markov}. Since $\mu_0$ is deterministic, $\cH_0=\{\varnothing,\Omega\}$. 

It is useful to consider a pathwise representation of the process $(\chi_t)_{t\geq0}$. Indeed we denote $(\chi_t(\omega))_{t\geq 0}=(n_t(\omega),\mu_t(\omega))_{t\geq 0}$, with
\begin{align}\begin{aligned} 
n_t(\omega)\coloneqq\sum_{i=1}^\infty \ind_{\{t\geq \tau_i(\omega)\}} \quad\text{and}\quad \mu_t(\omega)\coloneqq\bar\mu_{n_t}(\omega)=\bar \mu_0 + \sum_{i=1}^{n_t(\omega)} (\bar \mu_i(\omega)-\bar \mu_{i-1}(\omega)),
\end{aligned}\end{align}
with the convention $\sum_{i=1}^0 = 0$.
Due to the Markovian structure of the process $(\chi_t)_{t\ge 0}$ we allow for an arbitrary initial state $\chi_0=(n,\mu)\in\N_0\times[\mu_m,\mu_M]$ and denote the time of the first jump from such an initial state as
\begin{align}\label{eq:xi}
\begin{aligned} 
\xi\coloneqq \inf\{t>0: n_t\neq n\}. 
\end{aligned}\end{align} 
By our construction, the law of $\xi$ is exponential with parameter $\lambda_{n+1}$.

We are going to assume independence between financial and mortality risks. Similarly, we assume independence between the jump time $\tau_n$ and the subsequent value of the mortality force $\bar\mu_n$, for $n\in\N$. However, we maintain that $\bar\mu_{n+1}$ may depend on $\bar\mu_n$, because an individual's future health shall be linked to their current one. The above discussion is summarised in the next formal assumption.
\begin{assumption} \label{Ass:indipendence}
The $\sigma$-algebras $\cF^B_\infty$ and $\cH_\infty$ are independent. Moreover, $\xi$ and $\bar \mu_{n+1}$ are also independent.
\end{assumption}

Let $\bF\coloneqq (\cF_t)_{t\geq0}$ with $\cF_t=\cF^B_t\vee\cH_t$ for $t\ge 0$ and $\cF_\infty=\vee_{t\geq0}\cF_t$. Under the first part of Assumption \ref{Ass:indipendence} the filtration $\bF$ is right-continuous, see \cite[Ch.\ 1.1.4, Prop.\ 1.12, p.\ 6]{aksamit2017enlargement}. We use a Cox process to model the time of death of the policyholder. Let $\Theta$ be independent of $\cF_\infty$ and exponentially distributed with parameter one. Following the construction as in \cite[Ch.\ 2.3.1]{aksamit2017enlargement}, we define the individual's time of death \begin{align}\begin{aligned} 
\tau_d\coloneqq \inf \{t\geq 0: \Lambda_t\geq \Theta\},
\end{aligned}\end{align}
where $\Lambda_t\coloneqq \int_0^t \mu_s\ud s$. Since $(\Lambda_t)_{t\ge 0}$ is non-decreasing then $\{\tau_d>s\}=\{\Lambda_s<\Theta\}$. Moreover $\Lambda_t<\infty$ for all $t\ge 0$ and $\Lambda_\infty=\infty$. Therefore $\tau_d<\infty$, $\P$-a.s. The process $t\mapsto\ind_{\{\tau_d\le t\}}$ generates a filtration $\bD\coloneqq (\cD_t)_{t\ge 0}$ with
$\cD_t\coloneqq \sigma(\ind_{\{\tau_d\leq s\}}, s\leq t)$ and the usual convention $\cD_\infty=\vee_{t\ge 0}\cD_t$. We introduce the enlarged filtration $\bG\coloneqq (\cG_t)_{t\ge0}$ with $\cG_{t}\coloneqq \cap_{u>t} (\cF_u\vee \cD_u)$ for $t\ge 0$.
Note that $\bG$ is the smallest enlargement of $\bF$, satisfying the usual assumption of right-continuity and completion,  for which the random time $\tau_d$ is a stopping time \cite[Ch.\ VI.3, p. 370]{protter2005stochastic}. 
Moreover, the so-called \textit{Immersion Property} holds, i.e., any $\bF$-martingale is a $\bG$-martingale, see \cite[Ch.\ 2.3.2, Lemma 2.28, p. 43]{aksamit2017enlargement}.

It is well-known that (cf.\ \cite[Ch.\ 2.3.2, Lemma 2.25, p. 42]{aksamit2017enlargement}) 
\begin{align}\begin{aligned} \label{Eq:tau_d_F_infty} \P(\tau_d>s\given \cF_s)=\P(\tau_d>s\given\cF_\infty)=\exp(-\Lambda_s),\quad \text{for all $s\ge 0$}. 
\end{aligned}\end{align}
This result allows us to give an expression for the individual's {\em subjective} probability of survival $_s p_{\eta+t}$ (i.e., the probability at age $\eta+t$ to survive the next $s$ years). The latter is defined for $t,s\ge 0$ by
\begin{align}\label{Def:Subj_prob}
\begin{aligned}
_s p_{\eta+t}&=\P(\tau_d>t+s\given\cF_{t+s},\tau_d>t)\coloneqq \frac{\P (\tau_d>t+s,\tau_d>t \given \cF_{t+s})}{\P(\tau_d>t\given\cF_{t+s})}=\exp\big[-\big(\Lambda_{t+s}-\Lambda_t\big)\big].
\end{aligned}
\end{align}

The insurance company instead relies on a so-called \textit{objective} survival probability function, which is calculated using an \emph{objective} mortality force derived from a demographic analysis of the population. For simplicity, we assume a constant objective mortality force $\hat{\mu}\in [\mu_m,\mu_M]$ so that the objective survival probability reads
\begin{align}\begin{aligned}
_s \hat{p}_{\eta+t}\coloneqq  \e^{-\hat{\mu}s},\quad t,s\ge 0.
\end{aligned}\end{align}
The different survival probabilities adopted by the insurer and the individual account for the imperfect information available to the insurer on the individual's mortality risk profile.

\subsection{Optimization problem} 
The individual invests in the financial market and collects dividends at a rate $\alpha$ on the financial fund. They may choose a random time $\tau:\Omega\to[0,\infty]$ at which they convert their entire wealth into an annuity (we adopt the so-called \textit{all or nothing} framework). 
After annuitization, i.e., for $t\ge \tau(\omega)$ dividend payments stop and the individual receives the annuity payment at a constant annual rate 
\begin{align}\begin{aligned} \label{Def:annuitypayments}
P_{\tau}\coloneqq \frac{X_\tau-K}{\hat{a}_{\eta+\tau}}.
\end{aligned}\end{align}
Here $K\in\R$ is a fixed constant and 
\begin{align}\begin{aligned}
\hat{a}_{\eta+t}\coloneqq  \int_{0}^{\infty} \e^{-\hat{\rho} s} {}_s \hat{p}_{\eta+t}\ud s=\frac{1}{\hat{\rho}+\hat{\mu}},
\end{aligned}\end{align}
is the price that the insurance company charges for a unitary lifetime annuity, where $\hat{\rho}>0$ is the interest rate guaranteed by the insurer. The constant $K$ is either an acquisition fee ($K\ge 0$) or an incentive ($K<0$). 
Should the individual pass away before the annuitization time, i.e., on the event $\{\tau\ge \tau_d\}$, they leave a bequest equal to their current wealth.

The individual weighs the cashflows using a linear utility. We assume that the linear utility of dividend and annuity payments has unit slope, whereas the utility from bequest is modulated by a parameter $\nu\in[0,1]$, measuring the strength of bequest motives. Putting all the above considerations together, for a given annuitization time $\tau$ the individual's payoff reads 
\[
\int_{0}^{\tau_d \wedge \tau }\e^{-\rho t}\alpha X_t\ud t  +\ind_{\{\tau_d \leq \tau \}}\e^{-\rho \tau_d} \nu X_{\tau_d} +P_{\tau}\hspace{-3pt}\int_{\tau_d\wedge \tau}^{\tau_d}\e^{-\rho t}\ud t,
\]
where $\rho>0$ is a subjective discount rate.

The goal of the individual is to maximise the expected payoff by choosing optimally the annuitization time $\tau$. At first, we may want to draw $\tau$ from the class of all $\bG$-stopping times, which we denote $\cT(\bG)$. However, thanks to \cite[Ch.\ VI.3, Lemma, p.\ 370]{protter2005stochastic} we know that for any $\tau\in\cT(\bG)$ there is $\sigma\in\cT(\bF)$ such that $\tau\wedge\tau_d=\sigma\wedge\tau_d$, $\P$-a.s. Therefore, we can restrict the class of stopping times in the optimization to $\cT(\bF)$ (cf.\ also \cite[Appendix A]{de2019free}) and the problem reads: 
\begin{align}\begin{aligned}
\label{Def:Value_f_nonmrkv}
V_0\coloneqq\sup_{\tau \in\cT(\bF)} \E \Big[  \int_{0}^{\tau_d \wedge \tau }\e^{-\rho t}\alpha X_t\ud t  +\ind_{\{\tau_d \leq \tau \}}\e^{-\rho \tau_d} \nu X_{\tau_d} +P_{\tau}\hspace{-3pt}\int_{\tau_d\wedge \tau}^{\tau_d}\e^{-\rho t}\ud t \Big]. 
\end{aligned}\end{align}

It is shown in Proposition \ref{prop:markovian} in Appendix \ref{app:markovian} that the above problem can be cast into a Markovian framework. In particular, $V_0=V(X_0,0,\mu_0)$ for a function $V(x,n,\mu)$ defined as 
\begin{align}
    \begin{aligned} \label{Eq:Value_f_mrkv}
    V(x,n,\mu)\coloneqq \sup_{\tau \in\cT(\bF)} \E_{x,n,\mu} \Big[& \int_{0}^{\tau}\e^{-\int_0^t (\rho+\mu_s)\ud s} ( \alpha+\nu \mu_t ) X_t\ud t\\
    &+ \e^{-\int_0^\tau (\rho+\mu_s)\ud s} \hat{f}(n_\tau,\mu_\tau)(X_\tau-K)\Big],
    \end{aligned}
\end{align}
where $\P_{x,n,\mu}(\cdot)=\P(\cdot|X_0=x,n_0=n,\mu_0=\mu)$\footnote{With this notation $\mu_0=\bar \mu_{n_0}=\bar \mu_n$.} and, letting 
\begin{equation} 
\label{Eq: a_eta+t}
 a_{\eta+t}\coloneqq  \E \Big[ \int_{0}^{\infty} \e^{-\rho u} {}_u p_{\eta+t}\ud u \Given \cF_t \Big],
\end{equation}
be the individual's valuation of the unitary annuity, in the second term under expectation we have 
\begin{align}\begin{aligned} \label{fhatcmu} 
\hat{f}(n_t,\mu_t)= \frac{a_{\eta+t}}{\hat{a}_{\eta+t}}, 
\end{aligned}\end{align} 
for a suitable $\mathcal{E}$-measurable function $\hat f$.
The function $\hat{f}(n,\mu)$ is the so-called ``money's worth'' and it measures the annuity's attractiveness from an individual's perspective. Specifically, it gives the ratio between the individual's valuation of the unitary annuity and its market price. When $\hat{f}(n,\mu)=1$, the annuity is {\em actuarially fair}. Values of $\hat f(n,\mu)$ above/below 1 indicate that the individual perceives the annuity as under/over-priced relative to its actuarial value. 

The next assumption ensures well-posedness of the problem in \eqref{Eq:Value_f_mrkv} (cf.\ Proposition \ref{vofinitinessl}).
\begin{assumption}
\label{mainass}
    $\theta-\alpha-\rho-\mu_m< 0$.
\end{assumption}

\subsection{A recursive formulation via DPP}\label{conludingsection}
The special case of interest in this paper is the one in which the mortality force has finitely many jumps. We express this feature with the following assumption
\begin{assumption} \label{assnomorejumps}
There is $N\in\N_0$ such that $\bar\mu_{N+m}(\omega)=\bar\mu_N(\omega)$ for all $m\in\N$ and $\omega\in\Omega$ (i.e., $\lambda_{N+1+m}=0$ for all $m\in\N_0$).
\end{assumption}

From an applied perspective, the assumption brings no loss of generality because $N$ can be chosen arbitrarily large. From a mathematical perspective, instead, the assumption will allow us to use a recursive approach to the solution of the problem.
We keep track of the number of jumps in the optimisation problem by relabelling $V(x,n,\mu)=V^N(x,n,\mu)$ and notice that $n\le N$. 
In the next proposition, we reformulate the value function under Assumption~\ref{assnomorejumps} using a dynamic programming approach. The formula in \eqref{VF:recursive} says that an individual with initial state $(x,n,\mu)$ chooses a stopping time for the Brownian filtration $\bF^B$ and solves a stopping problem with constant mortality $\mu$ but with an additional term in the running payoff. The latter incorporates the value of the original problem \eqref{Eq:Value_f_mrkv} after the jump in the mortality force. The proof is postponed to Appendix \ref{app.proof}.

\begin{proposition} \label{Prop:recursive}
    For $n\in\N_0$, $n\le N$, letting $r_n(\mu)\coloneqq \rho+\mu+\lambda_{n+1}$, it holds
    \begin{align}
        \begin{aligned}\label{VF:recursive}
            V^N(x,n,\mu)&= \!\!\sup_{\tau \in\cT(\bF^B)} \E_{x,n,\mu} \Big[ \e^{-r_n(\mu) \tau}  \hat{f}(n,\mu) (X_\tau\!-\!K)\\
            &\qquad\qquad\qquad+\int_0^\tau\!\! \e^{-r_n(\mu) t} \big[( \alpha\!+\!\nu \mu ) X_t \!+\!\lambda_{n+1} \hat{V}^N(X_t,n\!+\!1;\mu) \big]\ud t  \Big],
    \end{aligned}\end{align}
    where
\begin{align}\begin{aligned} \label{Eq:hatV}\hat{V}^N(x,n+1;\mu)\coloneqq \int_{\mu_m}^{\mu_M} V^N (X_{t},n+1,z)q(n,\mu,\ud z), \quad n\le N-1,
\end{aligned}
\end{align} 
with $q(n,\mu,\cdot)$ introduced in Remark \ref{rem:q} and $\hat V^N (X_{t},N+1,\mu)=0$.
\end{proposition}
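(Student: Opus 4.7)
The plan is to apply the dynamic programming principle (DPP) at the first jump time $\xi$ of the mortality force (defined in \eqref{eq:xi}), and then exploit the independence granted by Assumption \ref{Ass:indipendence} between $\xi$, $\bar\mu_{n+1}$ and the filtration $\bF^B$ to integrate out the jump-related randomness explicitly.

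First I would invoke the strong Markov property of $(X_t,n_t,\mu_t)$ (cf.\ Appendix \ref{app:markovian}) together with the DPP applied at the $\bF$-stopping time $\xi$. Since $\mu_s=\mu$ and $n_s=n$ for every $s<\xi$, while $(n_\xi,\mu_\xi)=(n+1,\bar\mu_{n+1})$, this yields
\begin{align}\begin{aligned}
V^N(x,n,\mu) = \sup_{\tau \in\cT(\bF)} \E_{x,n,\mu} \Big[ & \int_0^{\tau \wedge \xi} \e^{-(\rho+\mu)t}(\alpha+\nu\mu)X_t\ud t + \ind_{\{\tau < \xi\}}\e^{-(\rho+\mu)\tau}\hat{f}(n,\mu)(X_\tau-K) \\
& + \ind_{\{\tau \geq \xi\}}\e^{-(\rho+\mu)\xi}V^N(X_\xi,n+1,\bar\mu_{n+1})\Big].
\end{aligned}\end{align}

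Next, conditioning on $\cF^B_\infty$ and using that $\xi\sim\exp(\lambda_{n+1})$ is independent of $\cF^B_\infty$ and that $\bar\mu_{n+1}$ has law $q(n,\mu,\cdot)$ (Remark \ref{rem:q}) independent of $\xi$, the standard identities $\P(\xi>t\given\cF^B_\infty)=\e^{-\lambda_{n+1}t}$ and $\E\big[V^N(X_\xi,n+1,\bar\mu_{n+1})\given\cF^B_\infty,\xi=s\big]=\hat{V}^N(X_s,n+1;\mu)$ combine, via Fubini, to reduce the three terms to a single expectation. The running-reward term contributes $\int_0^\tau \e^{-r_n(\mu)t}(\alpha+\nu\mu)X_t\ud t$ after pairing $\e^{-\lambda_{n+1}t}$ with $\e^{-(\rho+\mu)t}$; the $\{\tau<\xi\}$-terminal term becomes $\e^{-r_n(\mu)\tau}\hat{f}(n,\mu)(X_\tau-K)$; and integrating the jump density $\lambda_{n+1}\e^{-\lambda_{n+1}s}$ together with the continuation value over $s\in(0,\tau)$ recovers $\int_0^\tau \lambda_{n+1}\e^{-r_n(\mu)t}\hat V^N(X_t,n+1;\mu)\ud t$. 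This is precisely the integrand in \eqref{VF:recursive}. When $n=N$, Assumption \ref{assnomorejumps} forces $\lambda_{N+1}=0$, so $\xi=+\infty$ a.s.\ and the $\hat V^N$-term disappears, consistently with the convention $\hat V^N(\cdot,N+1,\cdot)=0$.

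The main obstacle is to justify that the supremum may be restricted from $\cT(\bF)$ to $\cT(\bF^B)$. Since $\cF_t=\cF^B_t\vee\cH_t$ strictly enlarges $\cF^B_t$, an $\bF$-stopping time $\tau$ can in principle depend on whether a jump of $\mu$ has already occurred. The key lemma to establish (in analogy with the reduction from $\bG$ to $\bF$ noted after \eqref{Def:Value_f_nonmrkv} via \cite[Ch.~VI.3]{protter2005stochastic}) is that for every $\tau\in\cT(\bF)$ there exists $\sigma\in\cT(\bF^B)$ with $\tau\wedge\xi=\sigma\wedge\xi$, $\P$-a.s. Once this is available, the payoff on $\{\tau\geq\xi\}$ has already been absorbed into the continuation value $V^N(X_\xi,n+1,\bar\mu_{n+1})$ by the DPP, so the remaining objective depends on $\tau$ only through $\tau\wedge\xi$; replacing $\tau$ by $\sigma$ leaves the objective unchanged, and the two suprema coincide, completing the proof.
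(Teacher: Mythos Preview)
Your proposal is correct and follows essentially the same approach as the paper: both apply the DPP at $\xi$, then reduce the supremum from $\cT(\bF)$ to $\cT(\bF^B)$ via the Protter-type lemma you identify (the paper invokes this tersely as ``the same argument used to obtain \eqref{Def:Value_f_nonmrkv}''), and finally integrate out $\xi$ and $\bar\mu_{n+1}$ by Fubini using Assumption~\ref{Ass:indipendence}. If anything, you are more explicit than the paper about why the DPP objective depends on $\tau$ only through $\tau\wedge\xi$, which is exactly what makes the reduction work.
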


Proposition \ref{Prop:recursive} yields an optimal stopping problem on the 1-dimensional diffusion $(X_t)_{t\ge 0}$. The value function depends explicitly on $(n,\mu)$ in a parametric way. Thus, we can also replace $\E_{x,n,\mu}$ with the easier $\E_x$. Thanks to the explicit form of the process $X$, we will often use the equivalence between the law of $X$ under $\P_x$ and the law of $X^x$ under $\P$.

\subsection{Preliminary results from optimal stopping theory}\label{sec:prelimOS}

Some standard results from optimal stopping theory allow us to get started with our analysis.
We define the stopping region and the continuation region, respectively, as follows
\begin{align}
\begin{aligned}
\cS&\coloneqq \{(x,n,\mu): V^N(x,n,\mu)=\hat{f}(n,\mu)(x-K)\},\\ 
\cC&\coloneqq \{(x,n,\mu): V^N(x,n,\mu)>\hat{f}(n,\mu)(x-K)\}. 
\end{aligned}
\end{align}

The $(n,\mu)$-sections of the stopping and continuation regions read as
\begin{align} \label{SCRegionsV}
\begin{aligned}
\mathcal{S}_{n,\mu}&\coloneqq \{x\in(0,\infty):V^N(x,n,\mu)=\hat{f}(n,\mu)(x-K)\}, \\
\mathcal{C}_{n,\mu}&\coloneqq \{x\in(0,\infty):V^N(x,n,\mu)>\hat{f}(n,\mu)(x-K)\}. 
\end{aligned}
\end{align}
Let 
$\tau^*_{n,\mu}\coloneqq \{t\geq 0: X_t \in \mathcal{S}_{n,\mu}\}$. Continuity of the mapping $x\mapsto V^N(x,n,\mu)$ is shown in Propositions \ref{Prop:ConvN} and \ref{convexw} from first principles.
Then, $\tau^*_{n,\mu}$ is optimal for the problem formulation in \eqref{VF:recursive} by \cite[Ch.\ 3.3, Thm. 3, p. 127]{shiryaev2007optimal}. Moreover, letting $\tau^*\coloneqq \{t\geq 0: (X_t,n_t,\mu_t)\in \cS\}$ 
we have $\tau^*_{n,\mu}\wedge \xi=\tau^*\wedge\xi$, $\P_{x,n,\mu}$-a.s. 
Finally, we recall that 
\begin{align*}
\begin{aligned}
t\mapsto &\,\e^{-r_n(\mu)(t\wedge \tau)}V^N(X_{t\wedge \tau},n,\mu)+\int_0^{t\wedge \tau}\e^{-r_n(\mu)s}\big[(\alpha+\nu\mu)X_s+\lambda_{n+1}\hat V^N(X_s,n+1;\mu)\big]\ud s,
\end{aligned}
\end{align*}
is a super-martingale for any $\tau\in\cT(\bF^B)$ and it is a martingale for $\tau=\tau^*_{n,\mu}$. That also yields an analytic characterisation of the value function. 
Let us introduce the infinitesimal generator $\cL$ of $(X_t)_{t\geq 0}$, defined as
\begin{align}\label{L}
(\cL u)(x)\coloneqq \tfrac{1}{2} \ \sigma^2 x^2 u_{xx}(x)+(\theta-\alpha)x u_{x}(x), \quad \text{for any} \ u\in C^2(\R),
\end{align}
where $u_x=\partial u/\partial x$ and $u_{xx}=\partial^2 u/\partial x^2$.
If $V^N(\cdot,n,\mu)\in C((0,\infty))$, then $\cC_{n,\mu}$ is an open set and from the martingale condition above $V^N(\cdot,n,\mu)\in C^2(\cC_{n,\mu})$ satisfies
\begin{align}\label{eq:ODE}
\begin{aligned}
\big[\cL V^N-r_n(\mu) V^N\big](x,n,\mu)=-(\alpha+\nu\mu)x-\lambda_{n+1} \hat V^N(x,n+1;\mu), \quad x\in\cC_{n,\mu}.
\end{aligned}
\end{align}
The proof of the above statement is completely standard and therefore omitted (cf.\ \cite[Ch.\ 3.8, Thm. 15, p. 157]{shiryaev2007optimal} and \cite[Ch. 3]{peskir2006optimal}).

For future reference, we recall that 
given $\mu\in[\mu_m,\mu_M]$ and $n\in\N_0$, $n\le N$, the fundamental solutions of the ODE $\cL u=r_n(\mu) u$ are $\psi(x,n,\mu)\coloneqq x^{\gamma_{n}^+(\mu)}$ and $\phi(x,n,\mu)\coloneqq x^{\gamma_{n}^-(\mu)}$, where  
\begin{equation} \label{gammarpm}
\gamma^{\pm}_{n}(\mu)\coloneqq  \frac{1}{2}-\frac{\theta-\alpha}{\sigma^2} \pm \sqrt{\Big(\frac{1}{2}-\frac{\theta-\alpha}{\sigma^2}\Big)^2+\frac{2 r_n(\mu)}{\sigma^2}},
\end{equation}
are such that $\gamma^+_n(\mu)>1$ and $\gamma^-_n(\mu)<0$ (cf.\ \cite[Ch.\ II.1, pp.\ 18-19]{borodin2015handbook}).

\subsection{Optimal annuitization strategies}
It is convenient to work with the \textit{Lagrange formulation} of the problem. For $n \leq N$, define
\begin{align}
\label{ConnectionWandVnjumps}
W^N(x,n,\mu) \coloneqq V^N(x,n,\mu) - \hat{f}(n,\mu)(x - K).
\end{align}
An application of Dynkin's formula to $\e^{-r_n(\mu)\tau}\hat f(n,\mu)(X_\tau-K)$ leads to the representation (cf.\ Section \ref{ConstantForceOfMortality} for details): 
\begin{align}
\label{w}
W^N(x,n,\mu) = \sup_{\tau\in\cT(\bF^B)} \E_x \Big[ \int_0^\tau e^{-r_n(\mu)t} M^N(X_t, n, \mu)\, \ud t \Big],
\end{align}
where 
\begin{align}
\label{Def:M_njumps}
\begin{aligned}
M^N(x,n,\mu) &\coloneqq r_n(\mu) \hat{f}(n,\mu) K \\
&\quad + (\hat{f}(n,\mu) - \beta_n(\mu))(\theta - \alpha - r_n(\mu)) x + \lambda_{n+1} \hat{V}^N(x,n+1;\mu),
\end{aligned}
\end{align}
with
\begin{equation} \label{betanojump} 
\beta_n(\mu) \coloneqq x^{-1}\E_x\Big[\int_0^\infty \e^{-r_n(\mu)s}(\alpha+\nu \mu)X_s \ud s\Big]= \frac{\alpha+\nu \mu}{r_n(\mu)+\alpha-\theta}.
\end{equation}

The next two theorems summarise the regularity of the value function and the structure of the optimal annuitization rule for all parameter choices.
A rigorous proof of the theorem is presented in Section \ref{OptimalAnnuitizationwithJumps}, after a detailed technical study.
We begin by introducing the key ingredients required to state the results.
For each $(x,n,\mu) \in [0,\infty) \times \{0,\dots,N\} \times [\mu_m,\mu_M]$, define
\begin{equation} \label{wninfty}
w_N(x,n,\mu) \coloneqq \E_x \Big[ \int_0^\infty \e^{-r_n(\mu) t} M^N(X_t,n,\mu)\, \ud t \Big],
\end{equation}
and let
$I_{n,\mu} \coloneqq \{ x \in [0,\infty) : w_N(x,n,\mu) > 0 \}$.
Moreover, for $\mu\in[\mu_m,\mu_M]$ and $n\le N-1$ let
\begin{align}\label{eq:Lnmu}
\begin{aligned}
        L(n,\mu)&\coloneqq(\hat{f}(n,\mu)-\beta_{n}(\mu))(\theta-\alpha- r_n(\mu))\\
        &\quad +\lambda_{n+1} \int_{\mu_m}^{\mu_M} \Big(\hat f(n+1,z) +  \frac{\big[L(n+1,z)\big]^+}{r_{n+1}(z)+\alpha-\theta}\Big) q(n,\mu,\ud z),
\end{aligned}
\end{align}
with $L(N,\mu)\coloneqq(\hat f(N,\mu)-\beta_N(\mu))(\theta-\alpha-r_{N}(\mu))$.
Given an open set $A\subset(0,\infty)$ we denote $\overline A$ its closure relative to $[0,\infty)$. A function $f$ belongs to $C^1(\overline A)$ if $f_x\in C(A)$ admits a continuous extension to $\overline A$. 
We are now in a position to state the main theorems.
\begin{theorem}[Value function]
\label{maintheorem1}
For $n\le N$ and $\mu\in[\mu_m,\mu_M]$, the mapping $x\mapsto V^N(x,n,\mu)$ is Lipschitz and {\em continuously differentiable} on $(0,\infty)$ (in particular, $V^N(\cdot,n,\mu)$ extends continuously to $[0,\infty)$). Moreover, $V^N$ is classical solution of \eqref{eq:ODE} and
$V^N(\cdot,n,\mu)\in C^2(\overline\cC_{n,\mu}\cap(0,\infty))$. 
\end{theorem}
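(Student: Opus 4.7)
The plan is to argue by \emph{downward induction} on $n$, from $n=N$ down to $n=0$, exploiting the recursive representation \eqref{VF:recursive} together with the preparatory results of Section \ref{ConstantForceOfMortality} (Propositions \ref{Prop:ConvN} and \ref{convexw}). In the base case $n=N$, Assumption \ref{assnomorejumps} gives $\lambda_{N+1}=0$, so \eqref{VF:recursive} is a constant-mortality problem with linear terminal payoff $\hat f(N,\mu)(x-K)$ and linear running reward $(\alpha+\nu\mu)x$. This is exactly the setting treated in Section \ref{ConstantForceOfMortality}, which already delivers Lipschitz continuity, $C^1$ regularity, the ODE \eqref{eq:ODE} and the $C^2$ property up to the boundary of $\cC_{N,\mu}$.

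For the inductive step, I assume the conclusion holds for $n+1$ and every $z\in[\mu_m,\mu_M]$. The first task is to transfer the regularity through the integral operator defining $\hat V^N(\cdot,n+1;\mu)$ in \eqref{Eq:hatV}. By the induction hypothesis, $x\mapsto V^N(x,n+1,z)$ is Lipschitz with a constant that can be bounded uniformly in $z\in[\mu_m,\mu_M]$ (tracking the $z$-dependence of the Lipschitz bound in Step 2 below shows it depends only on $\mu_m,\mu_M$ and the model constants). Dominated convergence then yields that $\hat V^N(\cdot,n+1;\mu)$ is Lipschitz on $(0,\infty)$ and continuous up to $x=0$. Consequently, problem \eqref{VF:recursive} for $(n,\mu)$ is an optimal stopping problem on the GBM $(X_t)_{t\ge 0}$ with a constant discount rate $r_n(\mu)$, a linear terminal payoff, and a Lipschitz running reward $x\mapsto(\alpha+\nu\mu)x+\lambda_{n+1}\hat V^N(x,n+1;\mu)$. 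This places it within the scope of the constant-mortality analysis of Section \ref{ConstantForceOfMortality}, modulo the replacement of the pure linear running reward by a Lipschitz one.

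\textbf{Step 2 (Lipschitz).} For $0<x_1<x_2$, using the scaling $X^{x_2}_t=(x_2/x_1)X^{x_1}_t$ inside \eqref{VF:recursive} and the Lipschitz bound on $\hat V^N(\cdot,n+1;\mu)$, I bound the difference of the integrands uniformly in $\tau\in\cT(\bF^B)$ by $C|x_2-x_1|\int_0^\tau\e^{-(r_n(\mu)-\theta+\alpha)t}\ud t$, which yields $|V^N(x_2,n,\mu)-V^N(x_1,n,\mu)|\le C|x_2-x_1|$ thanks to Assumption \ref{mainass}. This gives also the continuous extension to $[0,\infty)$.

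\textbf{Step 3 (ODE and $C^2$ interior).} Standard optimal stopping theory on the open set $\cC_{n,\mu}$ (cf.\ \cite[Ch.~3]{peskir2006optimal}) combined with the super-martingale/martingale property stated at the end of Section \ref{sec:prelimOS} gives $V^N(\cdot,n,\mu)\in C^2(\cC_{n,\mu})$ and the classical ODE \eqref{eq:ODE}. On $\cS_{n,\mu}$, $V^N(\cdot,n,\mu)$ coincides with the linear obstacle and is trivially smooth.

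\textbf{Step 4 (smooth fit and $C^2$ up to the boundary).} The crux is the $C^1$ match at every boundary point $x^*\in\partial\cC_{n,\mu}\cap(0,\infty)$. Since the GBM \eqref{Eq:X} is a non-degenerate one-dimensional diffusion, every point of $(0,\infty)$ is regular for itself for $X$; by the standard smooth-fit principle (see, e.g., \cite[Ch.~9]{peskir2006optimal}) the value function is $C^1$ across $\partial\cC_{n,\mu}$, hence on all of $(0,\infty)$. The $C^2$-extension up to $\overline\cC_{n,\mu}\cap(0,\infty)$ then follows by solving \eqref{eq:ODE} for $V^N_{xx}$,
\begin{equation*}
V^N_{xx}(x,n,\mu)=\frac{2}{\sigma^2 x^2}\Big[r_n(\mu)V^N(x,n,\mu)-(\theta-\alpha)x\,V^N_x(x,n,\mu)-(\alpha+\nu\mu)x-\lambda_{n+1}\hat V^N(x,n+1;\mu)\Big],
\end{equation*}
whose right-hand side is continuous on $\overline\cC_{n,\mu}\cap(0,\infty)$ by Step 2 and the inductive regularity of $\hat V^N$.

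The delicate part of the argument is Step 4: verifying smooth fit requires that the geometry of the stopping set is controlled well enough to apply the regularity criterion, and this is where the five different regimes announced before Theorem \ref{maintheorem1} matter. A secondary technical obstacle is Step 1, namely obtaining the Lipschitz bound on $\hat V^N(\cdot,n+1;\mu)$ with a constant \emph{uniform in $z$}; this is needed so the induction closes cleanly, and it is handled by tracking the dependence of the Lipschitz constant from Step 2 on the parameter $\mu$ and showing it is bounded on $[\mu_m,\mu_M]$.
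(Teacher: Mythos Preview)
Your proposal is correct and follows essentially the same backward-induction scheme as the paper: Lipschitz regularity (Proposition~\ref{convexw}), the ODE on $\cC_{n,\mu}$ via the martingale property (Section~\ref{sec:prelimOS}), smooth fit at $\partial\cC_{n,\mu}$, and the $C^2$ extension by rearranging \eqref{eq:ODE}. The only substantive difference is in Step~4: the paper proves smooth fit by a direct difference-quotient estimate on $W^N$ using the law of the iterated logarithm (Propositions~\ref{smoothfitv1} and its extension to $n<N$), and crucially it first establishes \emph{convexity} of $x\mapsto V^N(x,n,\mu)$ (Proposition~\ref{convexw}), which forces $\cS_{n,\mu}$ to be an interval and thereby guarantees regularity of each boundary point; you instead invoke the abstract smooth-fit principle and defer the structure to the ``five regimes'', but you should be aware that the paper's route to that structure goes through convexity, which your outline does not mention.
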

It is worth noticing that the final statement in the above theorem follows by rewriting \eqref{eq:ODE} as
\begin{equation*}
\begin{aligned}
V^N_{xx}(x,n,\mu)&=\frac{2}{\sigma^2 x^2}\Big(r_n(\mu)V^N(x,n,\mu)-(\theta-\alpha)x V^N_x(x,n,\mu)\\
&\qquad\qquad-(\alpha+\nu\mu)x-\lambda_{n+1} \hat V^N(x,n+1;\mu)\Big),\quad x\in\overline\cC_{n,\mu}\cap(0,\infty),
\end{aligned}
\end{equation*}
and noticing that all terms on the right-hand side are continuous on $(0,\infty)$.

The structure of the optimal annuitization region depends on the behaviour of the function $M^N$. The next theorem is divided into six cases, which cover all possible shapes of the function $M^N$. For $K<0$ and $K=0$ the statements in Theorem \ref{maintheorem}--(2),(3) clarify that no other form of $M^N$ can occur. For $K > 0$, Theorem \ref{maintheorem}--(1) also exhausts all possible cases because $I_{n,\mu}\supset\{0\}$ (hence $I_{n,\mu} \neq \varnothing$) for all admissible values of $(n,\mu)$ (cf.\ Proposition \ref{Kposw}--(a)) and $x \mapsto M^N(x,n,\mu)$ is shown to be convex in the proof of Proposition \ref{convexw} (cf.\ Figure \ref{fig1} and also Remark \ref{allbehaviourM}). Since $V^N(\cdot,n,\mu)$ is extended to $[0,\infty)$, in the next theorem we also consider the extensions to $[0,\infty)$ of the sets $\cC_{n,\mu}$ and $\cS_{n,\mu}$. 

\begin{figure}[h!]
    \centering
    \begin{subfigure}[t]{0.22\textwidth}
        \centering
        \includegraphics[width=\textwidth]{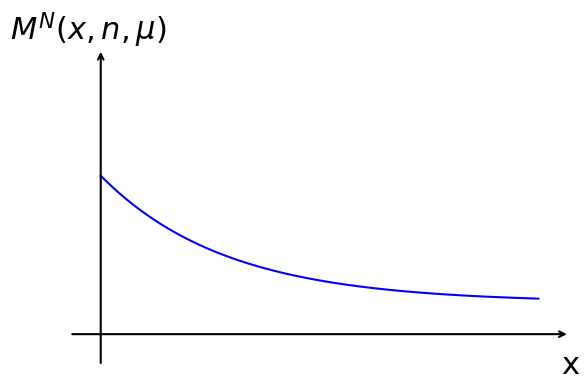}
        \caption{}
    \end{subfigure}
    \hfill
    \begin{subfigure}[t]{0.22\textwidth}
        \centering
        \includegraphics[width=\textwidth]{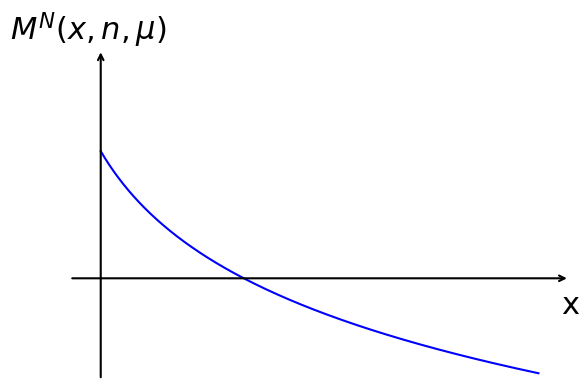}
          \caption{}
    \end{subfigure}
    \hfill
    \begin{subfigure}[t]{0.22\textwidth}
        \centering
        \includegraphics[width=\textwidth]{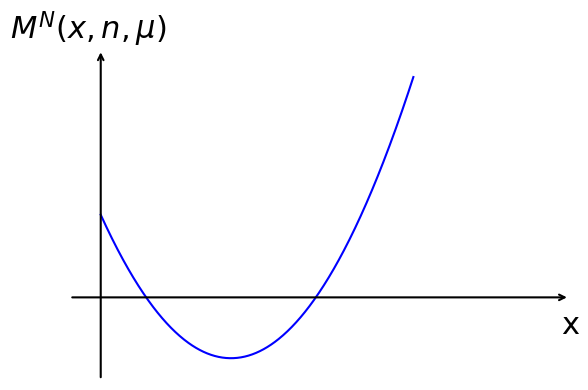}
          \caption{}
    \end{subfigure}
        \hfill
    \begin{subfigure}[t]{0.22\textwidth}
        \centering
        \includegraphics[width=\textwidth]{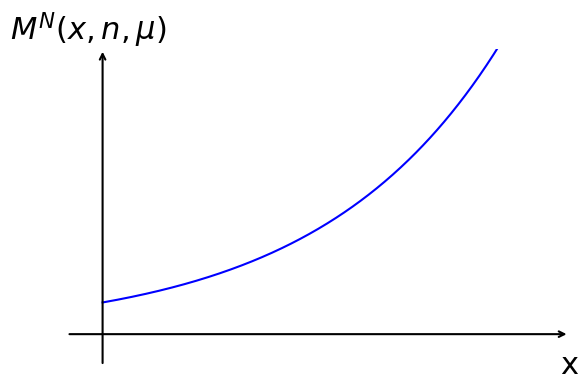}
          \caption{}
    \end{subfigure}
    \caption{Illustration of all possible behaviours of $M^N(\cdot, n, \mu)$ when $K > 0$. 
    (A) $M^N(x, n, \mu)$ converges to $L \in [0, M^N(0, n, \mu)]$ as $x \to \infty$, remaining positive for all $x$; 
    (B) $M^N(x, n, \mu)$ converges to $L \in [-\infty, 0)$ as $x \to \infty$; 
    (C) there exists $\hat{x}$ such that $M^N(\hat{x}, n, \mu) < 0$, and $M^N(x, n, \mu)$ diverges to $+\infty$ as $x \to \infty$; 
    (D) $M^N(x, n, \mu)$ remains positive for all $x$ and diverges to $+\infty$ as $x \to \infty$.}\label{fig1}
\end{figure}

\begin{theorem}[Optimal annuitization region]
\label{maintheorem}
The structure of  $\mathcal{C}_{n,\mu}$ and $\mathcal{S}_{n,\mu}$ is determined as follows:
\begin{itemize}
    \item[\textnormal{(1)}] Let $K > 0$. Then $M^N(0,n,\mu)>0$ and one of the three cases below holds:\smallskip
    
    \begin{itemize}
        \item[(i)] If $I_{n,\mu}=[0,\infty)$, then $\cC_{n,\mu}=[0,\infty)$ and $\cS_{n,\mu}=\varnothing$. In particular, 
        \[
        M^N(x,n,\mu )>0,\quad \forall x\in [0,\infty)\implies I_{n,\mu}=[0,\infty).
        \]
        \item[(ii)] If $\lim_{x\to\infty}M^N(x,n,\mu)=-L$ for some $L\in(0,\infty]$, then $I_{n,\mu}\subsetneq [0,\infty)$ and there is $b=b(n,\mu)\in(0,\infty)$ such that $\cC_{n,\mu}=[0,b)$ and $\cS_{n,\mu}=[b,\infty)$.\smallskip
        
        \item[(iii)] If $I_{n,\mu}\subsetneq [0,\infty)$ and $\lim_{x\to\infty}M^N(x,n,\mu )=+\infty$, then $\cC_{n,\mu}=[0,b_1)\cup(b_2,\infty)$ and $\cS_{n,\mu}=[b_1,b_2]$ for some $0<b_1=b_1(n,\mu)\le b_2(n,\mu)=b_2<\infty$.
    \end{itemize}
    \smallskip
    
    \item[\textnormal{(2)}] Let $K < 0$. Then $M^N(0,n,\mu)<0$ and one of the two cases below holds:\smallskip
    
    \begin{itemize}
        \item[(iv)] If there exists $\hat{x} \in [0,\infty)$ such that $M^N(\hat{x},n,\mu) > 0$, then $\mathcal{C}_{n,\mu} = (b,\infty)$ and $\cS_{n,\mu}=[0,b]$ for some $b=b(n,\mu)<\infty$.\smallskip
        
        \item[(v)] If $M^N(x,n,\mu) \le 0$ for all $x \in [0,\infty)$, then $\mathcal{C}_{n,\mu} = \varnothing$ and $\cS_{n,\mu} = [0,\infty)$.
        In particular,
            \[
        \lim_{x\to\infty}M^N(x,n,\mu )=-\infty\implies M^N(x,n,\mu )\leq0,\, \forall x \in[0,\infty),
            \]
    \end{itemize}
    
    \item[\textnormal{(3)}] Let $K=0$. Then $M^N(x,n,\mu)=L(n,\mu)x$. Moreover, $\{0\}\in\cS_{n,\mu}$ whereas $L(n,\mu)\le 0\implies\cC_{n,\mu}=\varnothing$ and $L(n,\mu)>0\implies \cC_{n,\mu}=(0,\infty)$. 
\end{itemize}
\end{theorem}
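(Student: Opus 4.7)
The plan is to work with the Lagrange formulation \eqref{w}, which reduces the problem to characterising the zero set of $W^N(\cdot,n,\mu)\ge 0$, namely $\cS_{n,\mu}=\{x:W^N(x,n,\mu)=0\}$. Two elementary observations drive the argument. First, $\{M^N(\cdot,n,\mu)>0\}\subset\cC_{n,\mu}$, because the first-exit time from a sufficiently small neighbourhood of any point where $M^N$ is strictly positive yields a strictly positive payoff in \eqref{w}. Second, since $\tau=\infty$ is admissible, $w_N(x,n,\mu)>0$ implies $W^N(x,n,\mu)>0$, so $I_{n,\mu}\subset\cC_{n,\mu}$. Dually, if $M^N(\cdot,n,\mu)\le 0$ on $[0,\infty)$, then every admissible integrand in \eqref{w} is non-positive, forcing $W^N\equiv 0$ and hence $\cS_{n,\mu}=[0,\infty)$. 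These simple inclusions immediately settle items (1)(i), (2)(v) and the two subcases of (3).

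The second step is a sign analysis of $M^N$. From \eqref{Def:M_njumps},
\[
M^N(0,n,\mu)=r_n(\mu)\hat f(n,\mu)K+\lambda_{n+1}\hat V^N(0,n+1;\mu),
\]
and a backward induction on $n\le N$, using Assumption \ref{assnomorejumps} together with the fact that $X^0_t\equiv 0$ in \eqref{VF:recursive}, yields $V^N(0,n+1,z)$ in closed form; the sign of $M^N(0,n,\mu)$ then matches $\mathrm{sgn}(K)$. The asymptotic behaviour of $M^N$ at infinity is encoded in the constant $L(n,\mu)$ from \eqref{eq:Lnmu}, which arises recursively from the asymptotic form of $V^N(x,n+1,z)$ (containing the affine part $\hat f(n+1,z)(x-K)$ plus a decaying contribution from the fundamental solution $\phi$ in \eqref{gammarpm}, whenever the continuation region is asymptotically unbounded). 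Combined with the convexity of $x\mapsto M^N(x,n,\mu)$ established in the proof of Proposition \ref{convexw}, this enumerates the admissible shapes of $M^N$ exactly as in Figure \ref{fig1} when $K>0$, produces the two possible profiles when $K<0$ (single sign change or everywhere non-positive), and the linear form $L(n,\mu)x$ when $K=0$.

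The remaining cases (1)(ii), (1)(iii), (2)(iv) and the $L(n,\mu)>0$ subcase of (3) I would settle via a free-boundary construction. For each, I would posit a candidate continuation set of the claimed shape and construct a classical $C^1$-solution $W$ of the linear ODE $[\cL W-r_n(\mu)W](x)=-M^N(x,n,\mu)$ on the candidate interval as a linear combination of $\psi(\cdot,n,\mu)$, $\phi(\cdot,n,\mu)$ and a particular solution, subject to $W=0$ and the smooth-fit condition $W_x=0$ at each free boundary. Existence and uniqueness of the threshold $b$ in (1)(ii) and (2)(iv), and of the pair $0<b_1\le b_2$ in (1)(iii), reduce to a monotonicity and intermediate-value analysis of the resulting algebraic system, using the sign information on $M^N$ derived above. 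Verification that the candidate coincides with $W^N$ follows by Dynkin's formula applied to
\[
t\mapsto \e^{-r_n(\mu)t}W(X_t)+\int_0^t\e^{-r_n(\mu)s}M^N(X_s,n,\mu)\,\ud s,
\]
where $W$ is extended as $0$ outside the candidate continuation region; this process is a supermartingale under every stopping rule and a martingale up to the candidate stopping time, giving the required optimality.

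The main obstacle will be case (1)(iii), where the continuation region is disconnected and one must both solve a two-sided free-boundary system and rule out a priori more exotic geometries of $\cS_{n,\mu}$. The decisive input is the convexity of $M^N$ together with the representation $\cS_{n,\mu}=\{W^N=0\}$ and the regularity of $W^N$ from Theorem \ref{maintheorem1}, which together force $\cS_{n,\mu}$ to be a single interval. This in turn relies on careful propagation, through the backward induction on $n$, of the structural properties of $\hat V^N(\cdot,n+1;\mu)$ into those of $M^N(\cdot,n,\mu)$.
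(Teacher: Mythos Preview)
Your treatment of (1)(i), (2)(v), and (3) is correct and essentially matches the paper. The divergence is in (1)(ii), (1)(iii), and (2)(iv), where you propose a guess-and-verify construction: solve the free-boundary ODE $[\cL W-r_n(\mu)W]=-M^N$ with smooth fit, then verify. This runs into a genuine obstacle. The inhomogeneous term $M^N(\cdot,n,\mu)$ contains $\hat V^N(\cdot,n+1;\mu)$, which is an average over $q(n,\mu,\cdot)$ of the value functions at the next level and is \emph{not} available in closed form for $n<N$. Without an explicit particular solution, the ``monotonicity and intermediate-value analysis of the resulting algebraic system'' you refer to cannot be carried out; the paper flags precisely this in the introduction as the reason guess-and-verify is abandoned. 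Your description of the asymptotics of $V^N(x,n+1,z)$ as ``the affine part $\hat f(n+1,z)(x-K)$ plus a decaying $\phi$-contribution'' is also inaccurate: the correct asymptotic slope is $V^\infty_{n+1}(z)$ (Proposition \ref{Vnatinftprop}), which in general strictly exceeds $\hat f(n+1,z)$.

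The paper's route avoids the free-boundary construction entirely. The decisive observation is that convexity and nonnegativity of $W^N(\cdot,n,\mu)$ (Proposition \ref{convexw}) already force $\cS_{n,\mu}=\{W^N=0\}$ to be a single closed interval --- this is close to the idea in your final paragraph, though you attribute it to convexity of $M^N$ plus the $C^1$-regularity of Theorem \ref{maintheorem1}, whereas it follows from convexity of $W^N$ alone and the regularity theorem is not needed (indeed, smooth fit is established \emph{after} the classification). This is combined with a boundary analysis at $x=0$ (Proposition \ref{Kposw}): $W^N(0,n,\mu)>0$ when $K>0$, and $W^N\equiv 0$ on a nontrivial interval $[0,b]$ when $K<0$; the latter requires ruling out $\cS_{n,\mu}=\{0\}$, done by a contradiction argument using hitting-time Laplace transforms. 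These two ingredients give the five a-priori shapes of Corollary \ref{cor:CSn}, and selecting among them needs only sign and limit information on $M^N$ and $w_N$. For (1)(ii), convexity plus $M^N\to -L$ forces $M^N$ to be decreasing, hence $W^N$ is decreasing via $X^x_t=xX^1_t$, so $\cS_{n,\mu}$ is an upper half-line; for (1)(iii) and (2)(iv), the inclusion $\{M^N>0\}\subset\cC_{n,\mu}$ together with $\cS_{n,\mu}\neq\varnothing$ immediately selects the correct shape from the corollary. No ODE is ever solved.
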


We briefly comment on the economic interpretation of the above results. More detailed analysis and numerical experiments are presented in the next section. 
The \textit{running reward} $M^N$ represents the instantaneous gain from postponing annuitization (cf. \eqref{w}). When $M^N(X_t,n,\mu)>0$ it is not optimal to annuitize at time $t\ge 0$. In particular, when $K>0$ (i.e., there is an annuitization fee) it may occur that $M^N$ is everywhere positive (see (i)), which means that the annuity is never appealing to the investor. At the opposite end of the spectrum, we find the case with $K<0$ (i.e., there is an annuitization incentive), and it may occur that $M^N$ is negative everywhere. Then, the investor annuitizes immediately, irrespective of the performance of the financial fund (see (v)). In case (ii), the policyholder does not annuitize when the fund's value is too low (i.e., $x<b$) because the annuity rate would be too low in presence of a fee $K>0$. However, for large values of the financial fund, it is convenient to annuitize. This case may occur if, for example, $\hat f(n,\mu)-\beta_n(\mu)$ is sufficiently large and positive (notice that $\theta-\alpha-r_n(\mu)<0$ as consequence of Assumption \ref{mainass}). The numerator of $\beta_n(\mu)$ accounts for the dividend payments and the bequest motives of the investor. It is then normalised by the effective discount rate $r_n(\mu)$ and the extra-return $\theta-\alpha$ on the fund. Then, large values of $\hat f(n,\mu)-\beta_n(\mu)$ can be interpreted as saying that the money's worth of the annuity is significantly more appealing than the financial returns and bequest motives adjusted by the discounting $r_n(\mu)+\alpha-\theta$.
Instead, if $\hat f(n,\mu)-\beta_n(\mu)$  is positive but small, so that $I_{n,\mu} \subsetneq [0,\infty)$, it may occur that $M^N(x,n,\mu)$ becomes arbitrarily large for large values of the financial fund, thanks to the asymptotic growth of $\hat{V}^N(x,n+1;\mu)$. This is the situation in case (iii): the optimal annuitization region is a bounded interval $[b_1,b_2]$ because a positive financial performance of the fund removes the incentive to annuitize (i.e., the money's worth does not compensate for the loss of financial gains).  
In cases (iv) and (v), we observe the effect of annuitization incentives (i.e., $K<0$). In particular, when the financial performance is poor the annuity becomes the policyholder's preferred option. Instead, when the financial performance is good there may be an incentive to postpone annuitization if, for example, the money's worth is lower than the index $\beta_n(\mu)$ (i.e., $\hat f(n,\mu)<\beta_n(\mu)$). 

Before proceeding with the theoretical analysis, we present a numerical illustration to provide a concrete application of our results and to highlight the underlying economic mechanisms at play.

\section{Numerical analysis of annuitization decisions}
\label{numericalanalysis}
We perform a numerical study of the life-time annuity in a particular case, with the aim of facilitating an economic interpretation of our theoretical results. We consider an individual who experiences a single health shock at a random time $\xi$ (in this context $\xi=\tau_1$). The individual starts with a baseline mortality force $\mu_0$. After the shock, either the mortality force remains unchanged, with probability $p=0.2$, or it increases to $2\mu_0$ with probability $(1-p)$.

We begin by calibrating the baseline mortality force $\mu_0$. According to the Human Mortality Database \cite{hmd2022}, a 60-year-old Italian male in 2010 had a remaining life expectancy of 22.41 years. We choose $\mu_0$ so that, if the mortality force were constant throughout the individual's remaining lifetime, then the expected lifetime $\tau^0_d$ would match this value. That is, we solve for $\mu_0$ the following equation
\[
\E[\tau^0_d] =\int_0^\infty \P(\tau^0_d>t)\ud t= \int_0^\infty \e^{-\mu_0 t} \ud t = 22.41.
\]
That yields $\mu_0 \approx 0.044623$.
We set the shock intensity to $\lambda_1=0.1$, which means that the health shock is expected to occur on average after 10 years. With one million simulations of the path of $(\mu_t)_{t\ge 0}$, we estimate that the individual's life expectancy in presence of a shock is approximately $\E[\tau_d]\approx16.2162$ years.
The objective mortality force $\hat \mu$ is chosen so that both the individual and the insurer agree on this life expectancy, implying $\hat \mu \approx 0.061667$.
For financial market parameters, we estimate $\theta$ and $\sigma$ using monthly S\&P 500 data from 1980 to 2025, obtaining $\theta \approx 0.087858$ and $\sigma \approx 0.1529518$. From the same period, we used 3-month T-bill data to estimate the subjective discount rate $\rho \approx 0.040400$. The insurer offers a competitive annuity interest rate $\hat{\rho}$, set to $1.5 \rho$ (i.e., $\hat{\rho} \approx 0.060600$), making the annuity investment financially appealing. We choose $\alpha$ to be 70\% of $\theta$, giving $\alpha \approx 0.061500$. Finally, we set $\nu=0.35$ and $K=1500$ USD to enhance data visualisation. 
Table \ref{tab:parameters} summarises the key parameters used in the numerical implementation of the model.
\begin{table}[H]
\centering
\begin{tabular}{cc}
\begin{tabular}{lc}
\toprule
\multicolumn{2}{c}{\textbf{Health Shock Parameters}} \\
\midrule
$\mu_0$            & $0.044623$ \\
$\lambda_1$    & $0.1$ \\
$p$              & $0.2$ \\
\midrule
\multicolumn{2}{c}{\textbf{Annuity Pricing Parameters}} \\
\midrule
$\hat{\mu}$      & $0.061667$ \\
$\hat{\rho}$     & $0.060600$ \\
$K$              & $1500$ \\
\bottomrule
\end{tabular} &

\begin{tabular}{lc}
\toprule
\multicolumn{2}{c}{\textbf{Financial Parameters}} \\
\midrule
$\theta$         & $0.087858$ \\
$\sigma$         & $0.152952$ \\
$\alpha$         & $0.061500$ \\
\midrule
\multicolumn{2}{c}{\textbf{Individual Preferences}} \\
\midrule
$\rho$           & $0.040400$ \\
$\nu$            & $0.35$ \\
\\
\bottomrule
\end{tabular} \\
\end{tabular}
\caption{Summary of Model Parameters}
\label{tab:parameters}
\end{table}

With the calibrated model, we now turn to determining the optimal annuitization strategy.
The function $M^1(x,0,\mu)$ (cf. \eqref{Def:M_njumps}), which is crucial in characterising the geometry of the continuation/stopping regions, depends on the post-shock value function $\hat V^1(x,1,\mu)$, which, using \eqref{Eq:hatV}, takes the form:
\[
\hat V^1(x,1,\mu_0) = p V^1(x,1,\mu_0) + (1 - p) V^1(x,1,2\mu_0).
\]
The function $V^1$ is the value function of the optimal annuitization problem with constant mortality force. This problem is solved in Section \ref{ConstantForceOfMortality}. The explicit expression for $V^1$ with $K>0$ can be obtained by solving \eqref{eq:ODE} with $\cC_{1,\mu}=(0,x_1^*(\mu))$, using $V^1(x)<L(1+x)$ from Proposition \ref{vofinitinessl} and imposing continuous-fit and smooth-fit.
It is easy to verify that (cf.\ \cite[Ch.\ 3.4, Eqs.\ (3.146)--(3.148), (3.149)]{mythesis} for details)
\begin{align} \label{eq:valueex}
V^1(x,1,\mu) = 
\begin{cases}  
\begin{aligned}
&\ind_{\{x < x_{1}^{*}(\mu)\}} \big( \beta_1(\mu) x + \zeta_1^*(\mu) x^{\gamma_1^+(\mu)} \big) \\
&+ \ind_{\{x \ge x_{1}^{*}(\mu)}\} \hat f (1,\mu) (x - K)
\end{aligned}  
& \text{if } \hat{f}(1,\mu) > \beta_1(\mu), \\
\beta_1(\mu) x & \text{if } \hat{f}(1,\mu) \leq \beta_1(\mu),
\end{cases}
\end{align}
with 
\[
x^*_1(\mu) = \frac{\hat f (1,\mu) K \gamma_1^+(\mu)}{(\gamma_1^+(\mu) - 1)(\hat f (1,\mu) - \beta_1(\mu))},
\]
and
\[
\zeta_1^*(\mu) = \Big( \frac{\hat f (1,\mu) K}{\gamma_1^+(\mu) - 1} \Big)^{1 - \gamma_1^+(\mu)} \Big( \frac{\hat f (1,\mu) - \beta_1(\mu)}{\gamma_1^+(\mu)} \Big)^{\gamma_1^+(\mu)},
\]
where explicit formulae for $\hat f (1,\mu)$, $\beta_1(\mu)$ and $\gamma_1^+(\mu)$ are given by \eqref{delta1}, \eqref{betanojump} and \eqref{gammarpm}, respectively. 
After the health shock, the individual's stopping region depends on the realised post-shock mortality. If the mortality remains at the baseline level $\mu_0$, the stopping region is given by $\cS_{1,\mu_0} = [x^*_1(\mu_0), \infty)$, where $x^*_1(\mu_0) = 32772.84$ USD. Alternatively, if the mortality is $2\mu_0$, the stopping region becomes $\cS_{1,2\mu_0} = [x^*_1(2\mu_0), \infty)$, with $x^*_1(2\mu_0) = 49028.47$ USD.

We next study the optimal annuitization region {\em before} the change in mortality force. It is straightforward to verify numerically that  $ \lim_{x\to\infty}M^1(x,0,\mu_0 )=-\infty$. 
Therefore, by Theorem \ref{maintheorem}, the optimal stopping region is of the form $[b,\infty)$, for some $b>0$. To find explicitly $b$ and the value function, we rely on the geometric approach to optimal stopping problems for one-dimensional diffusions illustrated by Dayanik and Karatzas in \cite{dayanik2003optimal} (cf.\ also It\^o and McKean \cite{itodiffusion} for earlier work). 
First, we introduce the so-called \textit{Mayer reformulation} of the problem. Using the Markovian structure of $X$, it is easy to see that 
\begin{align}
W^1(x,0,\mu_0)=w_1(x,0,\mu_0)+\sup_{\tau\in\cT(\bF^B)} \E[-\e^{-r_0(\mu_0)\tau}w_1(X_\tau^x,0,\mu_0)],
\end{align}
with $w_1(x,0,\mu_0)$ defined in \eqref{wninfty}. We compute the function $w_1(x,0,\mu_0)$ using the log-normal distribution of $X$. 
Recalling the fundamental increasing and decreasing solutions, $\psi$ and $\phi$ of $(\cL-r_0(\mu_0))u=0$ on $(0,\infty)$, we define the transformation
\begin{align}\label{F}
y=F(x,0,\mu_0)=\frac{\psi(x,0,\mu_0)}{\phi(x,0,\mu_0)}= x^{\gamma_0^+(\mu_0)-\gamma_0^-(\mu_0)},
\end{align}
with inverse $x=F^{-1}(y,0,\mu_0)=y^{\frac{1}{\gamma_0^+(\mu_0)-\gamma_0^-(\mu_0)}}$. We
set
\begin{equation} \hat{w}_1(y,0,\mu_0)\coloneqq 
\begin{cases}0 &\textrm{if} \ y=0, \\
\big(\frac{w_1}{\phi} \ \circ F^{-1} \big) (y,0,\mu_0)&\textrm{if} \ y>0.
\end{cases}
\label{cinv1}
\end{equation}
The problem of calculating the value function $V^1(\cdot,0,\mu_0)=W^1(\cdot,0,\mu_0)+\hat f(0,\mu_0)(\cdot-K)$ reduces to finding the smallest nonnegative concave function $U(\cdot,0,\mu_0)$ that dominates $\hat{w}_1(\cdot,0,\mu_0)$. As shown in Figure~\ref{fig:Qandhatw}, the function $\hat{w}_1$ is initially convex and negative, then it becomes concave (and increasing). The function $U$ is linear and increasing between $y=0$ and $y_*\approx 5\times 10^{34}$, which is the tangency point with $\hat w_1$. For $y\ge y_*$ the function $U$ coincides with $\hat w_1$. 
Transforming back into the original coordinates we find $b_*=F^{-1}(y_*,0,\mu_0) \approx 20383.66$.
Then 
\[
W^1(x,0,\mu_0) = w_1(x,0,\mu_0) + \phi(x,0,\mu_0)\, U(F(x,0,\mu_0),0,\mu_0),
\]
and $V^1(x,0,\mu_0)=W^1(x,0,\mu_0)+\hat f(0,\mu_0)(x-K)$.
Figure~\ref{fig:W} shows $x \mapsto V^1(x,0,\mu_0)$, with the optimal threshold $b_*$ indicated by a grey point.

\begin{figure}[h!]
    \centering
    \begin{subfigure}[t]{0.49\textwidth}
        \centering
        \includegraphics[width=0.79\textwidth]{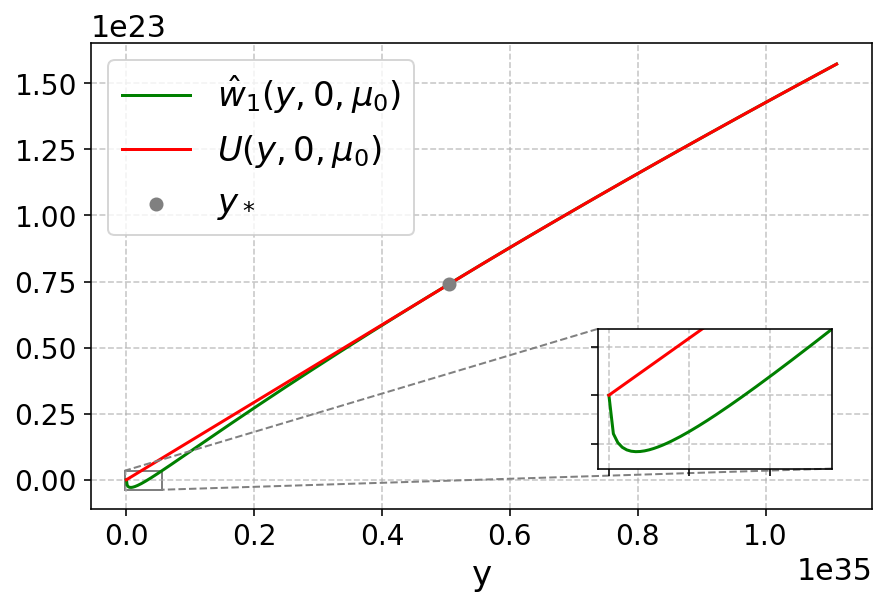}
        \caption{}
        \label{fig:Qandhatw}
    \end{subfigure}
    \hfill
    \begin{subfigure}[t]{0.49\textwidth}
        \centering
        \includegraphics[width=0.81\textwidth]{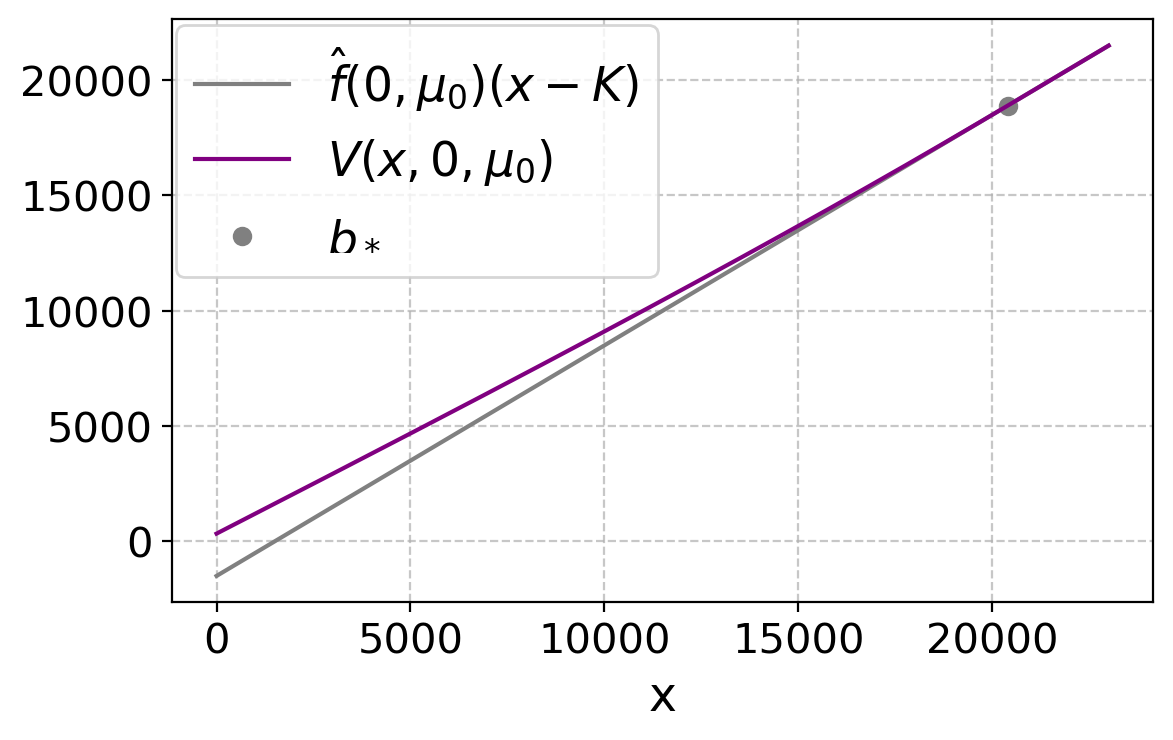}
        \caption{}
        \label{fig:W}
    \end{subfigure}
    \caption{(A) The smallest nonnegative concave majorant $U(\cdot,0,\mu_0)$ of $\hat{w}_1(\cdot,0,\mu_0)$. (B) The value function $x \mapsto V^1(x,0,\mu_0)$, with the optimal boundary $b_* \approx 20383.66$ marked by a grey point.}
\end{figure}

When the wealth is below the critical threshold $b_*=20383.66$ USD, the individual prefers to continue investing in the financial market rather than annuitizing. In this region, annuitization is relatively unattractive: the fixed acquisition fee $K=1500$ USD represents a substantial fraction of the individual's wealth, leading to low annuity payments and a loss of investment flexibility. Moreover, the opportunity to earn dividends and the potential for a bequest in the event of premature death provide additional incentives to delay annuitization. However, as the wealth increases, the relative burden of the fee $K$ decreases, and the guaranteed income from the annuity becomes increasingly attractive compared to uncertain investment returns. It becomes optimal for the individual to annuitize once their wealth exceeds the threshold $b_*$. When the wealth is equal to $b_*$, annuitization yields a constant annual income of approximately 2308.84 USD for life.
 
\begin{figure}
    \centering
    \includegraphics[width=0.8\linewidth]{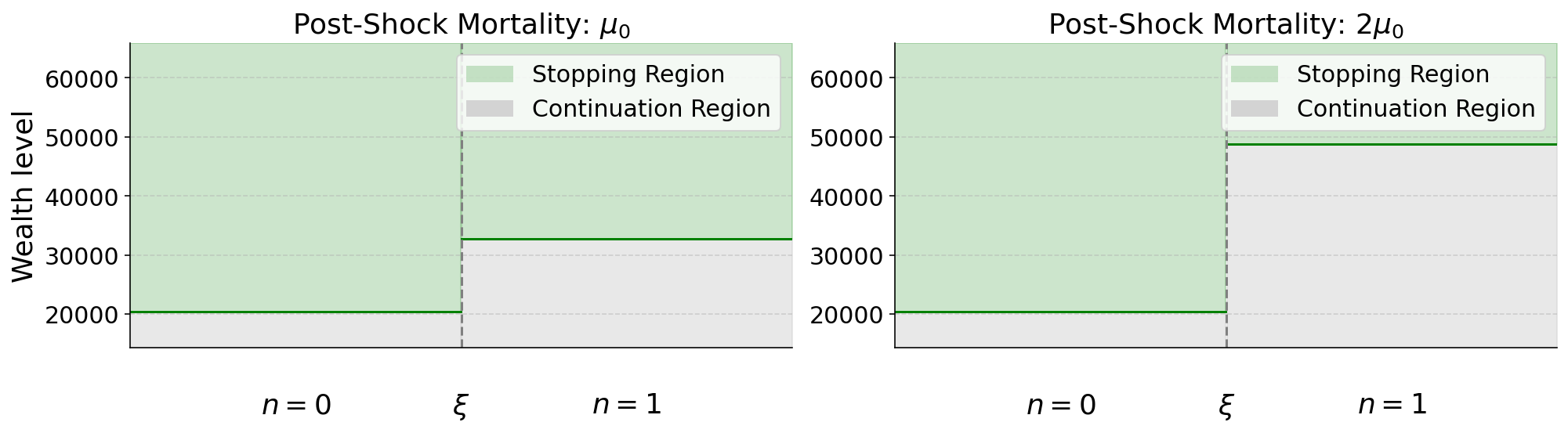}
    \caption{Optimal annuitization threshold before and after a health shock occurring at time $\xi$. State $n=0$ corresponds to the pre-shock phase, and $n=1$ to the post-shock phase. The two panels compare scenarios in which the post-shock mortality remains unchanged (i.e., $\bar \mu_1(\omega)=\mu_0$) and the one in which it increases to $\bar \mu_1(\omega)=2\mu_0$.}
    \label{fig:comparison thresholds}
\end{figure}

Figure \ref{fig:comparison thresholds} illustrates how the optimal annuitization threshold adjusts in response to a health shock. On the horizontal axis, we represent the pre-shock period ($n=0$) and the post-shock period ($n=1$), with the vertical dashed line marking the time of the shock $\xi$. On the vertical axis, we represent the individual’s wealth level.
\begin{itemize}
    \item Left panel: If the health shock does not affect mortality (i.e., the mortality force remains at the baseline $\mu_0$), the optimal annuitization threshold increases from 20383.66 USD before the shock to 32772.84 USD afterwards. 
    The increase is primarily due to the fact that an individual who has not annuitized by time $\xi$ has less time to benefit from the annuity payments. This reduces the attractiveness of annuitization after the shock and raises the optimal threshold.
    
    \item Right panel: If the shock leads to a doubling of the mortality rate (i.e., $\bar \mu_1(\omega) = 2\mu_0$), the threshold increases from 20383.66 USD to 49028.47 USD. In this case, the individual faces reduced life expectancy and therefore places less value on future annuity payments. As a result, they demand even higher wealth before annuitizing.
\end{itemize}

\begin{figure}[h!]
    \centering
    \begin{subfigure}[t]{0.49\textwidth}
        \centering
        \includegraphics[width=0.8\textwidth]{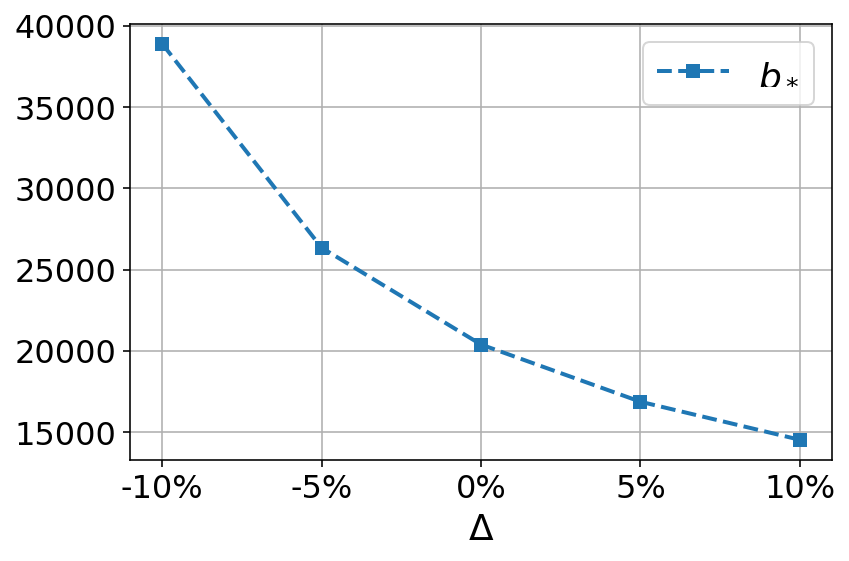}
        \caption{}
        \label{fig:b_vs_hatmu}
    \end{subfigure}
    \hfill
    \begin{subfigure}[t]{0.49\textwidth}
        \centering
        \includegraphics[width=0.8\textwidth]{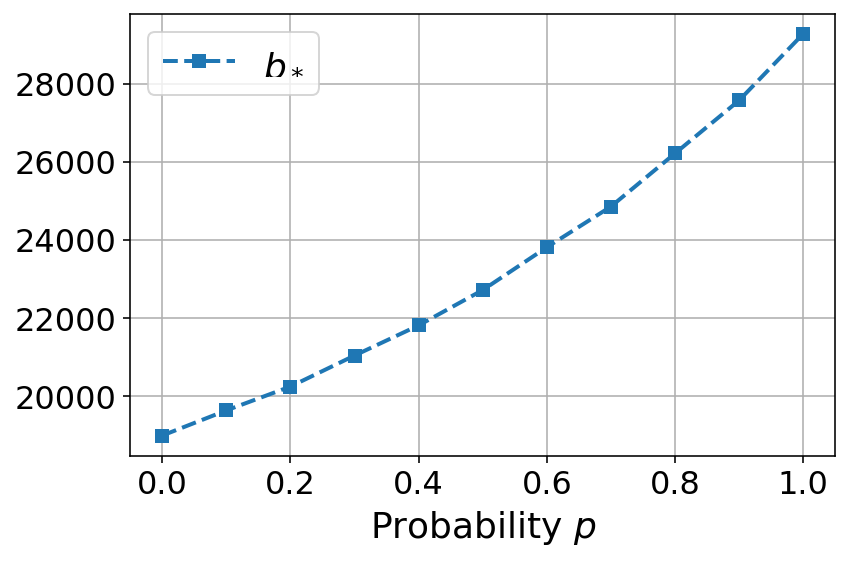}
        \caption{}
        \label{fig:b_vs_pfhat}
    \end{subfigure}

    \label{fig:sensitivity_b}
    
    \begin{subfigure}[t]{0.49\textwidth}
        \centering
        \includegraphics[width=0.8\textwidth]{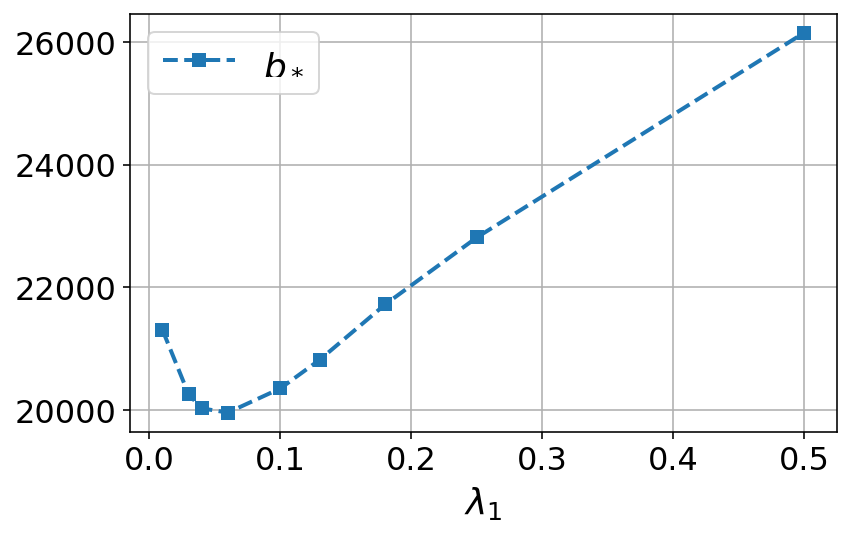}
        \caption{}
        \label{fig:b_vs_lambda}
    \end{subfigure}
    \caption{Sensitivity of the optimal annuitization threshold $b_*$ to mortality model parameters: (A) with respect to percentage change ($\Delta$) from $\hat{\mu}$; (B) with respect to $p$; (C) with respect to $\lambda_1$.}
    \label{fig:sensitivity_b}
\end{figure}

We now examine how the optimal annuitization threshold $b_*$ responds to changes in key model parameters.
Since the stopping region is $[b_*,\infty)$, an increase in $b_*$ implies a smaller stopping region, meaning the individual becomes less likely to annuitize. Figure \ref{fig:b_vs_hatmu} illustrates the relationship between $b_*$ and the objective mortality force used by the insurance company to price the annuity. Starting from the baseline (objective mortality) $\hat\mu$ which assigns $\hat f(0,\mu_0)=1$, we consider perturbations of the form $\hat{\mu}(1+\Delta)$, where $\Delta$ represents the percentage change from $\hat{\mu}$.
For $\Delta>0$ the insurer assigns a higher probability of death to the individual. 
The money’s worth of the annuity increases, making annuitization more attractive. As a result, the optimal threshold $b_*$ decreases compared to the baseline case with $\hat \mu$. Vice versa, for $\Delta<0$ the insurer evaluates a longer individual's lifetime compared to the baseline case. As a result, the money’s worth of the annuity decreases, and the annuitization becomes less appealing.

Figures~\ref{fig:b_vs_pfhat} and \ref{fig:b_vs_lambda} show the relationship between the optimal annuitization threshold $b_*$ and, respectively, the probability $p$ and the shock intensity $\lambda_1$. A higher value of $p$ implies that the individual is more likely to retain the baseline mortality rate $\mu_0$ rather than jumping to $2\mu_0$ after a health shock. The larger $\lambda_1$ the earlier the shock is expected to occur --- recall $\E[\tau_1]=1/\lambda_1$. In both cases, the annuitization decision is affected in two opposing ways. On the one hand, a higher $p$ or a lower $\lambda_1$ increases the subjective life expectancy, which reduces the urgency to annuitize. On the other hand, since $\hat\mu$ is fixed, the money’s worth of the annuity increases, encouraging earlier annuitization.
Figure~\ref{fig:b_vs_pfhat} shows that the threshold $b_*$ increases with $p$, indicating that the incentive to delay annuitization is the dominant effect. 
Figure~\ref{fig:b_vs_lambda} shows a U-shaped form of $b_*$ as a function of $\lambda_1$: the line decreases for very low values of $\lambda_1$ and then it increases. 
For small values of $\lambda_1$, even a slight increase in $\lambda_1$ significantly increases the probability of occurrence of a health shock, which in turn lowers life expectancy. Since the individual still has sufficient time to benefit from annuity payments, the urgency to annuitize becomes the dominant effect. Consequently, we observe a decrease in the annuitization threshold $b^*$ as $\lambda_1$ increases but it is small.
However, for sufficiently high values of $\lambda_1$, life expectancy is substantially reduced, and annuity payments are received over a shorter period. As a result, annuitization becomes less attractive to the individual and $b^*$ increases.

\begin{figure}[h!]
    \centering
    \begin{subfigure}[t]{0.48\textwidth}
        \centering
        \includegraphics[width=0.79\linewidth]{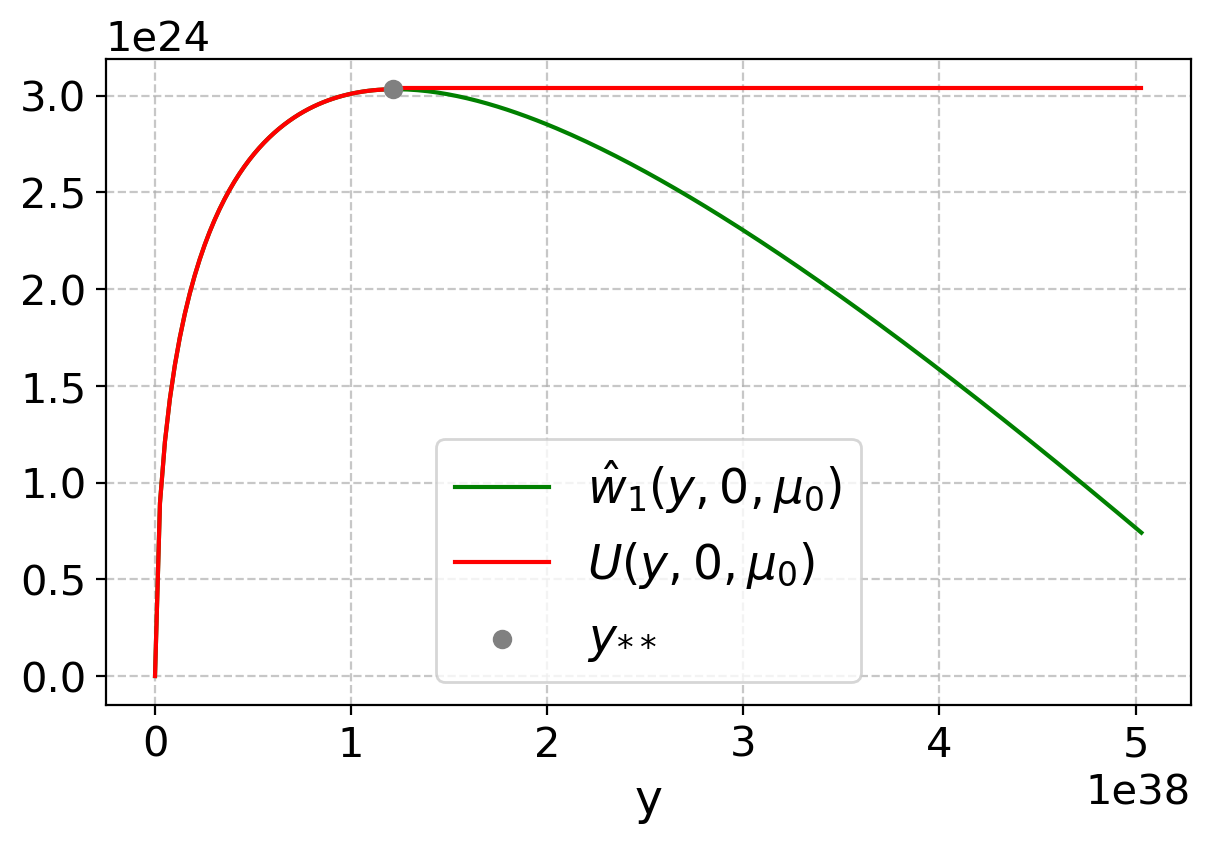}
        \caption{}
        \label{fig:UhatwK}
    \end{subfigure}
    \hfill
    \begin{subfigure}[t]{0.48\textwidth}
        \centering
        \includegraphics[width=0.81\linewidth]{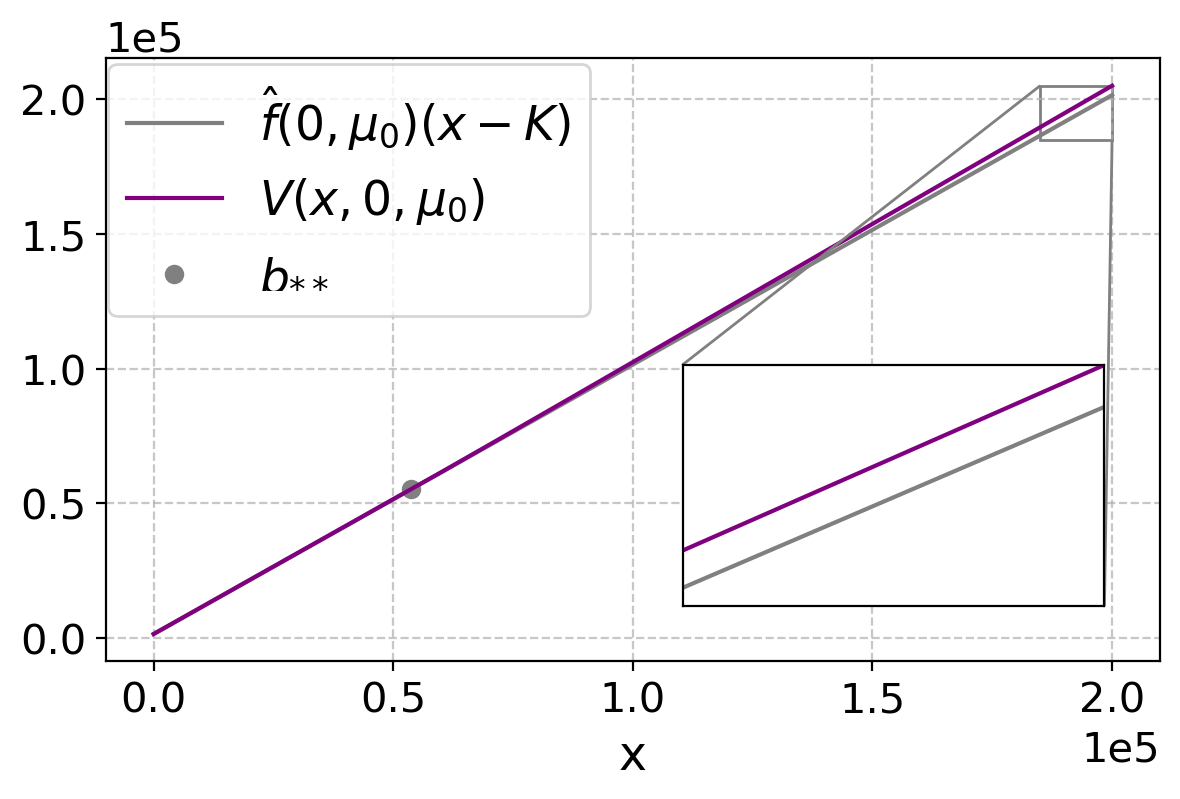}
        \caption{}
        \label{fig:value_func}
    \end{subfigure}
    
    \vspace{0.5cm}
    
    \begin{subfigure}[t]{\textwidth}
        \centering
        \includegraphics[width=0.8\linewidth]{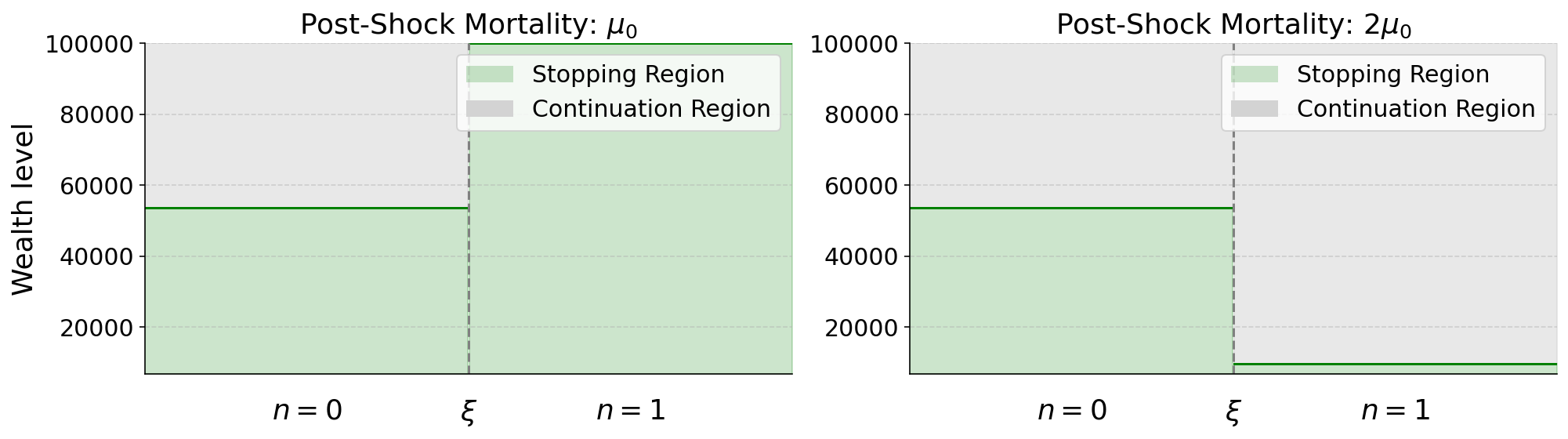}
        \caption{}
        \label{fig:thresholds}
    \end{subfigure}
    
    \caption{
        (A) The smallest non-negative concave majorant $U(\cdot,0,\mu_0)$ of $\hat{w}_1(\cdot,0,\mu_0)$. 
        (B) The value function $x \mapsto V^1(x,0,\mu_0)$, with the optimal boundary $b_{**} \approx 53639.08$ marked by a grey point. 
        (C) Optimal annuitization threshold before and after a health shock occurring at time $\xi$. State $n=0$ corresponds to the pre-shock phase, and $n=1$ to the post-shock phase. The two panels compare scenarios where post-shock mortality remains unchanged ($\bar{\mu}_1(\omega)=\mu_0$) versus when it increases to $\bar{\mu}_1(\omega)=2\mu_0$.
    }
    \label{fig:combined_figures}
\end{figure}

We now briefly discuss the scenario $K < 0$. All parameters remain unchanged from the case $K > 0$, except for $\nu = 0.6$ and $K = -1500$ USD, which we adjust to enhance data visualization. The explicit expression for $V^1(x,1,\mu)$ when $K < 0$ is given by:
\begin{align*}  
V^1(x,1,\mu) = 
\begin{cases}  
\begin{aligned}
&\ind_{\{x \leq x_{1}^{**}(\mu)\}} \hat{f}(1,\mu)(x - K) \\
&+ \ind_{\{x > x_{1}^{**}(\mu)\}} \big(\beta_1(\mu) x + \zeta_1^{**}(\mu) x^{\gamma_1^-(\mu)}\big)
\end{aligned}  
& \text{if } \hat{f}(1,\mu) < \beta_1(\mu), \\
\hat{f}(1,\mu)(x - K) & \text{if } \hat{f}(1,\mu) \geq \beta_1(\mu),
\end{cases}
\end{align*}
where  
\begin{align*}
x_{1}^{**}(\mu) = \frac{\hat{f}(1,\mu) K \gamma_1^-(\mu)}{(\gamma_1^-(\mu)-1)(\hat{f}(1,\mu) - \beta_1(\mu))}, 
\end{align*}
and \begin{align*}
\zeta_1^{**}(\mu) = \Big(\frac{\hat{f}(1,\mu) K}{\gamma_1^-(\mu)-1}\Big)^{1-\gamma_1^-(\mu)} \Big(\frac{\hat{f}(1,\mu) - \beta_1(\mu)}{\gamma_1^-(\mu)}\Big)^{\gamma_1^-(\mu)},
\end{align*}
(cf.\ \cite[Ch.\ 3.4, Eqs.\ (3.150),\ (3.155)--(3.158)]{mythesis} for details).

It is easy to numerically verify that $\exists \ \hat{x}\in[0,\infty)$ such that $M^N(\hat{x},0,\mu_0 )>0$.
By Theorem \ref{maintheorem}, the optimal stopping region takes the form $[0, b]$ for some $b > 0$.  
Applying the geometric method (as in the $K > 0$ case), we derive the smallest non-negative concave majorant $U(\cdot, 0, \mu_0)$ of $\hat{w}_1(\cdot, 0, \mu_0)$, illustrated in Figure~\ref{fig:UhatwK}. The corresponding value function is plotted in Figure~\ref{fig:value_func}, with the optimal annuitization threshold $b_{**} \approx 53639.08$ USD.

Figure \ref{fig:thresholds} illustrates how the optimal annuitization threshold adjusts in response to a health shock. The horizontal axis shows the pre-shock ($n=0$) and post-shock ($n=1$) periods, with a vertical dashed line at $\xi$. On the vertical axis, we represent the individual’s wealth level. In the pre-shock period ($n=0$), the individual purchases the annuity when wealth lies in $[0, 53639.08]$. In $n=1$, the decision depends on the realised post-shock mortality. If mortality remains at the baseline level $\mu_0$, the stopping region is $\cS_{1,\mu_0} = [0, \infty)$, indicating immediate annuitization is optimal for any wealth level. If mortality increases to $2\mu_0$, the stopping region becomes $\cS_{1,2\mu_0} = [0, x^{**}_1(2\mu_0)]$, where $x^{**}_1(2\mu_0) = 9673.02$ USD.
This reveals a drastic shift in annuitization behaviour: a change in mortality state can lead to radically different strategies.

\section{Optimal Annuitization with Constant Mortality Force}
\label{ConstantForceOfMortality}
In this section, we begin the theoretical analysis by studying properties of $V^N(x, N, \mu)$. Later, in Section \ref{OptimalAnnuitizationwithJumps}, we extend those results to  $V^N(x,n, \mu)$ with $n<N$, using induction.
For $n=N$ the mortality force remains constant at $\mu\in[\mu_m,\mu_M]$. The subjective survival probability simplifies to $_z p_{\eta+t}= \e^{-\mu z}$ which leads to the following expression derived from \eqref{fhatcmu},
\begin{align}\label{delta1}
\begin{aligned}
\hat{f}(N,\mu)= \frac{\hat{\rho}+\hat{\mu}}{r_N(\mu)},\quad\text{with}\ r_N(\mu)=\rho+\mu.
\end{aligned}
\end{align} 

First we show that Assumption \ref{mainass} is necessary and sufficient for the well-posedness of the value function $V^N(x,N,\mu)$ for all $(x,\mu)\in(0,\infty)\times[\mu_m,\mu_M]$. 
\begin{proposition} \label{vofinitinessl} 
Set $\hat \theta=\theta-\alpha-\rho$. The following statements hold:
\begin{itemize}
\item[(i)] If $\hat \theta -\mu_m < 0$, there is $L=L(\mu_m)>0$, independent of $(x,N,\mu)$, such that $0\le V^N(x,N,\mu)\le L(1+x)$ for all $x\in(0,\infty)$.
\item[(ii)] If $\hat \theta-\mu_m\ge 0$, we have 
$V^N(x,N,\mu)=+\infty$,
for $x>0$ and 
$\mu\in[\mu_m,\hat \theta\wedge\mu_M]$.
\end{itemize}
\end{proposition}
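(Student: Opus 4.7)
The plan is to exploit the fact that Assumption \ref{assnomorejumps} gives $\lambda_{N+1}=0$, so the recursive formulation \eqref{VF:recursive} at $n=N$ collapses into a standard one-dimensional optimal stopping problem, with constant discount rate $r_N(\mu)=\rho+\mu$ and no nested continuation term:
\[
V^N(x,N,\mu)=\sup_{\tau\in\cT(\bF^B)}\E_x\Big[\e^{-r_N(\mu)\tau}\hat f(N,\mu)(X_\tau-K)+\int_0^\tau \e^{-r_N(\mu)t}(\alpha+\nu\mu)X_t\ud t\Big].
\]
In particular the problem at $n=N$ does not genuinely involve $N$, so any bound I obtain will be trivially independent of $N$. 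Non-negativity $V^N(x,N,\mu)\ge 0$ is then immediate by choosing $\tau=\infty$, which kills the terminal payoff and leaves a non-negative integrand.

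For the upper bound in (i), I would split the supremum via subadditivity into a running and a terminal contribution. The running supremum is dominated by its value at $\tau=\infty$; Tonelli together with $\E_x[X_t]=x\e^{(\theta-\alpha)t}$ yields the closed form $(\alpha+\nu\mu)x/(r_N(\mu)+\alpha-\theta)$, which is finite because Assumption \ref{mainass} forces $r_N(\mu)+\alpha-\theta\ge \mu_m-\hat\theta>0$. For the terminal contribution, the key ingredient is the supermartingale property of the positive local martingale $(\e^{-(\theta-\alpha)t}X_t)_{t\ge 0}$ (in fact a true martingale, being a Dol\'eans exponential). Since Assumption \ref{mainass} gives the pointwise bound $\e^{-r_N(\mu)t}X_t\le \e^{-(\theta-\alpha)t}X_t$, Fatou's lemma applied to the localisations $\tau\wedge n$ yields
\[
\E_x[\e^{-r_N(\mu)\tau}X_\tau]\le \E_x[\e^{-(\theta-\alpha)\tau}X_\tau]\le x,\qquad \forall\,\tau\in\cT(\bF^B),
\]
so that $\E_x[\e^{-r_N(\mu)\tau}\hat f(N,\mu)(X_\tau-K)]\le \hat f(N,\mu)(x+|K|)$. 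Combining the two estimates and noting that $(\alpha+\nu\mu)/(r_N(\mu)+\alpha-\theta)$ and $\hat f(N,\mu)=(\hat\rho+\hat\mu)/(\rho+\mu)$ are uniformly bounded for $\mu\in[\mu_m,\mu_M]$, I obtain a constant $L=L(\mu_m)>0$, independent of $(x,N,\mu)$, such that $V^N(x,N,\mu)\le L(1+x)$.

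For part (ii), I would test the suboptimal deterministic family $\tau=T$ for $T>0$. A direct computation, again using $\E_x[X_t]=x\e^{(\theta-\alpha)t}$, produces running and terminal contributions
\[
(\alpha+\nu\mu)x\int_0^T \e^{(\hat\theta-\mu)t}\ud t\qquad\text{and}\qquad \hat f(N,\mu)\bigl(x\e^{(\hat\theta-\mu)T}-K\e^{-r_N(\mu)T}\bigr).
\]
Under the hypothesis $\mu\in[\mu_m,\hat\theta\wedge\mu_M]$ the exponent $\hat\theta-\mu$ is non-negative, so (noting that $\mu\ge\mu_m>0$ ensures $\alpha+\nu\mu>0$ in the non-degenerate regime) the running contribution diverges to $+\infty$ as $T\to\infty$, while the terminal contribution remains bounded below by $-\hat f(N,\mu)|K|$. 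Letting $T\to\infty$ therefore forces $V^N(x,N,\mu)=+\infty$.

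The main technical point is the upper bound in (i): one has to control the terminal reward uniformly over \emph{all} stopping times without knowing a priori that $V^N$ is finite. Invoking the positive supermartingale $(\e^{-(\theta-\alpha)t}X_t)$ together with the domination $\e^{-r_N(\mu)t}\le \e^{-(\theta-\alpha)t}$ guaranteed by Assumption \ref{mainass} is the cleanest route, since it bypasses any verification argument and requires nothing more than Fatou's lemma on the localised sequence.
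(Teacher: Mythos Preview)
Your proposal is correct and follows essentially the same route as the paper: the supermartingale bound $\E_x[\e^{-r_N(\mu)\tau}X_\tau]\le x$ is the key to the upper bound in (i), and testing deterministic times $\tau=T$ yields (ii). The only cosmetic differences are that the paper handles the $-K$ term via an integration-by-parts trick rather than your subadditivity split, and in (ii) the paper also tracks the growth of the terminal term $\hat f(N,\mu)x\e^{(\hat\theta-\mu)T}$, which covers the case $\mu<\hat\theta$ even in the degenerate regime $\alpha+\nu\mu=0$ that you flag.
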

\begin{proof}
Let us start by proving $(i)$. It is clear that taking $\tau=T$ for any deterministic $T>0$, yields the lower bound
$V^N(x,N,\mu)\ge -\e^{-(\rho+\mu)T}\hat f(N,\mu) K$.
Then, letting $T\to\infty$ we obtain the lower bound $V^N(x,N,\mu)\ge 0$. For the upper bound, notice that $t\mapsto \e^{-(\rho+\mu) t}X_t$ is a \textit{non-negative} supermartingale and, therefore,  
\begin{align}
\begin{aligned}
\label{boundsupermg} \sup_\tau \E_x \big[ \e^{-(\rho+\mu) \tau} X_\tau \big]\leq x 
\end{aligned}
\end{align}
(cf.\ \cite[Ch.\ 1.3, Problem 3.16, p. 18]{karatzas2014brownian} and \cite[Ch.\ 1.3, Thm. 3.22, p. 19]{karatzas2014brownian}). From \eqref{VF:recursive} and \eqref{boundsupermg}, we find
\begin{align} \label{upperboundvaluefunct0}
    \begin{aligned} 
        V^N(x,N,\mu)
        &\leq \sup_{\tau\in\cT(\bF^B)} \E_x \Big[ \int_{0}^{\tau} \e^{-(\rho+\mu)t} ( \alpha+\nu \mu ) X_t \ud t - \e^{-(\rho+\mu)\tau} \hat f(N,\mu) K \Big]+ \hat f(N,\mu) x\\
        &=\sup_{\tau\in\cT(\bF^B)} \E_x \Big[ \int_{0}^{\tau} \e^{-(\rho+\mu)t} \big[( \alpha+\nu \mu ) X_t+(\rho+\mu)\hat f(N,\mu) K\big] \ud t\Big] + \hat f(N,\mu) (x-K),
    \end{aligned}
\end{align}
where the final expression is obtained by integration by parts. 
Using the explicit form of $X_t$ and continuing with the inequalities, we obtain
\begin{align*}
V^N(x,N,\mu)&\le \hat f(N,\mu)(x+|K|)+\int_0^\infty\e^{(\hat \theta-\mu)t}(\alpha+\nu\mu)x\E\Big[\e^{\sigma B_t-\frac12\sigma^2t}\Big]\ud t\\
&\quad+\int_0^\infty\e^{-(\rho+\mu)t}(\rho+\mu)\hat f(N,\mu) |K|\ud t= \hat f(N,\mu)(x+2|K|)+\beta_N(\mu) x,
\end{align*}
where in the final line we used 
$\hat \theta-\mu\le \hat \theta-\mu_m<0$.
Then $(i)$ holds because $\mu\in[\mu_m,\mu_M]$.

For the proof of $(ii)$, it is enough to observe that, for any deterministic time $T>0$, we have
\begin{align*}
V^N(x,N,\mu)&\ge \int_{0}^{T}  \e^{(\hat \theta-\mu)t} ( \alpha+\nu \mu ) x\E\big[\e^{\sigma B_t-\frac12\sigma^2 t}\big] \ud t\\
&\quad + \e^{(\hat \theta-\mu) T} \hat f(N,\mu) x\E\big[\e^{\sigma B_T-\frac12\sigma^2 T}\big]- \e^{-(\rho+\mu)T} \hat f(N,\mu) K.
\end{align*}
For $\mu_M\ge \hat \theta$ and $\mu=\hat \theta$, the expression above reads $(\alpha+\nu \hat \theta)x T+\hat f(N,\hat \theta)(x-K)$, which diverges to $+\infty$ as $T\to\infty$. Instead, for any $\mu\in[\mu_m,\hat \theta\wedge\mu_M)$, we have
\begin{align*}
V^N(x,N,\mu)\ge |\beta_N(\mu)| x\big(\e^{(\hat \theta-\mu)T}-1\big)+\hat f(N,\mu) x\e^{(\hat \theta-\mu)T}- \e^{-(\rho+\mu)T} \hat f(N,\mu) K.
\end{align*}
Then, for $x>0$, the right-hand side diverges to $+\infty$ when $T\to\infty$.
\end{proof}

Another consequence of Assumption \ref{mainass} is given in the following proposition.

\begin{proposition} \label{supgsfinite}
Under Assumption \ref{mainass}, there is a constant $C>0$ such that
\begin{align} \begin{aligned}   \E_x \Big[ \int_{0}^{\infty}  \e^{-r_N(\mu)t}X_t \ud t+ \sup_{s \geq 0}  \e^{-r_N(\mu)s} X_s\Big]\le C x. \end{aligned}\end{align} 
\end{proposition}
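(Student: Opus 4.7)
The plan is to bound the two terms separately and uniformly in $\mu\in[\mu_m,\mu_M]$, exploiting the fact that $\e^{-r_N(\mu)t}X_t$ is itself a geometric Brownian motion with strictly negative drift under Assumption \ref{mainass}. Writing $\kappa(\mu)\coloneqq \theta-\alpha-\rho-\mu$, note that $\kappa(\mu)\le \theta-\alpha-\rho-\mu_m<0$ uniformly on $[\mu_m,\mu_M]$, and
\[
\e^{-r_N(\mu)t}X^x_t=x\exp\bigl((\kappa(\mu)-\tfrac{1}{2}\sigma^2)t+\sigma B_t\bigr).
\]

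For the integral term, I would apply Fubini (using nonnegativity) together with the standard fact $\E[\exp(\sigma B_t-\tfrac12\sigma^2 t)]=1$ to obtain
\[
\E_x\!\Big[\int_0^\infty \e^{-r_N(\mu)t}X_t\,\ud t\Big]=x\int_0^\infty \e^{\kappa(\mu)t}\,\ud t=\frac{x}{-\kappa(\mu)}\le \frac{x}{\rho+\mu_m+\alpha-\theta}.
\]

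For the supremum term, let $Y_t\coloneqq (\kappa(\mu)-\tfrac12\sigma^2)t+\sigma B_t$, which is a Brownian motion with strictly negative drift. By the classical distribution of the all-time maximum of Brownian motion with drift (see, e.g., \cite[Ch.~3.5]{karatzas2014brownian}), the random variable $M\coloneqq\sup_{t\ge 0}Y_t$ is exponentially distributed with parameter
\[
\lambda_\mu\coloneqq 1-\frac{2\kappa(\mu)}{\sigma^2}=1+\frac{2(\rho+\mu+\alpha-\theta)}{\sigma^2}>1,
\]
the strict inequality being exactly Assumption \ref{mainass}. Therefore
\[
\E_x\!\Big[\sup_{s\ge 0}\e^{-r_N(\mu)s}X_s\Big]=x\,\E[\e^{M}]=x\int_0^\infty \e^{y}\lambda_\mu \e^{-\lambda_\mu y}\,\ud y=\frac{\lambda_\mu}{\lambda_\mu-1}\,x.
\]
Since $\mu\mapsto\lambda_\mu$ is increasing and $\lambda\mapsto\lambda/(\lambda-1)$ is decreasing on $(1,\infty)$, the factor $\lambda_\mu/(\lambda_\mu-1)$ is bounded by $\lambda_{\mu_m}/(\lambda_{\mu_m}-1)$ for every $\mu\in[\mu_m,\mu_M]$. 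Adding the two bounds yields the claim with
\[
C=\frac{1}{\rho+\mu_m+\alpha-\theta}+\frac{\lambda_{\mu_m}}{\lambda_{\mu_m}-1}.
\]

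No step is really an obstacle: the only point to be careful about is the uniformity of the constant in $\mu$, which is handled by monotonicity in $\mu$ and the strict sign condition inherited from Assumption \ref{mainass}. If one wants to avoid invoking the explicit exponential law of the maximum, an alternative is to pick $p\in(1,\lambda_{\mu_m}]$ such that $\e^{p\kappa(\mu)t}(\e^{\sigma B_t-\sigma^2 t/2})^p$ is a supermartingale and apply Doob's $L^p$ inequality, but the direct distributional computation above is shorter and gives the sharp constant.
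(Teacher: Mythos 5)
Your proof is correct and follows essentially the same route as the paper's: Tonelli's theorem for the integral term, and the exponential law of $\sup_{s\ge 0}(-\bar\theta s+\sigma B_s)$ (strictly positive drift $\bar\theta>\sigma^2/2$ being exactly Assumption \ref{mainass}) for the supremum term. You are slightly more explicit than the paper in tracking the constant and its uniformity in $\mu$ over $[\mu_m,\mu_M]$, which is a welcome clarification, but there is no substantive difference in method.
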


\begin{proof}
By Tonelli's theorem, we have
\begin{align*}
&\E_x \Big[ \int_{0}^{\infty}  \e^{-r_n(\mu)t}X_t \ud t\Big]=\int_{0}^{\infty}  \e^{-r_n(\mu)t}\E_x \big[ X_t\big] \ud t=x\int_{0}^{\infty}  \e^{(\theta-\alpha-\rho-\mu)t} \ud t.
\end{align*}
Letting $\bar{\theta}\coloneqq \alpha+\rho+\mu+\frac{\sigma^2}{2}-\theta>0$, we have 
\begin{align}
    \begin{aligned} 
    \E_x \Big[\sup_{s \geq 0}  \e^{-(\rho+\mu)s} X_s\Big] 
    &= x \ \E \Big[ \exp \Big\{ \sup_{s \geq 0} ( -\bar{\theta} s +\sigma B_s) \Big\}\Big]=x\frac{2 \bar{\theta} }{\sigma^2} \int_0^\infty \e^{u-\frac{2\bar \theta}{\sigma^2}u} \ud u<\infty,
    \end{aligned}
\end{align}
where we used 
$\P\big(\sup_{s \geq 0} ( -\bar{\theta} s +\sigma B_s) > u \big)= \exp(-\frac{2\bar{\theta}u}{\sigma^2})$, by \cite[Ch.\ VIII.2, Cor.\ 1, p. 760]{shiryaev1999essentials},
and $\sigma^2-2\bar \theta<0$ under Assumption \ref{mainass}. Combining the two estimates above, the proof is complete.
\end{proof}

By an application of Dynkin's formula, for any $T>0$
\begin{align}
\begin{aligned}
\label{swerbounded} 
     &\E_x \Big[ \e^{-r_N(\mu)(\tau\wedge T)}\hat f(N,\mu) (X_{\tau\wedge T} - K) \Big]=\hat f(N,\mu) (x - K)\\
     &\quad +\E_x \Big[ \int_0^{\tau\wedge T} \e^{-r_N(\mu)t} \Big((\theta-\alpha-r_N(\mu))\hat f(N,\mu) X_t +(\hat{\rho}+\hat{\mu}) K\Big) \ud t\Big],
\end{aligned}
\end{align} 
where we used the easily verifiable bound
$\E_x\big[\int_0^T X^2_t\ud t\big]<\infty$,
in order to get rid of the stochastic integral under expectation. Letting $T\to\infty$ and using dominated convergence (which holds by Proposition \ref{supgsfinite}) we obtain
\begin{align}\label{eq:Lagr}
\begin{aligned}
     &\E_x \Big[ \e^{-r_N(\mu)\tau}\hat f(N,\mu) (X_{\tau} - K) \Big]=\hat f(N,\mu) (x - K)\\
     &\quad +\E_x \Big[ \int_0^{\tau} \e^{-r_N(\mu)t} \Big((\theta-\alpha-r_n(\mu))\hat f(N,\mu) X_t +(\hat{\rho}+\hat{\mu}) K\Big) \ud t\Big].
\end{aligned}
\end{align} 
Substituting into \eqref{VF:recursive} and recalling \eqref{ConnectionWandVnjumps}, we obtain the Lagrange formulation of our problem: 
\begin{equation} \label{wzero} 
W^N(x,N,\mu) = \sup_{\tau\in\cT(\bF^B)} \E_x \Big[ \int_{0}^{\tau} \e^{-r_N(\mu)t} M^N(X_t,N,\mu) \ud t\Big], 
\end{equation}
with 
\begin{equation} 
\label{M0} 
M^N(x,N,\mu) \coloneqq (\hat f(N,\mu)-\beta_N(\mu))(\theta-\alpha-r_N(\mu))x + (\hat{\rho}+\hat{\mu}) K. 
\end{equation}

In the next subsections, we analyse in detail the properties of the value function $V^N(x,N,\mu)$ or, equivalently, of $W^N(x,N,\mu)$. It is worth providing details in full because the case with no jumps in the mortality force provides crucial insight into the general case with jumps.

\subsection{Properties of the value function} \label{Sec:PropN}
First, we state monotonicity, convexity and Lipschitz continuity of the value function with respect to $x\in(0,\infty)$.
\begin{proposition} \label{Prop:ConvN}
For each $\mu\in[\mu_m,\mu_M]$, the mapping $x\mapsto V^N(x,N,\mu)$ is positive, increasing, convex. Moreover, there is $L>0$ independent of $\mu$ such that
\[
\big|V^N(x,N,\mu)-V^N(y,N,\mu)\big|\le L|x-y|,\quad x,y\in(0,\infty).
\]
\end{proposition}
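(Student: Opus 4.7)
The plan is to exploit the linearity of the diffusion $X$ in its initial condition, which under $\P_x$ satisfies $X^x_t = x X^1_t$ pathwise, since $\cT(\bF^B)$ is a class of stopping times that does not depend on the starting point $x$. First, I would rewrite the payoff functional in \eqref{VF:recursive} (with $\hat V^N(\cdot,N+1;\mu)\equiv 0$) as
\begin{equation*}
J(\tau,x,\mu)=x\,A(\tau,\mu)+B(\tau,\mu),
\end{equation*}
where
\begin{equation*}
A(\tau,\mu)\coloneqq\E\Big[\int_0^\tau \e^{-r_N(\mu)t}(\alpha+\nu\mu)X^1_t\,\ud t+\e^{-r_N(\mu)\tau}\hat f(N,\mu)X^1_\tau\Big],\qquad B(\tau,\mu)\coloneqq -\E\big[\e^{-r_N(\mu)\tau}\hat f(N,\mu)K\big].
\end{equation*}
Since $\alpha+\nu\mu\ge 0$, $\hat f(N,\mu)\ge 0$ and $X^1>0$, we have $A(\tau,\mu)\ge 0$ for every $\tau\in\cT(\bF^B)$.

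Positivity of $V^N(\cdot,N,\mu)$ has already been established in Proposition \ref{vofinitinessl}. For monotonicity, given $0<x\le y$ and any $\tau\in\cT(\bF^B)$ one has $J(\tau,y,\mu)-J(\tau,x,\mu)=(y-x)A(\tau,\mu)\ge 0$, so taking the supremum over $\tau$ yields $V^N(x,N,\mu)\le V^N(y,N,\mu)$. For convexity, fix $x,y>0$ and $\lambda\in[0,1]$ and set $z=\lambda x+(1-\lambda)y$. The affinity of $J(\tau,\cdot,\mu)$ gives
\begin{equation*}
J(\tau,z,\mu)=\lambda J(\tau,x,\mu)+(1-\lambda)J(\tau,y,\mu)\le \lambda V^N(x,N,\mu)+(1-\lambda)V^N(y,N,\mu),
\end{equation*}
and taking the supremum in $\tau$ on the left-hand side delivers the desired convexity inequality.

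For the Lipschitz estimate, the key observation is that $A(\tau,\mu)$ admits a bound independent of $\tau$ and $\mu$. Indeed,
\begin{equation*}
A(\tau,\mu)\le(\alpha+\nu\mu_M)\E\Big[\int_0^\infty \e^{-r_N(\mu_m)t}X^1_t\,\ud t\Big]+\hat f(N,\mu_m)\,\E\Big[\sup_{s\ge 0}\e^{-r_N(\mu_m)s}X^1_s\Big]\eqqcolon L,
\end{equation*}
which is finite by Proposition \ref{supgsfinite} and independent of $\mu$ because $[\mu_m,\mu_M]$ is compact and $r_N(\cdot),\,\hat f(N,\cdot)$ are monotone. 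Now given $0<x<y$ and $\eps>0$, pick $\tau_\eps\in\cT(\bF^B)$ with $J(\tau_\eps,y,\mu)\ge V^N(y,N,\mu)-\eps$. Then
\begin{equation*}
0\le V^N(y,N,\mu)-V^N(x,N,\mu)\le J(\tau_\eps,y,\mu)-J(\tau_\eps,x,\mu)+\eps=(y-x)A(\tau_\eps,\mu)+\eps\le L(y-x)+\eps,
\end{equation*}
and letting $\eps\downarrow 0$ gives the Lipschitz bound with constant $L$ independent of $\mu$.

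The main subtlety is making sure that stopping times truly do not depend on $x$, which is ensured by restricting to $\cT(\bF^B)$ in \eqref{VF:recursive}; once that is in place, the affine decomposition $J(\tau,x,\mu)=xA(\tau,\mu)+B(\tau,\mu)$ makes all three properties essentially automatic, and the only quantitative input needed is the uniform bound on $A(\tau,\mu)$ supplied by Proposition \ref{supgsfinite}.
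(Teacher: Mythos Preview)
Your proof is correct. For positivity, monotonicity and convexity you use exactly the same idea as the paper: the affine structure $J(\tau,x,\mu)=xA(\tau,\mu)+B(\tau,\mu)$ coming from $X^x_t=xX^1_t$, together with the fact that a supremum of affine functions is convex.

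For the Lipschitz bound your route differs slightly from the paper's. The paper passes to the Lagrange formulation $W^N(x,N,\mu)=\sup_\tau\E_x[\int_0^\tau \e^{-r_N(\mu)t}M^N(X_t,N,\mu)\,\ud t]$ and uses that $M^N(\cdot,N,\mu)$ is Lipschitz with a constant independent of $\mu$; this yields directly
\[
|W^N(x,N,\mu)-W^N(y,N,\mu)|\le L_M\,|x-y|\,\E\Big[\int_0^\infty \e^{-r_N(\mu)t}X^1_t\,\ud t\Big],
\]
so only the integral estimate is needed. You instead stay with the original formulation and bound the affine slope $A(\tau,\mu)$ uniformly, which requires both pieces of Proposition~\ref{supgsfinite}, in particular the bound on $\E[\sup_{s\ge 0}\e^{-r_N(\mu_m)s}X^1_s]$ to control the terminal term. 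Your approach is a bit more direct in that it avoids introducing the Lagrange form at this stage; the paper's approach is slightly more economical in the estimates and dovetails with the Lagrange reformulation that is used throughout the rest of the analysis.
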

\begin{proof}
Positivity of the value function follows by Proposition \ref{vofinitinessl}--(i).
In order to show that $x\mapsto V^N(x,N,\mu)$ is increasing, it is enough to notice that $X^x_t=xX^1_t$. Then it is immediate to see that for any fixed stopping time $\tau$, the mapping
\begin{align}
\begin{aligned}
x\mapsto G(x,\tau)\coloneqq \E \Big[ \int_{0}^{\tau} \e^{-r_N(\mu) t} ( \alpha+\nu \mu ) X^x_t \ud t + \e^{-r_N(\mu) \tau} \hat f(N,\mu) (X^x_\tau-K)\Big],
\end{aligned}
\end{align}
is increasing which implies 
$V^N(y,N,\mu)=\sup_{\tau\in\cT(\bF^B)} G(y,\tau)\geq \sup_{\tau\in\cT(\bF^B)} G(x,\tau)=V^N(x,N,\mu)$ for $y>x$.
Convexity of $x\mapsto V^N(x,N,\mu)$ follows by linearity of $x\mapsto G(x,\tau)$, because taking supremum over a family of linear functions yields a convex function.

To prove that $x\mapsto V^N(x,N,\mu)$ is Lipschitz, it is enough to show that $x\mapsto W^N(x,N,\mu)$ is Lipschitz. The function $M^N(x,N,\mu)$ is Lipschitz with a constant $L_M$ independent of $(N,\mu)$. Then
\begin{align}
\begin{aligned}
&|W^N(x,N,\mu)-W^N(y,N,\mu )| \\
&\quad\leq \E \Big[ \int_0^\infty \e^{- r_N(\mu) t}| M^N(X_t^x,N,\mu ) - M^N(X_t^y,N,\mu )|  \ud t \Big]\\
&\quad\leq \E \Big[ \int_0^\infty \e^{- r_N(\mu) t}L_M |X_t^x-X_t^y| \ud t \Big]\\
&\quad= \E \Big[ \int_0^\infty \e^{- r_N(\mu) t}L_M  X_t^1 |x-y| \ud t \Big] \le \frac{L_M}{r_N(\mu_m)+\alpha-\theta}|x-y|,
\end{aligned}
\end{align} 
where we used $r_N(\mu)\ge r_N(\mu_m)$ for the final inequality. 
\end{proof}
Since $V^N(x,N,\mu)$ (hence $W^N(x,N,\mu)$) is Lipschitz, it extends to $[0,\infty)$ with 
$V^N(0,N,\mu)\coloneqq \lim_{x\to0}V^N(x,N,\mu)$ (hence $W^N(0,N,\mu)\coloneqq \lim_{x\to0}W^N(x,N,\mu)$).

We are now going to establish an asymptotic behaviour of the value function.
\begin{proposition} \label{prop:asympt}
For $\mu\in[\mu_m,\mu_M]$ it holds
$\lim_{x\to\infty}V^N(x,N,\mu)/x=\max\{\hat f(N,\mu),\beta_N(\mu)\}$.
\end{proposition}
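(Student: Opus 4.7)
My plan is to exploit the linearity of the dynamics $X^x_t=xX^1_t$ to factor out $x$ and reduce the problem to an auxiliary optimal stopping problem with no acquisition fee, which can then be solved explicitly via the Lagrangian reformulation.

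First I would rewrite
\[
V^N(x,N,\mu)/x=\sup_{\tau\in\cT(\bF^B)}J(\tau,K/x),
\]
where, letting $Y_t\coloneqq X^1_t$,
\[
J(\tau,c)\coloneqq \E\Big[\int_0^\tau \e^{-r_N(\mu)t}(\alpha+\nu\mu)Y_t\ud t+\e^{-r_N(\mu)\tau}\hat f(N,\mu)(Y_\tau-c)\Big].
\]
Since $|J(\tau,K/x)-J(\tau,0)|\le \hat f(N,\mu)|K|\E[\e^{-r_N(\mu)\tau}]/x\le \hat f(N,\mu)|K|/x$, uniformly in $\tau$, we conclude
\[
\Big|V^N(x,N,\mu)/x-\sup_{\tau}J(\tau,0)\Big|\le \hat f(N,\mu)|K|/x\xrightarrow[x\to\infty]{}0,
\]
so the problem reduces to identifying $\sup_\tau J(\tau,0)$.

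Next I would evaluate $\sup_\tau J(\tau,0)$ by the Lagrangian trick used in \eqref{eq:Lagr}. Applying Dynkin's formula to $\e^{-r_N(\mu)t}\hat f(N,\mu)Y_t$ and using the identity $\alpha+\nu\mu=-\beta_N(\mu)(\theta-\alpha-r_N(\mu))$ (cf.\ \eqref{betanojump}), one obtains for every $\tau\in\cT(\bF^B)$
\[
J(\tau,0)=\hat f(N,\mu)+\big(\hat f(N,\mu)-\beta_N(\mu)\big)(\theta-\alpha-r_N(\mu))\,\E\Big[\int_0^\tau \e^{-r_N(\mu)t}Y_t\ud t\Big].
\]
Assumption \ref{mainass} ensures $\theta-\alpha-r_N(\mu)<0$ and the boundary term $\e^{-r_N(\mu)\tau}Y_\tau\to 0$ in $L^1$ so that the formula extends consistently to $\tau=\infty$ (with the convention that the stopped payoff vanishes at infinity). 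Hence the coefficient multiplying the integral has the sign of $\beta_N(\mu)-\hat f(N,\mu)$, and the supremum is attained either at $\tau=0$ (when $\hat f(N,\mu)\ge \beta_N(\mu)$, giving $\hat f(N,\mu)$) or at $\tau=\infty$ (when $\hat f(N,\mu)<\beta_N(\mu)$, giving, after using $\E[\int_0^\infty \e^{-r_N(\mu)t}Y_t\ud t]=1/(r_N(\mu)+\alpha-\theta)$, the value $\beta_N(\mu)$). In both cases $\sup_\tau J(\tau,0)=\max\{\hat f(N,\mu),\beta_N(\mu)\}$.

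The main (minor) obstacle is the rigorous treatment of $\tau=\infty$ in Dynkin's identity; this is handled by truncating at $\tau\wedge T$ and letting $T\to\infty$, using the uniform integrability bound from Proposition \ref{supgsfinite} together with $\E[\e^{-r_N(\mu)T}Y_T]=\e^{(\theta-\alpha-r_N(\mu))T}\to 0$. Combining the uniform convergence of $J(\tau,K/x)$ to $J(\tau,0)$ with the explicit value of $\sup_\tau J(\tau,0)$ yields the claim $\lim_{x\to\infty}V^N(x,N,\mu)/x=\max\{\hat f(N,\mu),\beta_N(\mu)\}$.
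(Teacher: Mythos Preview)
Your proof is correct. Both you and the paper reduce the problem to the fee-free value $\sup_\tau J(\tau,0)$ (denoted $V_N^\infty(\mu)$ in the paper), but the two routes differ in each step. For the reduction, the paper splits into cases $K>0$ and $K<0$ and uses $\varepsilon$-optimal stopping times to obtain separate $\limsup$ and $\liminf$ bounds; your uniform estimate $|J(\tau,K/x)-J(\tau,0)|\le \hat f(N,\mu)|K|/x$ handles both signs of $K$ at once and immediately yields the limit. For the evaluation of $\sup_\tau J(\tau,0)$, the paper introduces a change of measure $\ud\widehat\P/\ud\P|_{\cF_T}=X^1_T\e^{-(\theta-\alpha)T}$ to turn the stochastic problem into a deterministic one (Lemma~\ref{Lemma:VNmuinfty}), whereas you apply the Lagrangian/Dynkin identity \eqref{eq:Lagr} to express $J(\tau,0)$ as $\hat f(N,\mu)$ plus a term of definite sign, making the optimal choice $\tau\in\{0,\infty\}$ transparent. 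Your argument is shorter and reuses machinery already established in the paper; the paper's change-of-measure approach has the advantage of extending more naturally to the inductive step in Proposition~\ref{Vnatinftprop}, where the running payoff is no longer linear in $x$.
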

The proof requires a simple technical lemma that we are going to present next. 
\begin{lemma}\label{Lemma:VNmuinfty}
Letting 
\begin{equation}\label{DefVnmuinfty}
V_{N}^\infty(\mu)\coloneqq \sup_{\tau\in\cT(\bF^B)} \E \Big[ \int_{0}^{\tau} \e^{-r_N(\mu)t} ( \alpha+\nu \mu ) X_t^1 \ud t+ \e^{-r_N(\mu)\tau}\hat f(N,\mu) X^1_\tau\Big], 
\end{equation} 
we have $V_{N}^\infty(\mu)=\max\{\beta_N(\mu),\hat f(N,\mu)\}$.
\end{lemma}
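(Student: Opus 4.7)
The plan is to reduce $V_N^\infty(\mu)$ to a Lagrange-type formulation (analogous to \eqref{wzero}--\eqref{M0}, but without the $K$-term) and then read off the optimiser from the sign of the integrand's coefficient.

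The first step is to apply Dynkin's formula to the function $x\mapsto x$ on the interval $[0,\tau\wedge T]$ for an arbitrary $\tau\in\cT(\bF^B)$ and deterministic $T>0$. This yields
\[
\E\big[\e^{-r_N(\mu)(\tau\wedge T)}X^1_{\tau\wedge T}\big]=1+\E\Big[\int_0^{\tau\wedge T}\e^{-r_N(\mu)t}(\theta-\alpha-r_N(\mu))X^1_t\ud t\Big].
\]
Letting $T\to\infty$ and invoking Proposition~\ref{supgsfinite} to justify dominated convergence (recall Assumption \ref{mainass} implies $\theta-\alpha-r_N(\mu)<0$, so $\e^{-r_N(\mu)\tau\wedge T}X^1_{\tau\wedge T}\to\e^{-r_N(\mu)\tau}X^1_\tau\ind_{\{\tau<\infty\}}$) gives the analogue of \eqref{eq:Lagr} for $K=0$. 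Multiplying by $\hat f(N,\mu)$, inserting the identity into \eqref{DefVnmuinfty}, and using $\alpha+\nu\mu=-\beta_N(\mu)(\theta-\alpha-r_N(\mu))$ to combine the two integrands leads to
\[
V_N^\infty(\mu)=\hat f(N,\mu)+\sup_{\tau\in\cT(\bF^B)}\E\Big[\int_0^{\tau}\e^{-r_N(\mu)t}(\theta-\alpha-r_N(\mu))\big(\hat f(N,\mu)-\beta_N(\mu)\big)X^1_t\ud t\Big].
\]

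The second step is a case analysis on the sign of the bracketed coefficient, recalling that $\theta-\alpha-r_N(\mu)<0$. If $\hat f(N,\mu)\ge\beta_N(\mu)$, the integrand is non-positive, so the supremum is attained at $\tau\equiv 0$ and equals $0$; hence $V_N^\infty(\mu)=\hat f(N,\mu)$. If instead $\hat f(N,\mu)<\beta_N(\mu)$, the integrand is strictly positive, so the supremum is achieved in the limit $\tau=\infty$ (equivalently, by taking $\tau=T$ and sending $T\to\infty$, with dominated convergence provided once more by Proposition~\ref{supgsfinite}). Using
\[
\E\Big[\int_0^{\infty}\e^{-r_N(\mu)t}X^1_t\ud t\Big]=\frac{1}{r_N(\mu)+\alpha-\theta},
\]
the additive term equals $-(\hat f(N,\mu)-\beta_N(\mu))$, and we obtain $V_N^\infty(\mu)=\beta_N(\mu)$. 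Combining both cases yields $V_N^\infty(\mu)=\max\{\hat f(N,\mu),\beta_N(\mu)\}$.

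The only mild technical point — and the main obstacle to a completely elementary argument — is the passage to the limit $T\to\infty$ in the Dynkin identity together with the interpretation of the sup at $\tau=\infty$; both are controlled uniformly in $\tau$ by the integrability bounds of Proposition~\ref{supgsfinite}, which also ensure that $\e^{-r_N(\mu)T}\hat f(N,\mu)\E[X^1_T]=\hat f(N,\mu)\e^{(\theta-\alpha-r_N(\mu))T}\to 0$, so no boundary contribution survives.
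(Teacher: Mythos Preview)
Your proof is correct and takes a genuinely different route from the paper's. The paper introduces a Girsanov-type change of measure $\frac{\ud\widehat\P}{\ud\P}\big|_{\cF_T}=X^1_T\e^{-(\theta-\alpha)T}$ to transform the finite-horizon truncation $V_N^T(\mu)$ into a purely deterministic optimisation $U_N^T(\mu)$, and then passes to the limit $T\to\infty$ on both sides via an $\varepsilon$-optimiser/Fatou argument; the identity $U_N^\infty(\mu)=\max\{\beta_N(\mu),\hat f(N,\mu)\}$ is then read off by elementary calculus. Your approach instead mirrors the Lagrange reformulation \eqref{wzero}--\eqref{M0} already established for the full problem: Dynkin's formula turns the terminal payoff into a running one, and the sign of the resulting linear coefficient $(\theta-\alpha-r_N(\mu))(\hat f(N,\mu)-\beta_N(\mu))$ immediately identifies the optimiser as $\tau=0$ or $\tau=\infty$. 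This is arguably more economical here, since it reuses machinery (Dynkin, Proposition~\ref{supgsfinite}) that the paper has just put in place and avoids the separate change-of-measure and finite-horizon approximation steps. The paper's route, on the other hand, makes explicit that the randomness is inessential---the problem is deterministic under $\widehat\P$---which is a structural insight that your argument leaves implicit.
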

\begin{proof}
We use the process $X$ to perform a change of measure and reduce $V^\infty_{N}(\mu)$ to a deterministic optimisation. A little care is needed for the change of measure due to the infinite horizon.

Fix $T>0$ and let
\begin{equation} 
V_{N}^T(\mu)\coloneqq \sup_{\tau\in\cT(\bF^B)} \E \Big[ \int_{0}^{\tau\wedge T} \e^{-r_N(\mu)t} ( \alpha+\nu \mu ) X_t^1 \ud t+ \e^{-r_N(\mu)(\tau\wedge T)}\hat f(N,\mu) X^1_{\tau\wedge T}\Big] 
\end{equation}   
and 
\begin{equation} 
U_{N}^T(\mu)\coloneqq \sup_{t\geq0} \Big( \int_{0}^{t\wedge T} \e^{(\theta-\alpha-r_N(\mu)) s} ( \alpha+\nu \mu )  \ud s+ \e^{(\theta-\alpha-r_N(\mu)) (t\wedge T)}\hat f(N,\mu) \Big). 
\end{equation}  
Let us also introduce $U^\infty_{N}(\mu)$ by simply replacing $T$ with $+\infty$ in the expression above.

For each $T\geq 0$, let 
\begin{equation} \label{changeofmeasure}
\frac{\ud \widehat \P}{\ud \P}\Big|_{\cF_T}\coloneqq X^1_T \ \e^{-(\theta-\alpha)T}, 
\end{equation}
so that a simple change of measure guarantees $V_{N}^T(\mu)=U_{N}^T(\mu)$ for each finite $T>0$.
Indeed
\begin{align}
    \begin{aligned}
    V_{N}^T(\mu)
    &=\sup_{\tau\in\cT(\bF^B)} \widehat \E\Big[ \int_{0}^{\tau\wedge T} \e^{(\theta-\alpha-r_N(\mu)) t} ( \alpha+\nu \mu )  \ud t+ \e^{(\theta-\alpha-r_N(\mu)) (\tau\wedge T)}\hat f(N,\mu) \Big]=U_{N}^T(\mu),
    \end{aligned}
\end{align}   
where the second equality holds because the supremum is attained over deterministic times.

It is obvious that $V_{N}^T(\mu)\leq V^{T'}_{N}(\mu)\leq V_{N}^\infty(\mu)$ and $U_{N}^T(\mu)\leq U^{T'}_{N}(\mu)\le U_{N}^\infty(\mu)$ for $T'>T$. Next, we are going to show that 
\begin{equation}\label{eq:lim} 
\lim_{T\to\infty} V_{N}^T(\mu) = V_{N}^\infty(\mu)\quad\text{and}\quad \lim_{T\to\infty} U_{N}^T(\mu) = U_{N}^\infty(\mu). 
\end{equation}
Then, $V^\infty_N(\mu)=U^\infty_N(\mu)=\max\{\beta_N(\mu),\hat f(N,\mu)\}$, where the second equality is by simple algebra. That concludes the proof of the lemma. 

For the proof of \eqref{eq:lim}, take $\varepsilon$-optimal $\tau_\varepsilon$ in $V_{N}^\infty(\mu)$. Then, 
\begin{align}
\begin{aligned}
V_{N}^\infty(\mu) &\leq  \E \Big[ \int_{0}^{\tau_\varepsilon} \e^{-r_N(\mu)t} ( \alpha+\nu \mu ) X_t^1 \ud t+ \e^{-r_N(\mu)\tau_\varepsilon}\hat f(N,\mu) X^1_{\tau_\varepsilon}\Big] + \varepsilon\\
&=\E \Big[ \liminf_{T\to\infty}\int_{0}^{\tau_\varepsilon\wedge T} \e^{-r_N(\mu)t} ( \alpha+\nu \mu ) X_t^1 \ud t+ \e^{-r_N(\mu)(\tau_\varepsilon\wedge T)}\hat f(N,\mu) X^1_{\tau_\varepsilon\wedge T}\Big] + \varepsilon\\
&\leq \liminf_{T\to\infty} \E \Big[ \int_{0}^{\tau_\varepsilon\wedge T} \e^{-r_N(\mu)t} ( \alpha+\nu \mu ) X_t^1 \ud t+ \e^{-r_N(\mu)(\tau_\varepsilon\wedge T)}\hat f(N,\mu) X^1_{\tau_\varepsilon\wedge T}\Big] + \varepsilon\\
&\leq \liminf_{T\to\infty} V_{N}^T(\mu) + \varepsilon,
\end{aligned}
\end{align}
where the second inequality is by Fatou's lemma. 
Since $V^T_N(\mu)\le V^\infty_N(\mu)$, and $\varepsilon>0$ is arbitrary, 
\begin{equation} 
V_{N}^\infty(\mu) \leq \liminf_{T\to\infty} V_{N}^T(\mu) \leq \limsup_{T\to\infty} V_{N}^T(\mu) \leq V_{N}^\infty(\mu).
\end{equation}
Taking $\varepsilon$-optimal $t_\varepsilon$ in $U_{N}^\infty(\mu)$ and arguing in exactly the same way we also obtain
\begin{equation} 
U_{N}^\infty(\mu) \leq \liminf_{T\to\infty} U_{N}^T(\mu) \leq \limsup_{T\to\infty} U_{N}^T(\mu) \leq U_{N}^\infty(\mu).
\end{equation}
Thus \eqref{eq:lim} holds, concluding the proof of the lemma.
\end{proof}

\begin{proof}[Proof of Proposition \ref{prop:asympt}]
We provide a full proof for $K>0$. The case $K<0$ follows by analogous arguments.
We have
\begin{align}
\begin{aligned}
\frac{V^N(x,N,\mu)}{x} &=  \sup_{\tau\in\cT(\bF^B)} \E \Big[ \int_{0}^{\tau} \e^{-r_N(\mu)t} ( \alpha+\nu \mu ) X_t^1 \ud t+ \e^{-r_N(\mu)\tau}\hat f(N,\mu) \Big(X^1_\tau-\frac{K}{x}\Big)\Big]\\
&\leq \sup_{\tau\in\cT(\bF^B)} \E \Big[ \int_{0}^{\tau} \e^{-r_N(\mu)t} ( \alpha+\nu \mu ) X_t^1 \ud t+ \e^{-r_N(\mu)\tau}\hat f(N,\mu) X^1_\tau\Big]=V_{N}^\infty(\mu).
\end{aligned}
\end{align} 
Letting $x\to\infty$ yields
$\limsup_{x\to\infty}V^N(x,N,\mu)/x \leq V_{N}^\infty(\mu)$. 
Now we prove the reverse inequality.
For any $\varepsilon>0$, taking $x>\frac{K}{\varepsilon}$, we have 
\begin{align}
\begin{aligned}
\frac{V^N(x,N,\mu)}{x} &\geq \sup_{\tau\in\cT(\bF^B)} \E \Big[ \int_{0}^{\tau}\!\! \e^{-r_N(\mu)t} ( \alpha\!+\!\nu \mu ) X_t^1 \ud t\!+\! \e^{-r_N(\mu)\tau}\hat f(N,\mu) (X^1_\tau\!-\!\varepsilon) \Big]
\\
&
\ge V_{N}^\infty(\mu)\!-\!\hat f(N,\mu) \varepsilon.
\end{aligned}
\end{align}
Letting $x\to \infty$ and using that $\varepsilon$ is arbitrary yields
$\liminf_{x\to\infty}V^N(x,N,\mu)/x \geq V_{N}^\infty(\mu)$. 
Then $\lim_{x\to\infty}V^N(x,N,\mu)/x=V^\infty_N(\mu)$ and Lemma \ref{Lemma:VNmuinfty} concludes the proof.
\end{proof}

\subsection{Stopping and Continuation regions}
In this section, we determine the structure of stopping and continuation regions defined in \eqref{SCRegionsV}. Then, we prove the so-called smooth-fit condition to ensure continuous differentiability of the value function at the boundary of the regions. 

It is convenient to split the analysis into two cases, depending on the sign of $K$ (the case $K=0$ is addressed later in Proposition \ref{prop:K=0}). The latter determines the sign of $M^N(0,N,\mu)$ and $W^N(0,N,\mu)$ as shown below. 

\begin{proposition}  \label{w0Kposw}
We have $M^N(0,N,\mu)=(\hat{\rho}+\hat{\mu}) K$. Moreover, the following holds:
\begin{itemize}
\item[(a)] For $K>0$ we have $W^N(0,N,\mu)= K(\hat\rho+\hat\mu)/r_N(\mu)>0$.
\item[(b)] For $K<0$, there is $b>0$ such that $W^N(x,N,\mu)=0\iff x\in[0,b]$ (with the convention $[0,b]=[0,\infty)$ if $b=\infty$).
\end{itemize}
\end{proposition}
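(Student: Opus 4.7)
The formula $M^N(0,N,\mu)=(\hat\rho+\hat\mu)K$ follows immediately by evaluating \eqref{M0} at $x=0$. Throughout I write $c_N(\mu)\coloneqq(\hat f(N,\mu)-\beta_N(\mu))(\theta-\alpha-r_N(\mu))$, so that $M^N(x,N,\mu)=c_N(\mu)\,x+(\hat\rho+\hat\mu)K$, and I interpret $W^N(0,N,\mu)$ as $\lim_{x\to 0^+}W^N(x,N,\mu)$ via the Lipschitz extension guaranteed by Proposition \ref{Prop:ConvN}. The routine identity
\begin{align*}
\E_x\Big[\!\int_0^\infty\!\e^{-r_N(\mu)t}X_t\ud t\Big]=\frac{x}{r_N(\mu)+\alpha-\theta},
\end{align*}
which follows from \eqref{betanojump} and is finite under Assumption \ref{mainass}, will be used repeatedly.

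For part (a) with $K>0$, the plan is to sandwich $W^N(x,N,\mu)$ between the lower bound obtained by taking $\tau=\infty$ in \eqref{wzero} and the upper bound obtained by replacing the integrand by its absolute value,
\begin{align*}
\frac{c_N(\mu)\,x}{r_N(\mu)+\alpha-\theta}+\frac{(\hat\rho+\hat\mu)K}{r_N(\mu)}\,\le\,W^N(x,N,\mu)\,\le\,\frac{|c_N(\mu)|\,x}{r_N(\mu)+\alpha-\theta}+\frac{(\hat\rho+\hat\mu)K}{r_N(\mu)},
\end{align*}
where $K>0$ is used to rewrite $(\hat\rho+\hat\mu)|K|=(\hat\rho+\hat\mu)K$. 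Letting $x\to 0^+$, both sides collapse to $K(\hat\rho+\hat\mu)/r_N(\mu)>0$, giving the claim.

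For part (b) with $K<0$, I first show $W^N(0,N,\mu)=0$. Since the constant term in $M^N$ is non-positive, dropping it yields, for every $\tau\in\cT(\bF^B)$,
\begin{align*}
\E_x\Big[\!\int_0^\tau\!\e^{-r_N(\mu)t}M^N(X_t,N,\mu)\ud t\Big]\le\frac{[c_N(\mu)]^+\,x}{r_N(\mu)+\alpha-\theta},
\end{align*}
which vanishes as $x\to 0^+$; combined with $W^N\ge 0$ (take $\tau=0$) this gives $W^N(0,N,\mu)=0$. By Proposition \ref{Prop:ConvN}, $W^N(\cdot,N,\mu)$ is convex, continuous and non-negative on $[0,\infty)$, so its zero set is a closed convex subset of $[0,\infty)$ containing $0$, i.e.\ of the form $[0,b]$ with $b\in[0,\infty]$.

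It remains to show $b>0$, which I handle by splitting on the sign of $c_N(\mu)$. If $c_N(\mu)\le 0$, then $M^N\le 0$ everywhere, $\tau=0$ is optimal at every starting point, so $W^N\equiv 0$ and $b=\infty$, matching the stated convention. The substantive case $c_N(\mu)>0$ is handled by contradiction: assume $b=0$, so $\cS_{N,\mu}=\{0\}$; by the optimal stopping theory recalled in Section \ref{sec:prelimOS}, the smallest optimal stopping time is $\tau^*=\inf\{t\ge 0:X_t\in\cS_{N,\mu}\}=\inf\{t\ge 0:X_t=0\}$, which is $\infty$ $\P_x$-a.s.\ for every $x>0$ because the GBM \eqref{Eq:X} never reaches the origin. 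Substituting $\tau=\infty$ into \eqref{wzero} then yields the closed form
\begin{align*}
W^N(x,N,\mu)=\frac{c_N(\mu)\,x}{r_N(\mu)+\alpha-\theta}+\frac{(\hat\rho+\hat\mu)K}{r_N(\mu)},
\end{align*}
whose limit as $x\to 0^+$ is $(\hat\rho+\hat\mu)K/r_N(\mu)<0$, contradicting $W^N\ge 0$; hence $b>0$. The main obstacle is precisely this appeal to optimality of the hitting time $\tau^*$ when $\cS_{N,\mu}=\{0\}$, which relies on continuity of $W^N$ and the super/martingale characterisation recalled in Section \ref{sec:prelimOS}.
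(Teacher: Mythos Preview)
Your proof is correct and, for part (a) and for the claim $W^N(0,N,\mu)=0$ in part (b), takes a genuinely different and more elementary route than the paper. The paper establishes both of these by a hitting-time argument: it localises near $x=0$ via the first exit time $\tau_\varepsilon$ from $[0,x_\varepsilon]$, uses the explicit Laplace transform $\E_x[\e^{-r_N(\mu)\tau_\varepsilon}]=(x/x_\varepsilon)^{\gamma^+_N(\mu)}$, and then lets $x\to 0$ and $\varepsilon\to 0$. You instead exploit the explicit affine form $M^N(x,N,\mu)=c_N(\mu)\,x+(\hat\rho+\hat\mu)K$ to sandwich $W^N(x,N,\mu)$ directly between expressions that both collapse to $K(\hat\rho+\hat\mu)/r_N(\mu)$ as $x\to 0$; similarly, for $K<0$ you simply drop the non-positive constant to get a linear upper bound vanishing at $0$. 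Your contradiction argument for $b>0$ is essentially identical to the paper's Part~2.

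What each approach buys: your argument is shorter and avoids the hitting-time machinery entirely for the case $n=N$. The paper's hitting-time proof, on the other hand, only uses continuity of $M^N$ near $0$ and therefore transfers verbatim to the general case $n<N$ (Proposition~\ref{Kposw}), where $M^N(\cdot,n,\mu)$ involves $\hat V^N(\cdot,n+1;\mu)$ and is no longer affine; your sandwich would need some adaptation there.
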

Before the proof of the proposition, we state a simple corollary that follows recalling positivity and convexity of the function $x\mapsto W^N(x,N,\mu)$ (cf.\ Figure \ref{Fig:casesN(2)}).
Notice that in describing the sets $\cC_{N,\mu}$ and $\cS_{N,\mu}$ in the corollary, we make use of the extension of $W^N$ to $[0,\infty)$.
\begin{corollary}\label{cor:CS}
For $K>0$, three cases may arise:
\begin{itemize}
\item \textit{Case 1}: $\cC_{N,\mu}=[0,\infty)$ and $\cS_{N,\mu}=\varnothing$.
\item \textit{Case 2}: $\cC_{N,\mu}=[0,b)$ and $\cS_{N,\mu}=[b,\infty)$, for some $b=b(\mu)\in(0,\infty)$.
\item \textit{Case 3}: $\cC_{N,\mu}=[0,b_1)\cup(b_2,\infty)$ and $\cS_{N,\mu}=[b_1,b_2]$, for some $0\!<\!b_1\!=\!b_1(\mu)\!\leq\! b_2(\mu)\!=\!b_2\!<\!\infty$.
\end{itemize}
For $K<0$ we have either 
\begin{itemize}
\item Case 4: $\cC_{N,\mu}=(b,\infty)$ and $\cS_{N,\mu}=[0,b]$ for some $b=b(\mu)\in(0,\infty)$, 
\end{itemize}
or 
\begin{itemize}
\item Case 5: $\cC_{N,\mu}=\varnothing$ and $\cS_{N,\mu}=[0,\infty)$.
\end{itemize}
\end{corollary}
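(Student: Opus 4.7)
The plan is to combine the hint (positivity and convexity of $x\mapsto W^N(x,N,\mu)$) with Proposition \ref{w0Kposw} and enumerate all admissible shapes of the stopping set.

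First I would record the three structural facts needed. Non-negativity: taking $\tau\equiv 0$ in \eqref{VF:recursive} gives $V^N(x,N,\mu)\ge \hat f(N,\mu)(x-K)$, hence $W^N\ge 0$ on $(0,\infty)$; Proposition \ref{Prop:ConvN} (Lipschitz continuity) extends this inequality to $[0,\infty)$. Convexity: $V^N(\cdot,N,\mu)$ is convex by Proposition \ref{Prop:ConvN}, and $W^N$ differs from it by the linear map $x\mapsto \hat f(N,\mu)(x-K)$, so $W^N(\cdot,N,\mu)$ is itself convex and continuous on $[0,\infty)$. Consequently, the zero set $\cS_{N,\mu}=\{x\in[0,\infty):W^N(x,N,\mu)=0\}$ is closed (preimage of $\{0\}$ under a continuous map) and convex (if $W^N(x_1)=W^N(x_2)=0$, convexity and non-negativity force $W^N\equiv 0$ on $[x_1,x_2]$). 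A closed convex subset of $[0,\infty)$ is necessarily a closed interval, possibly empty, a singleton, bounded, or unbounded.

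For $K>0$, Proposition \ref{w0Kposw}(a) gives $W^N(0,N,\mu)>0$, so $0\in\cC_{N,\mu}$ and $\cS_{N,\mu}\subset(0,\infty)$. Enumerating the closed intervals in $(0,\infty)$ yields exactly three possibilities: $\cS_{N,\mu}=\varnothing$ (Case 1); $\cS_{N,\mu}=[b,\infty)$ for some $b>0$ (Case 2); or $\cS_{N,\mu}=[b_1,b_2]$ with $0<b_1\le b_2<\infty$ (Case 3, where $b_1=b_2$ is allowed by the statement). In each case $\cC_{N,\mu}$ is read off as the relative complement in $[0,\infty)$. For $K<0$, Proposition \ref{w0Kposw}(b) directly identifies $\cS_{N,\mu}=[0,b]$ with the convention $b\in(0,\infty]$; distinguishing $b<\infty$ from $b=\infty$ produces Cases 4 and 5 respectively, and again $\cC_{N,\mu}$ is the complement.

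The argument presents no genuine obstacle: it is a direct structural consequence of the preceding results. The only point demanding care is \emph{exhaustiveness} — ruling out any other geometry — which convexity combined with the sign of $W^N(0,N,\mu)$ dispatches in one line. No boundary behaviour, smooth fit, or growth estimate is required at this stage; those enter only in later propositions.
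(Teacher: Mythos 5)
Your argument is correct and matches the paper's, which likewise derives the corollary from non-negativity, convexity and continuity of $W^N(\cdot,N,\mu)$ (so that $\cS_{N,\mu}$ is a closed convex subset of $[0,\infty)$) together with the sign of $W^N(0,N,\mu)$ from Proposition~\ref{w0Kposw}. The paper does not spell out the enumeration of closed intervals in $(0,\infty)$, but that is exactly the content it leaves implicit; your write-up just makes it explicit.
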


\begin{figure}[h]
\centering
\begin{subfigure}{0.3\textwidth}
\centering
\includegraphics[width=1\textwidth]{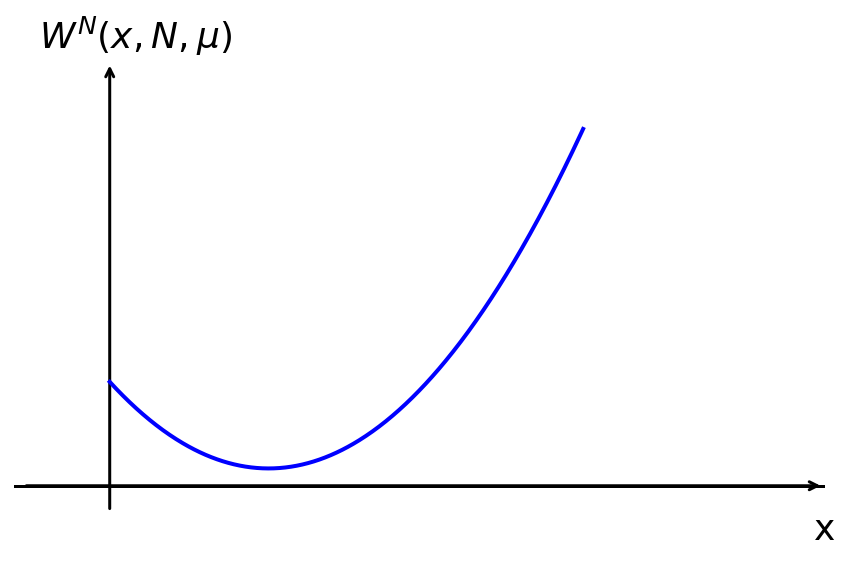}
\caption{Case 1}
\end{subfigure}
\hfill
\begin{subfigure}{0.3\textwidth}
\centering
\includegraphics[width=1\textwidth]{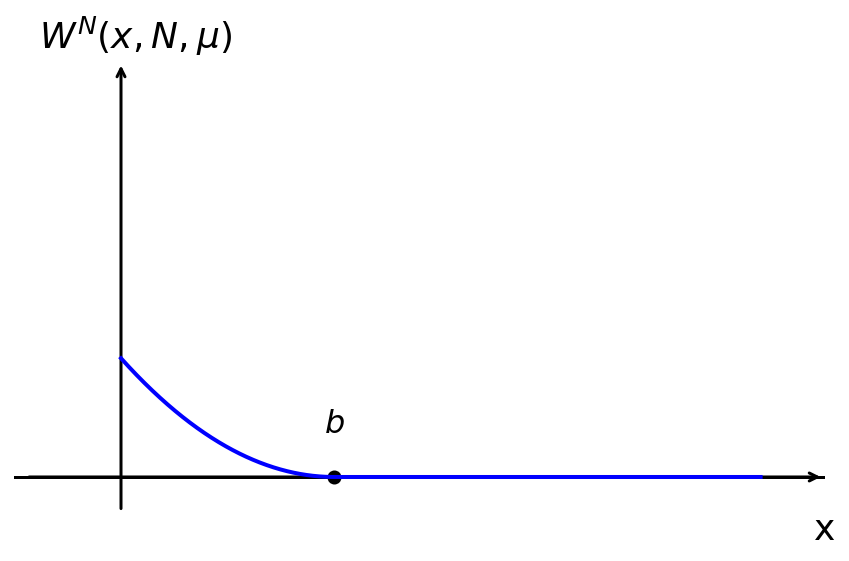}
\caption{Case 2}
\end{subfigure}
\hfill
\begin{subfigure}{0.3\textwidth}
\centering
\includegraphics[width=1\textwidth]{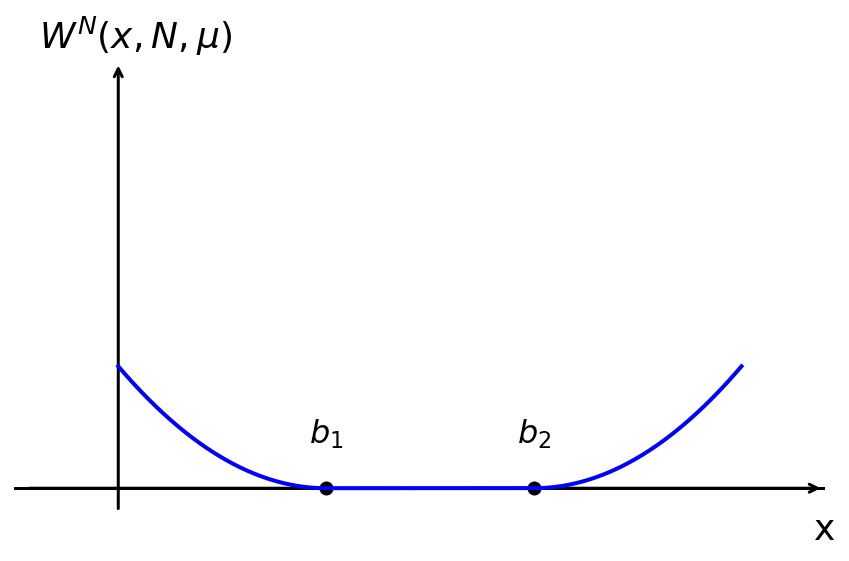}
\caption{Case 3}
\end{subfigure}
\begin{subfigure}{0.3\textwidth}
\centering
\includegraphics[width=1\textwidth]{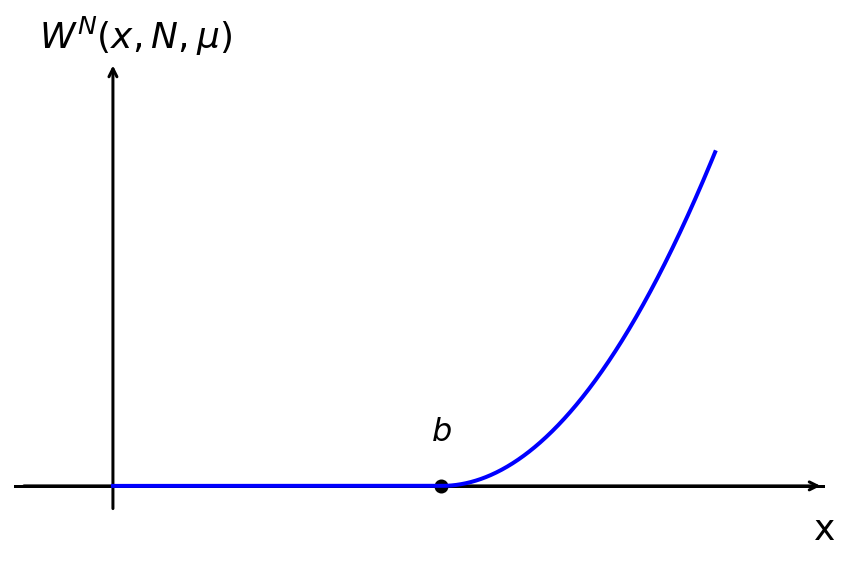}
\caption{Case 4}
\end{subfigure}
\hspace{0.05\textwidth} 
\begin{subfigure}{0.3\textwidth}
\centering
\includegraphics[width=1\textwidth]{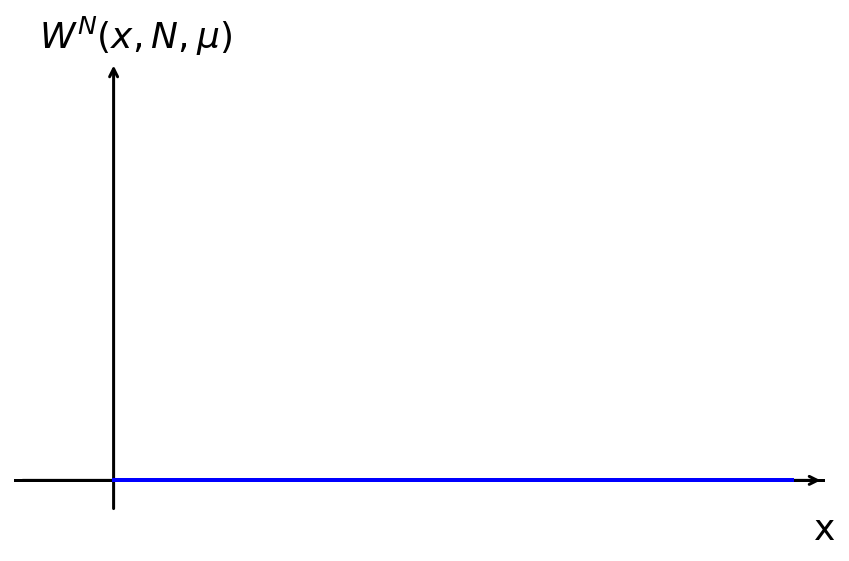}
\caption{Case 5}
\end{subfigure}
\caption{All possible behaviours of $W^N(\cdot,N,\mu)$ for $K\neq 0$.}
\label{Fig:casesN(2)}
\end{figure}
\begin{remark}
A detailed analysis of the occurrence of the five cases above will be performed in Section \ref{sec:proof}, where we complete the proof of Theorem \ref{maintheorem}. 
The explicit affine structure of $M^N(\cdot,N,\mu)$ in \eqref{M0} 
rules out the non-monotonic behaviour needed for Case 3, which is stated only for completeness and for comparison with the general result in Theorem \ref{maintheorem} with $n\le N$. 
\end{remark}

\begin{proof}[Proof of Proposition \ref{w0Kposw}]
The first claim is obvious from the definition of $M^N(x,N,\mu)$ (cf.\ \eqref{M0}).
Next, we prove (a).
By continuity of $M^N$ there exists $\varepsilon>0$ and $x_\varepsilon>0$ such that $M^N(x,N,\mu)\ge (\hat \rho+\hat \mu)K-\varepsilon$ for $x\in[0,x_\varepsilon)$. Let
$\tau_\varepsilon\coloneqq \inf\{t\geq 0 : X_t \geq x_\varepsilon\}$,
with the usual convention $\inf \varnothing=+\infty$. Take $x\in(0,x_\varepsilon)$ then $\P_x(\tau_\varepsilon>0)=1$ and
\begin{align}
\begin{aligned}
W^N(x,N,\mu)&\geq \E_x \Big[ \int_0^{\tau_\varepsilon} \e^{-r_N(\mu)t} M^N(X_t,N,\mu) \ud t\Big]\geq \frac{ (\hat \rho+\hat \mu)K-\varepsilon}{r_N(\mu)}  \E_x \Big[ 1-\e^{-r_N(\mu)\tau_\varepsilon} \Big].
\end{aligned}
\end{align}
From \cite[Ch.\ II.1, N\textsuperscript{\underline{o}} 10, p. 18]{borodin2015handbook}, $\E_x [ \e^{-r_N(\mu)\tau_\varepsilon} ]=(x/x_\varepsilon)^{\gamma^+_{N}(\mu)}$. Then, letting $x\to 0$ yields 
$W^N(0,N,\mu)\geq ( (\hat \rho+\hat \mu)K-\varepsilon)/r_N(\mu)$. Since $\eps>0$ is arbitrary we conclude $W^N(0,N,\mu)\ge M^N(0,N,\mu)/r_N(\mu)>0$. 

For the reverse inequality, we notice that there exists $\eps'>0$ and $x'_\varepsilon>0$ such that $M^N(x,N,\mu)\le (\hat \rho+\hat \mu)K+\varepsilon'$ for $x\in[0,x'_\varepsilon)$. With no loss of generality, thanks to the lower bound obtained above, we may assume $W^N(x,N,\mu)>0$ for $x\in[0,x'_\eps)$. Then $[0,x'_\eps)\cap\cS_{N,\mu}=\varnothing$. Let us now consider $ \tau'_\varepsilon\coloneqq  \inf\{ t\!\geq\! 0\!:\! X_t\!\geq\! x'_\varepsilon \}$ 
so that $\P_x(\tau^*_{N,\mu}>\tau'_\eps)=1$ for $x\in[0,x'_\eps)$. 
Then, for $x\in [0, x'_\varepsilon)$, using optimality of $\tau^*_{N,\mu}$
\begin{align}
\begin{aligned}
W^N(x,N,\mu)
&=\E_x \Big[ \int_0^{\tau'_\varepsilon} \e^{-r_N(\mu)t} M^N(X_t,N,\mu) \ud t+ \int_{\tau'_\varepsilon}^{\tau^*_{N,\mu}} \e^{-r_N(\mu)t} M^N(X_t,N,\mu) \ud t\Big].
\end{aligned}
\end{align}
Denote by $(\theta_t)_{t\ge 0}$ the shift operator (i.e., $\theta_tX_s(\omega)=X_{t+s}(\omega)$). The strong Markov property yields
\begin{align}
\begin{aligned}
&\E_x \Big[ \int_{\tau'_\varepsilon}^{\tau^*_{N,\mu}} \e^{-r_N(\mu)t} M^N(X_t,N,\mu) \ud t\Big]=\\
&\quad=\E_x \Big[ \int_{\tau'_\varepsilon}^{\tau'_\varepsilon+\tau^*_{N,\mu}\circ\theta_{\tau'_\varepsilon}} \e^{-r_N(\mu)t} M^N(X_t,N,\mu) \ud t\Big]\\
&\quad=\E_x \Big[e^{-r_N(\mu)\tau'_\varepsilon} \int_{0}^{\tau^*_{N,\mu}\circ\theta_{\tau'_\varepsilon}} \e^{-r_N(\mu)t} M^N(X_{\tau'_\varepsilon+t},N,\mu) \ud t\Big]\\
&\quad=\E_x \Big[e^{-r_N(\mu)\tau'_\varepsilon} \E_x \Big[\int_{0}^{\tau^*_{N,\mu}\circ\theta_{\tau'_\varepsilon}} \e^{-r_N(\mu)t} M^N(X_{\tau'_\varepsilon+t},N,\mu) \ud t \Given  \cF^B_{\tau'_\varepsilon} \Big]\Big]\\
&\quad=\E_x \Big[e^{-r_N(\mu)\tau'_\varepsilon}  \E_{X_{\tau'_\varepsilon}} \Big[\int_{0}^{\tau^*_{N,\mu}} \e^{-r_N(\mu)t} M^N(X_{t},N,\mu) \ud t\Big] \Big]\\
&\quad=\E_x \Big[\e^{-r_N(\mu)\tau'_\varepsilon}  W^N(X_{\tau'_\varepsilon},N,\mu) \Big]=W^N(x'_\eps,N,\mu)\Big(\frac{x}{x'_\eps}\Big)^{\gamma^+_{N}(\mu)},
\end{aligned}
\end{align}
where we used optimality of $\tau^*_{N,\mu}$ and
$\e^{-r_N(\mu)\tau'_\eps}W^N(X_{\tau'_\eps},N,\mu)1_{\{\tau'_\eps=\infty\}}=0$, 
because $0\le X_t(\omega)\le x'_\eps$ for $t\in[0,\tau'_\eps(\omega)]$, to obtain the penultimate equality.  The final expression holds by \cite[Ch.\ II.1, N\textsuperscript{\underline{o}} 10, p. 18]{borodin2015handbook}.
Now, combining the expressions above, we obtain
\begin{align}
\begin{aligned}\label{eq:cubs}
W^N(x,N,\mu)&=\E_x\Big[\int_0^{\tau'_\eps}\e^{-r_N(\mu)t}M^N(X_t,N,\mu)\ud t\Big]+W^N(x'_\eps,N,\mu)\Big(\frac{x}{x'_\eps}\Big)^{\gamma^+_{N}(\mu)}\\
&\le\frac{K(\hat \rho+\hat\mu)+\eps'}{r_N(\mu)}\E_x\big[1-\e^{-r_N(\mu)\tau'_\eps}\big]+W^N(x'_\eps,N,\mu)\Big(\frac{x}{x'_\eps}\Big)^{\gamma^+_{N}(\mu)}\\
&=\frac{K(\hat \rho+\hat\mu)+\eps'}{r_N(\mu)}\Big[1-\Big(\frac{x}{x'_\eps}\Big)^{\gamma^+_{N}(\mu)}\Big]+W^N(x'_\eps,N,\mu)\Big(\frac{x}{x'_\eps}\Big)^{\gamma^+_{N}(\mu)}.
\end{aligned}
\end{align}
Letting $x\to 0$ first and $\eps'\to 0$ second, yields $W^N(0,N,\mu)\le K(\hat \rho+\hat\mu)/r_N(\mu)$, as needed. 
This concludes the proof of (a).

Next, we prove (b). We split the argument into two parts. First, we show that $W^N(0,N,\mu)=0$ and then we improve the result to show that actually $W^N(\cdot,N,\mu)=0$ on a non-trivial right-neighbourhood of zero. 

({\em Part 1}) Recall that $M^N(0,N,\mu)<0$. We argue by contradiction assuming $W^N(0,N,\mu)>0$. In particular, there is $\eps>0$ such that $W^N(0,N,\mu)>2\eps$. Since $M^N(0,N,\mu)<0$, with no loss of generality we further assume $M^N(0,N,\mu)<-2\eps$. Letting
$x_\varepsilon\coloneqq  \inf\{x>0: W^N(x,N,\mu)<\varepsilon\}$,  
with $\inf \varnothing=+\infty$, by continuity of $W^N$ we have $x_\eps>0$ and $W^N(x,N,\mu)>\varepsilon$ for $x\in [0, x_\varepsilon)$. 
Now, letting  
$x''_\varepsilon\coloneqq  \inf\{x>0: M^N(x,N,\mu)>-\varepsilon\}$, 
we have also $x''_\eps>0$ by continuity of $M^N$. Therefore  $M^N(x,N,\mu)\le-\varepsilon$ and $W^N(x,N,\mu)\ge\varepsilon$ on $[0,\hat{x}_\varepsilon)$, with 
$\hat{x}_\varepsilon\coloneqq  \min\{x_\varepsilon, x''_\varepsilon\}>0$.
In particular, it must be 
$\cS_{N,\mu}\cap[0, \hat x_\varepsilon)=\varnothing$. The latter implies $\P_x(\tau_\varepsilon<\tau^*_{N,\mu})=1$ for $x\in [0, \hat{x}_\varepsilon)$ for
$ \tau_\varepsilon\coloneqq  \inf\{ t\!\geq\! 0\!:\! X_t\!\geq\! \hat{x}_\varepsilon \}$. 
If $\hat{x}_\varepsilon=+\infty$, then $\tau_\varepsilon=+\infty$, yielding the contradiction
\begin{align}
\begin{aligned}
0\le W^N(x,N,\mu)=\E_x \Big[ \int_0^{\infty} \e^{-r_N(\mu)t} M^N(X_t,N,\mu) \ud t\Big] \leq -\frac{\varepsilon}{r_N(\mu)}.
\end{aligned}
\end{align}
If $\hat{x}_\varepsilon<+\infty$, then for $x\in [0, \hat{x}_\varepsilon)$, identical calculations as those leading to \eqref{eq:cubs} (but replacing $\tau'_\eps$, $x'_\eps$ with $\tau_\eps$, $\hat x_\eps$) yield
\begin{align}
\begin{aligned}
0\le W^N(x,N,\mu)
&\le - \frac{\varepsilon}{r_N(\mu)} \Big[1-\Big(\frac{x}{\hat x_\eps}\Big)^{\gamma^+_{N}(\mu)}\Big] 
+ W^N(\hat{x}_\varepsilon,N,\mu) \Big(\frac{x}{\hat x_\eps}\Big)^{\gamma^+_{N}(\mu)}.
\end{aligned}
\end{align}
Letting $x\to 0$ we reach a contradiction. Therefore, it must be $W^N(0,N,\mu)=0$. 

({\em Part 2}) Since $x\mapsto W^N(x,N,\mu)$ is convex, with $W^N(0,N,\mu)=0$, then it must be $\cS_{N,\mu}=[0,y]$ for some $y\ge 0$ (possibly $\cS_{N,\mu}=[0,\infty)$). We want to show that indeed $y>0$. Arguing by contradiction, suppose $y=0$. Since for $x>0$ we have $X^x_t>0$ for all $t\ge 0$, then $\tau^*_{N,\mu}=\infty$ and $W^N(x,N,\mu)=w_N(x,N,\mu)$ (recall \eqref{wninfty}).
However, letting $x\to 0$, we get $W^N(0,N,\mu)=\lim_{x\to 0}w_N(x,N,\mu)=M^N(0,N,\mu)/r_N(\mu)<0$, where the second equality is by dominated convergence. 
Then we have reached a contradiction.
This proves (b).
\end{proof}

We show that smooth-fit condition holds at boundary points of $\cC_{N,\mu}$ which lie in $(0,\infty)$.
In what follows,
we use the symbols $f(x+)=\lim_{\eps\downarrow 0}f(x+\eps)$ and $f(x-)=\lim_{\eps\downarrow 0}f(x-\eps)$ for the right- and left-limit of a function $f$ at a point $x$. 

\begin{proposition} \label{smoothfitv1} The mapping $x\mapsto W^N(x,N,\mu)$ is continuously differentiable at $\partial \cC_{N,\mu}\cap(0,\infty)$. Hence, $W^N(\cdot,N,\mu)\in C^1((0,\infty))$.
\end{proposition}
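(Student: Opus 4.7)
My plan combines convexity of $W^N(\cdot,N,\mu)$ from Proposition \ref{Prop:ConvN} with the supermartingale property from Section \ref{sec:prelimOS}, via an It\^o--Tanaka decomposition that isolates a local-time contribution at any potential kink of $W^N(\cdot,N,\mu)$.

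Convexity ensures the existence of one-sided derivatives $W^N_x(x\pm,N,\mu)$ for all $x \in (0,\infty)$, with $W^N_x(x-,N,\mu) \le W^N_x(x+,N,\mu)$. On the open set $\cC_{N,\mu}$, the linear ODE \eqref{eq:ODE} specialised to $n=N$ (where $\hat V^N \equiv 0$) gives $W^N(\cdot,N,\mu) \in C^2(\cC_{N,\mu})$; on $\mathrm{int}(\cS_{N,\mu})$, $W^N \equiv 0$ is trivially smooth. Hence continuous differentiability on $(0,\infty)$ reduces to establishing $W^N_x(b-,N,\mu) = W^N_x(b+,N,\mu)$ at each $b \in \partial\cC_{N,\mu} \cap (0,\infty)$; by Corollary \ref{cor:CS}, such $b$ is isolated.

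Fix such a $b$ and put $\Delta := W^N_x(b+,N,\mu) - W^N_x(b-,N,\mu) \ge 0$. Combining the supermartingale statement from Section \ref{sec:prelimOS} with Dynkin's formula \eqref{eq:Lagr} applied to $\e^{-r_N(\mu)t}\hat f(N,\mu)(X_t-K)$ yields that, for every $\tau \in \cT(\bF^B)$, the process
\begin{equation*}
Y_t := \e^{-r_N(\mu)(t\wedge\tau)}W^N(X_{t\wedge\tau},N,\mu) + \int_0^{t\wedge\tau}\e^{-r_N(\mu)s}M^N(X_s,N,\mu)\,\ud s
\end{equation*}
is a supermartingale. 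Applying the It\^o--Tanaka formula to the convex $W^N(\cdot,N,\mu)$, whose second distributional derivative is the sum of the classical $W^N_{xx}(x)\,\ud x$ on $(0,\infty)\setminus\{b\}$ and a Dirac mass $\Delta\,\delta_b$, decomposes the drift of $Y_t$ into an absolutely continuous part $\e^{-r_N(\mu)s}\bigl[(\cL-r_N(\mu))W^N + M^N\bigr](X_s)\,\ud s$ and a singular part $\tfrac{1}{2}\Delta\,\e^{-r_N(\mu)s}\,\ud L^b_s$, where $L^b$ is the semimartingale local time of $X$ at $b$. The classical part vanishes on $\cC_{N,\mu}$ by the ODE, and on $\mathrm{int}(\cS_{N,\mu})$ reduces to $\e^{-r_N(\mu)s}M^N(X_s)\,\ud s \le 0$; here $M^N \le 0$ on $\cS_{N,\mu}$ follows from a standard DPP argument applied to $W^N$ (take $\tau=h$ small in the stopping problem at a point $x \in \mathrm{int}(\cS_{N,\mu})$, divide by $h$ and let $h \to 0$).

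For $Y_t$ to be a supermartingale, its drift must be non-increasing. The absolutely continuous part is already non-positive, while the singular local-time part is non-decreasing with coefficient proportional to $\Delta$. By uniqueness of the Lebesgue decomposition of a finite-variation process into its absolutely continuous and singular parts, each must separately be non-positive, which forces $\Delta = 0$: indeed, under $\P_b$ (that is, $X_0 = b$), the local time $L^b_t$ is strictly positive for every $t > 0$, so a positive $\Delta$ would give a strictly positive singular drift contribution and contradict the supermartingale property. Hence $W^N_x(b-,N,\mu) = W^N_x(b+,N,\mu)$, and therefore $W^N(\cdot,N,\mu) \in C^1((0,\infty))$. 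The main technical delicacies will be the rigorous It\^o--Tanaka application to a convex $W^N$ that is $C^2$ off a single point, and the clean verification of strict positivity of $L^b$ under $\P_b$ (which relies on the non-degeneracy of the GBM diffusion coefficient at the interior point $b$).
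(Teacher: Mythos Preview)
Your proof is correct and takes a genuinely different route from the paper. The paper follows the classical Peskir--Shiryaev approach: in Case 4 (say $\cS_{N,\mu}=[0,b]$), it bounds the difference quotient $\varepsilon^{-1}\bigl(W^N(b+\varepsilon,N,\mu)-W^N(b,N,\mu)\bigr)$ using sub-optimality of the optimal time $\tau^*_\varepsilon=\inf\{t\ge 0:X^{b+\varepsilon}_t\le b\}$ for $W^N(b,N,\mu)$, exploits the affine structure of $M^N(\cdot,N,\mu)$ to write $M^N(X^{b+\varepsilon}_t)-M^N(X^b_t)=c\,\varepsilon\,X^1_t$, and then passes to the limit via $\tau^*_\varepsilon\downarrow 0$ (law of the iterated logarithm) and monotone convergence. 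Your argument instead combines convexity with the supermartingale property through It\^o--Tanaka: the finite-variation part of $Y$ must be non-increasing, and since the local-time contribution is singular with respect to $\ud t$ while the remaining drift is absolutely continuous, the Lebesgue decomposition forces $\Delta\le 0$, hence $\Delta=0$ by convexity.

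The paper's approach is more elementary in that it avoids local-time machinery, and it transfers directly to the inductive step for $n<N$ (Proposition~5.6) once one has the Lipschitz bound on $M^N(\cdot,n,\mu)$. Your approach is more structural: it shows that smooth fit is an automatic consequence of convexity plus the supermartingale characterisation, without ever touching the explicit form of $M^N$; this would equally well handle the general $n<N$ case, where $M^N$ is convex but no longer affine. One technical point worth being explicit about in a full write-up: Corollary~\ref{cor:CS} allows up to two boundary points, so the second-derivative measure of $W^N(\cdot,N,\mu)$ may carry atoms at both $b_1$ and $b_2$; your argument handles each separately (or simultaneously, with the singular part being the sum of the two local-time terms), and the conclusion $\Delta_i=0$ for each follows identically.
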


\begin{proof}
Recall that continuity of $W^N(\cdot,N,\mu)$ implies $W^N(\cdot,N,\mu)\in C^2(\cC_{N,\mu})$ (cf.\ Section \ref{sec:prelimOS}). Then, the second claim in the proposition follows from the first claim and the fact that $W^N(x,N,\mu)=0$ for $x\in\cS_{N,\mu}$. 
It remains to prove the first claim. We only prove it for Case 4 of Corollary \ref{cor:CS}, that is $\cC_{N,\mu}=(b,\infty)$ and $\cS_{N,\mu}=[0,b]$, with $b>0$. All remaining cases can be worked out by analogous arguments. The proof is based on ideas developed in \cite[Ch.\ IV.9]{peskir2006optimal}.

Since $W^N(\cdot,N,\mu)$ is convex, and $C^1$ on $(0,b)$ and $(b,\infty)$, then it is continuously differentiable at $b$ 
provided that right- and left-derivative coincide. It is clear that $W^N_x(x,N,\mu)=0$ for $x\in(0,b)$ and $W^N_x(b-,N,\mu)=0$. Let us now consider the right-derivative in $x=b$.
Fix $\eps>0$. Let $\tau_\varepsilon^*\coloneqq  \inf \{t\geq 0: X^{b+\varepsilon}_t\leq b\}$ be the optimal stopping time for $W^N(b+\varepsilon,N,\mu)$. Then
\begin{align}
\begin{aligned}
0&\le \frac{W^N(b+\varepsilon,N,\mu)-W^N(b,N,\mu)}{\varepsilon}\\
&\leq \frac{1}{\varepsilon} \E \Big[ \int_0^{\tau_\varepsilon^*} \e^{-r_N(\mu)t} (M^N(X^{b+\varepsilon}_t,N,\mu))-M^N(X^{b}_t,N,\mu)) \ud t\Big]\\
&=(\hat f(N,\mu)\!-\!\beta_N(\mu))(\theta\!-\!\alpha\!-r_N(\mu))\E \Big[ \int_0^{\tau_\varepsilon^*} \e^{-r_N(\mu)t} X_t^1 \ud t\Big], \end{aligned}\end{align}
where the first inequality holds because $W(b+\eps,N,\mu)\geq 0$ and $W^N(b,N,\mu)=0$, and the equality holds because
$M^N(X^{b+\varepsilon}_t,N,\mu)-M^N(X^{b}_t,N,\mu)
= (\hat f(N,\mu)-\beta_N(\mu))(\theta-\alpha-r_N(\mu))  X_t^1 \varepsilon$.
Thanks to the law of iterated logarithm, it is easy to show that $\tau^*_\varepsilon \downarrow 0$, $\P$-a.s.\ as $\varepsilon \downarrow 0$. Then, letting $\eps\to 0$ in the equation above yields $W^N_x(b+,N,\mu)=0=W^N_x(b-,N,\mu)$ by Monotone Convergence theorem. This concludes the proof.
\end{proof}

\section{Optimal Annuitization with Jumps in the Mortality Force} \label{OptimalAnnuitizationwithJumps}
In this section, we use mostly induction arguments to extend our analysis from Section \ref{ConstantForceOfMortality} to the value functions 
$V^N(x,n, \mu)$ for any $0\le n \leq N$. 

\begin{proposition} \label{Vnfiniteness}
It holds $0\le V^N(x,n,\mu)\leq L(1+x)$, for $0\le n\le N$, $\mu\in[\mu_m,\mu_M]$ and $x\in(0,\infty)$, with $L>0$ independent of $(x,n,\mu)$.
\end{proposition}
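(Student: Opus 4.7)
The plan is to proceed by \emph{backward induction on $n$}, starting from the base case $n = N$ which is already established in Proposition~\ref{vofinitinessl}(i), and then descending to $n = 0$ using the recursive representation \eqref{VF:recursive} provided by Proposition~\ref{Prop:recursive}. The induction hypothesis at step $n+1$ is that there exists a constant $L_{n+1}>0$, independent of $(x,\mu)$, such that $0 \le V^N(y,n+1,\mu) \le L_{n+1}(1+y)$ for all $y>0$ and $\mu\in[\mu_m,\mu_M]$. Since $q(n,\mu,\cdot)$ is a probability measure on $[\mu_m,\mu_M]$, this yields immediately the uniform bound $0\le \hat{V}^N(y,n+1;\mu)\le L_{n+1}(1+y)$ on the post-jump continuation value appearing inside \eqref{VF:recursive}.

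For the \emph{lower bound} $V^N(x,n,\mu)\ge 0$, I would choose the deterministic stopping rule $\tau = T$ and send $T\to\infty$. The running reward is non-negative because $(\alpha+\nu\mu)X_t\ge 0$ and $\hat V^N(X_t,n+1;\mu)\ge 0$ by the inductive hypothesis, whereas the stopping term satisfies
\[
\big|\E_x\bigl[\e^{-r_n(\mu) T}\hat f(n,\mu)(X_T-K)\bigr]\big|\le \hat f(n,\mu)\bigl(x\,\e^{(\theta-\alpha-r_n(\mu))T}+|K|\e^{-r_n(\mu)T}\bigr)\xrightarrow[T\to\infty]{}0,
\]
thanks to Assumption~\ref{mainass} (which ensures $\theta-\alpha-r_n(\mu)<0$ since $r_n(\mu)\ge \rho+\mu_m$).

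For the \emph{upper bound}, I would substitute the induction hypothesis into \eqref{VF:recursive} and split the supremum into two independent contributions. For the stopping term, the key estimate is the supermartingale bound $\sup_\tau\E_x[\e^{-r_n(\mu)\tau}X_\tau]\le x$ (as in \eqref{boundsupermg}), combined with the trivial bound $\E_x[\e^{-r_n(\mu)\tau}|K|]\le |K|$. For the integral term, Tonelli and the explicit log-normal moments give
\[
\E_x\Big[\int_0^\infty \e^{-r_n(\mu)t}X_t\,\ud t\Big]=\frac{x}{r_n(\mu)+\alpha-\theta}\le \frac{x}{\rho+\mu_m+\alpha-\theta},\qquad \int_0^\infty\e^{-r_n(\mu)t}\ud t\le \frac{1}{\rho+\mu_m}.
\]
One also needs a uniform upper bound on the money's worth $\hat f(n,\mu)$; this follows from its definition in \eqref{fhatcmu} together with the trivial bound ${}_u p_{\eta+t}\le 1$, giving $a_{\eta+t}\le 1/\rho$ and hence $\hat f(n,\mu)\le (\hat\rho+\hat\mu)/\rho$ uniformly in $(n,\mu)$. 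Assembling these ingredients produces a bound of the form $V^N(x,n,\mu)\le A_n+B_n\,x$ with constants $A_n,B_n$ that are affine functions of $A_{n+1},B_{n+1}$ with coefficients depending only on $\mu_m,\mu_M,\rho,\hat\rho,\hat\mu,\alpha,\theta,\sigma,\nu,|K|$ and $\lambda_{n+1}$.

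The induction runs in finitely many steps from $n=N$ down to $n=0$, each time producing a finite $L_n\coloneqq \max\{A_n,B_n\}$, so the uniform constant $L\coloneqq \max_{0\le n\le N}L_n$ works. The main obstacle I anticipate is purely bookkeeping: verifying that at every stage the constant arising from the recursion is genuinely \emph{independent} of $\mu$. This is ensured by systematically replacing $r_n(\mu)$ with its uniform lower bound $\rho+\mu_m$ and $\mu$ itself with $\mu_M$ in every numerator, and by the uniform bound on $\hat f$ noted above. Since no other step depends on the post-jump law $q(n,\mu,\cdot)$ beyond it being a probability measure, no further regularity on $q$ is needed.
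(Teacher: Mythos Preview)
Your proposal is correct and follows essentially the same approach as the paper: backward induction on $n$ from the base case $n=N$ (Proposition~\ref{vofinitinessl}(i)), deducing $V^N\ge 0$ by taking $\tau=T\to\infty$ in \eqref{VF:recursive}, and obtaining the linear upper bound via the supermartingale estimate \eqref{boundsupermg}, the inductive bound on $\hat V^N$, the uniform bound $\hat f(n,\mu)\le(\hat\rho+\hat\mu)/\rho$, and replacement of $r_n(\mu)$ by $\rho+\mu_m$. The only cosmetic difference is that the paper takes $\bar\lambda=\max_{n}\lambda_{n+1}$ up front to make the recursion constants manifestly independent of $n$, whereas you carry $\lambda_{n+1}$ through and take the maximum at the end; the outcome is the same.
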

\begin{proof}
The statement holds for $n=N$ by Proposition \ref{vofinitinessl}. Arguing by induction, let us now assume $0\le V^N(x,n+1,\mu)\le L(1+x)$ for all $(x,\mu)$. Since $0\le \hat f(n,\mu)\le \rho^{-1}(\hat \rho+\hat \mu)$, taking $\tau=T$ in \eqref{VF:recursive} and then letting $T\to \infty$ we deduce $V^N(x,n,\mu)\ge 0$ for all $(x,\mu)$. Next, we prove the upper bound.

From \eqref{VF:recursive} and \eqref{boundsupermg}
    \begin{align}
        \begin{aligned} \label{vf2bound1}
            V^N(x,n,\mu)
            &\leq \hat{f}(n,\mu) (x +|K|) \\ 
            &\quad +\sup_{\tau\in\cT(\bF^B)} \E_{x} \Big[\int_0^\tau \e^{- r_n(\mu) t} \big[( \alpha+\nu \mu  ) X_t +\lambda_{n+1} \hat{V}^N(X_t,n+1;\mu) \big] \ud t \Big].
    \end{aligned}
    \end{align}
By the induction hypothesis $\hat{V}^N(x,n+1;\mu) \leq L(1+x) \int_{\mu_m}^{\mu_M} q(n,\mu,\ud z)=L (1+x)$.

Then, continuing from \eqref{vf2bound1}, setting $\bar\lambda=\max_{n\in\N_0,n\le N}\lambda_{n+1}$ and using $r_n(\mu)\ge \rho+\mu_m$, yields
    \begin{align}
        \begin{aligned}
            V^N(x,n,\mu) &\le \hat{f}(n,\mu) (x\!+\!|K|)\! +\! \E_{x} \Big[\int_0^\infty \e^{- (\rho+\mu_m) t} \big[( \alpha\!+\!\nu \mu_M \! +\! \bar\lambda L) X_t\! +\!\bar\lambda L\big] \ud t \Big]\\
            &\le\frac{\hat \rho+\hat \mu}{\rho} (x\!+\!|K|)\!+\! \frac{\alpha+\nu \mu_M  + \bar \lambda L}{\rho+\mu_m +\alpha-\theta} x + \frac{\bar \lambda L}{\rho+\mu_m }\le\hat{L}(1+x),
    \end{aligned}
    \end{align}
for suitable $\hat L>0$ independent of $(n,\mu)$, where we used again $\hat f(n,\mu)\le \rho^{-1}(\hat \rho+\hat \mu)$ and the explicit formula for $\E_x[X_t]$ for the second inequality.
Relabelling $\hat L$ as $L$, we obtain $V^N(x,n,\mu)\leq L (1+x)$ as needed, because $n$ takes only finitely many values.
\end{proof}

\begin{proposition} \label{convexw}
For any $n\le N$ and $\mu\in[\mu_m,\mu_M]$ the mapping $x\mapsto V^N(x,n,\mu)$ is convex. Moreover, there is $L>0$ independent of $(n,\mu)$ such that
\[
\big|V^N(x,n,\mu)-V^N(y,n,\mu)\big|\le L|x-y|,\quad x,y,\in(0,\infty).
\]
\end{proposition}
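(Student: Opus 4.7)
The plan is a backward induction on $n$ from $n=N$ down to $n=0$. The base case $n=N$ is already established in Proposition \ref{Prop:ConvN}. Fix $n\le N-1$ and assume $x\mapsto V^N(x,n+1,z)$ is convex for every $z\in[\mu_m,\mu_M]$, and that it is Lipschitz with constant $L_{n+1}$ independent of $z$. I want to propagate both properties to $V^N(\cdot,n,\mu)$.

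For convexity, I work with the recursive representation \eqref{VF:recursive} and exploit the scaling $X^x_t=xX^1_t$. For each fixed $\tau$ and $\omega$, the map $x\mapsto X^x_t(\omega)$ is linear, so $x\mapsto \hat f(n,\mu)(X^x_\tau-K)$ and $x\mapsto (\alpha+\nu\mu)X^x_t$ are affine. For the term involving $\hat V^N$, note that by the inductive hypothesis and the obvious fact that averages of convex functions are convex, $z\mapsto \hat V^N(z,n+1;\mu)$ is convex on $(0,\infty)$ (this uses the corrected reading of \eqref{Eq:hatV} where $\hat V^N(x,n+1;\mu)=\int V^N(x,n+1,z)q(n,\mu,\ud z)$). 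Thus $x\mapsto \hat V^N(X^x_t,n+1;\mu)$ is the composition of a convex function with an affine one, hence convex. Therefore the integrand defining $V^N(x,n,\mu)$ is, for each $(\tau,\omega)$, a convex function of $x$; expectation and supremum preserve convexity, giving the result.

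For Lipschitz continuity, denote by $\Phi(x,\tau)$ the expression inside the sup in \eqref{VF:recursive}. Using $|X^x_t-X^y_t|=|x-y|X^1_t$, the inductive Lipschitz bound on $\hat V^N(\cdot,n+1;\mu)$ with the same constant $L_{n+1}$, and $\hat f(n,\mu)\le (\hat\rho+\hat\mu)/\rho$, I estimate
\begin{align*}
|\Phi(x,\tau)-\Phi(y,\tau)|
&\le |x-y|\,\tfrac{\hat\rho+\hat\mu}{\rho}\,\E\big[\e^{-r_n(\mu)\tau}X^1_\tau\big]\\
&\quad+|x-y|(\alpha+\nu\mu_M+\lambda_{n+1}L_{n+1})\,\E\Big[\int_0^\tau \e^{-r_n(\mu)t}X^1_t\,\ud t\Big].
\end{align*}
The first expectation is bounded by $1$ since $\e^{-r_n(\mu)t}X^1_t$ is a nonnegative supermartingale under Assumption \ref{mainass} (cf.\ \eqref{boundsupermg}), and the second is bounded by $(\rho+\mu_m+\alpha-\theta)^{-1}$, again by Assumption \ref{mainass}. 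Taking the sup over $\tau$ and then swapping $(x,y)$ yields
\[
L_n\le \tfrac{\hat\rho+\hat\mu}{\rho}+\tfrac{\alpha+\nu\mu_M+\bar\lambda L_{n+1}}{\rho+\mu_m+\alpha-\theta},
\]
where $\bar\lambda=\max_{k\le N}\lambda_{k+1}$. Since $n$ ranges over the finite set $\{0,\dots,N\}$, we may simply take $L:=\max_{0\le n\le N}L_n<\infty$, and this constant depends on $\mu$ only through the uniform bounds $\mu\in[\mu_m,\mu_M]$, so it is independent of $(n,\mu)$.

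The only subtle point is the convexity of $\hat V^N$; once that is in place, everything else follows from the scaling property $X^x_t=xX^1_t$ and the standard supermartingale/integrability estimates already used in the proofs of Propositions \ref{vofinitinessl}, \ref{supgsfinite} and \ref{Vnfiniteness}. In particular, no new regularity of the boundaries or fine optimal-stopping analysis is required, and the finiteness of the number of jump levels $N$ is what prevents the recursion $L_n\mapsto L_{n-1}$ from creating any issue with a uniform constant.
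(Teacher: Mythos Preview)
Your proof is correct and follows essentially the same backward-induction strategy as the paper: both exploit the scaling $X^x_t=xX^1_t$, propagate convexity of $\hat V^N(\cdot,n+1;\mu)$ through the payoff, and bound the Lipschitz constant recursively using the supermartingale estimate and Assumption~\ref{mainass}. The only difference is cosmetic: the paper first passes to the Lagrange formulation $W^N(x,n,\mu)=V^N(x,n,\mu)-\hat f(n,\mu)(x-K)$ and carries out both steps via the running reward $M^N$ in \eqref{w}, whereas you work directly with $V^N$ and the recursive representation \eqref{VF:recursive}, which forces you to handle an additional terminal term $\e^{-r_n(\mu)\tau}\hat f(n,\mu)X^1_\tau$ in the Lipschitz estimate (bounded via \eqref{boundsupermg}). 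Both routes yield the same conclusion; the paper's detour through $W^N$ gives a slightly cleaner Lipschitz bound with no terminal term, while your approach is more self-contained in that it does not rely on the Lagrange reformulation.
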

\begin{proof}
It suffices to prove all claims for $W^N(x,n,\mu)$ as introduced in \eqref{ConnectionWandVnjumps}.
Convexity of $x\mapsto W^N(x,N,\mu)$ is established in Proposition \ref{Prop:ConvN}. 
To prove convexity for any $n\le N$, we proceed by induction. 
Assume that $x\mapsto W^N(x,n+1,\mu)$ is convex. 
From \eqref{ConnectionWandVnjumps} it follows that the mapping $x\mapsto V^N(x,n+1,\mu)$ is convex, which in turn implies convexity of $x\mapsto \hat{V}^N(x, n+1; \mu)$. 
Then, it is easy to check that $x\mapsto M^N(X^x_t(\omega),n,\mu)=M^N(xX^1_t(\omega),n,\mu)$ is convex for each $\omega\in\Omega$.
Using the latter, for $\alpha\in(0,1)$ we have
\begin{align}
\begin{aligned}
W^N\big(\alpha x\! +\! (1\!-\!\alpha)y,n,\mu \big)
&\leq \sup_{\tau\in\cT(\bF^B)}\E\Big[\! \int_0^\tau \!\e^{- r_n(\mu) t}   \big(\alpha M^N(X^x_t,n,\mu)\!+\!(1-\alpha)M^N(X^y_t,n,\mu)\big) \ud t \Big]\\
&\le \alpha W^N(x,n,\mu ) +(1-\alpha) W^N(y,n,\mu ),
\end{aligned}
\end{align}
which proves convexity of $W^N(\cdot,n,\mu)$.

We also prove Lipschitz continuity by induction, because it holds with $n=N$ (cf.\ Proposition \ref{Prop:ConvN}). 
Suppose that $|W^N(x,n+1,\mu)-W^N(y,n+1,\mu)|\le L|x-y|$. 
It is easy to check that the induction hypothesis implies $|\hat{V}^N(x, n+1; \mu)-\hat{V}^N(y,n+1;\mu)|\le L'|x-y|$, with $L'=L+\sup_{n,\mu}|f(n,\mu)|$.
Thus, 
 $| M^N(x,n,\mu ) - M^N(y,n,\mu )|\le L''|x-y|$,
with 
$L''\coloneqq \sup_{n,\mu}|(\hat{f}(n,\mu)-\beta_{n}(\mu))(\theta-\alpha- r_n(\mu))| +L'\bar \lambda$ with $\bar\lambda=\max_{n\in\N_0,n\le N}\lambda_{n+1} $.
Finally, we have
\begin{align}
\begin{aligned}
|W^N(x,n,\mu )-W^N(y,n,\mu )| &\leq \E \Big[ \int_0^\infty \e^{- r_n(\mu) t}| M^N(X_t^x,n,\mu ) - M^N(X_t^y,n,\mu )|  \ud t \Big]\\
&\le |x-y| \E \Big[ \int_0^\infty \e^{- r_n(\mu) t}L'' X^1_t  \ud t \Big]. 
\end{aligned}\end{align} 
Since the expectation is bounded by a constant independent of $(n,\mu)$ and $n$ takes finitely many values, the proof is complete.
\end{proof}

Thanks to the above proposition we can continuously extend functions $M^N(\cdot,n,\mu)$ and $V^N(\cdot,n,\mu)$ to $[0,\infty)$.
We now analyse the structure of the continuation and stopping sets. The case $K=0$ is completely addressed in the next proposition, which also proves Theorem \ref{maintheorem}--(3). The cases $K<0$ and $K>0$ require a more detailed analysis: possible geometries for the continuation/stopping sets are described in Corollary \ref{cor:CSn}; the occurrence of each case is detailed in the proofs of (1) and (2) of Theorem \ref{maintheorem}, later in this section.
\begin{proposition}\label{prop:K=0}
   Let $K=0$. For each $(n,\mu)\in\N_0\times[\mu_m,\mu_M]$, $n\le N$, we have
\begin{align}\label{eq:VNK0}
V^N(x,n,\mu)=\Big(\hat f(n,\mu)+\frac{\big[L(n,\mu)\big]^+}{ r_{n}(\mu) + \alpha - \theta }\Big) x,
\end{align}
with $L(n,\mu)$ defined in \eqref{eq:Lnmu}.
Then, $\{0\}\in\cS_{n,\mu}$. Moreover, $L(n,\mu)\le 0\implies\cC_{n,\mu}=\varnothing$ and $L(n,\mu)>0\implies \cC_{n,\mu}=(0,\infty)$.
\end{proposition}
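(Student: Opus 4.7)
The plan is to prove the closed-form expression \eqref{eq:VNK0} for $V^N(x,n,\mu)$ by backward induction on $n$, starting from $n=N$. The key observation is that with $K=0$ the stopping payoff $\hat f(n,\mu)(x-K)$ is linear in $x$, and the state satisfies $X^x_t=xX^1_t$, so the whole problem is positively $1$-homogeneous in $x$; I therefore expect the value function to be linear in $x$. The main work consists in verifying that this homogeneity propagates through the recursion and in identifying the proportionality constant as the one declared in \eqref{eq:Lnmu}.

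For the base case $n=N$, equation \eqref{M0} with $K=0$ gives $M^N(x,N,\mu)=L(N,\mu)\,x$, consistent with the terminal definition of $L$ in \eqref{eq:Lnmu}. Since $\E_x[X_t]=xe^{(\theta-\alpha)t}$ is exponentially integrable against $e^{-r_N(\mu)t}$ under Assumption~\ref{mainass} (which guarantees $r_N(\mu)+\alpha-\theta>0$), the Lagrange problem \eqref{w} becomes trivial: the integrand has constant sign in $X$, so the supremum is attained in the limit $\tau\to\infty$ when $L(N,\mu)>0$ and at $\tau=0$ when $L(N,\mu)\le 0$. This yields
\[
W^N(x,N,\mu)=\frac{[L(N,\mu)]^+}{r_N(\mu)+\alpha-\theta}\,x,
\]
and \eqref{ConnectionWandVnjumps} gives \eqref{eq:VNK0} at level $N$.

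For the inductive step, assume \eqref{eq:VNK0} at level $n+1$. Then $V^N(\cdot,n+1,z)$ is linear in $x$, and averaging via \eqref{Eq:hatV} shows that $\hat V^N(x,n+1;\mu)$ is also linear in $x$, with the coefficient exactly the integral against $q(n,\mu,\cdot)$ that appears inside \eqref{eq:Lnmu}. Substituting into \eqref{Def:M_njumps} with $K=0$ produces $M^N(x,n,\mu)=L(n,\mu)\,x$, and the same one-line computation performed in the base case closes the induction.

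Once \eqref{eq:VNK0} is established, the statements on $\cC_{n,\mu}$ and $\cS_{n,\mu}$ follow by direct inspection: $V^N(0,n,\mu)=0=\hat f(n,\mu)(0-K)$ places $\{0\}$ in $\cS_{n,\mu}$, and for $x>0$ the equality $V^N(x,n,\mu)=\hat f(n,\mu)x$ holds if and only if $[L(n,\mu)]^+=0$, i.e.\ iff $L(n,\mu)\le 0$, giving the two alternatives. No serious obstacle is anticipated; the only subtlety is to work via the Lagrange representation \eqref{w} rather than directly with $V^N$, because it makes the $1$-homogeneity in $x$ transparent and reduces the optimal stopping problem at each inductive level to a deterministic evaluation of the sign of $L(n,\mu)$.
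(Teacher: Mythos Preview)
Your proposal is correct and follows essentially the same approach as the paper: backward induction on $n$ via the Lagrange representation \eqref{w}, showing at each step that $M^N(x,n,\mu)=L(n,\mu)x$ and hence $W^N(x,n,\mu)=x\,[L(n,\mu)]^+/(r_n(\mu)+\alpha-\theta)$. The only minor difference is in how the one-line optimal stopping problem is evaluated at each level: the paper invokes the change of measure \eqref{changeofmeasure} to reduce $\sup_{\tau\in\cT(\bF^B)}$ to a deterministic $\sup_{T\ge 0}$, whereas you observe directly that the integrand $e^{-r_n(\mu)t}L(n,\mu)X_t$ has constant sign, so the optimum is $\tau=\infty$ or $\tau=0$ according to the sign of $L(n,\mu)$; your route is slightly more elementary here and avoids the change-of-measure machinery.
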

\begin{proof}
First, notice that $M^N(x,N,\mu)= L(N,\mu)x$. Then, using the change of measure \eqref{changeofmeasure}, we get
\begin{align} 
\begin{aligned} \label{WK0}
W^N(x,N,\mu)
&=x\sup_{T\ge 0}\int_0^T\e^{(\theta-\alpha-r_{N}(\mu))t}L(N,\mu)\ud t=x\frac{\big[L(N,\mu)\big]^+}{r_{N}(\mu) +\alpha -\theta}.
\end{aligned}
\end{align}
All claims in the proposition hold for $n=N$. 
Proceeding by induction, assume the proposition holds for $n$, then from \eqref{Eq:hatV}, we get 
\begin{align}
    \hat V^N(x,n;\mu) = x \ \int_{\mu_m}^{\mu_M} \Big(\hat f(n,z) +  \frac{\big[L(n,z)\big]^+}{r_{n}(z)+\alpha-\theta}\Big) q(n-1,\mu,\ud z).
\end{align}
By combining the latter expression with \eqref{Def:M_njumps} and recalling the definition \eqref{eq:Lnmu} we get $M^N(x,n-1,\mu)=L(n-1,\mu) x$.
By the same change of measure as in \eqref{WK0}, we obtain 
\begin{align*}
    W^N(x,n-1,\mu)= \frac{\big[L(n-1,\mu)\big]^+}{r_{n-1}(\mu)+\alpha-\theta} x.
\end{align*}
Recalling $W^N(x,n,\mu)= V^N(x,n,\mu)-\hat f(n,\mu)x$ we prove \eqref{eq:VNK0}, concluding the induction step.
\end{proof}

In the proposition below, we 
extend the results of Proposition \ref{w0Kposw} to the general setting $n\le N$.

\begin{proposition} \label{Kposw}
The following holds:
\begin{itemize}
    \item[(a)] For $K>0$, we have  $W^N(0,n,\mu)>0$.
    \item[(b)] For $K<0$, there is $b>0$ such that $W^N(x,n,\mu)=0\iff x\in[0,b]$ (with the convention $[0,b]=[0,\infty)$ if $b=\infty$).
\end{itemize}
\end{proposition}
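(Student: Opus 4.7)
The plan is a reverse induction on $n$, with base case $n=N$ given by Proposition \ref{w0Kposw}. In both parts the inductive step reduces to the hitting-time / exit-time machinery already used in the base case, once the sign of $M^N(0,n,\mu)$ has been pinned down. The one algebraic novelty is a recursion for $\hat f(n,\mu)$ coming from the PDMP structure of the mortality force.

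For part (a) no inductive hypothesis on $W^N$ is really needed: Proposition \ref{Vnfiniteness} gives $V^N(\cdot,n+1,\mu)\ge 0$, so $\hat V^N(0,n+1;\mu)\ge 0$, and hence
\[
M^N(0,n,\mu) \;\ge\; r_n(\mu)\hat f(n,\mu)K \;>\; 0.
\]
By the Lipschitz bound of Proposition \ref{convexw}, the function $\hat V^N(\cdot,n+1;\mu)$ (hence $M^N(\cdot,n,\mu)$) is continuous, so $M^N$ remains strictly positive on some $[0,x_\eps)$. The lower-bound argument from the proof of Proposition \ref{w0Kposw}(a), truncating at $\tau_\eps=\inf\{t:X_t\ge x_\eps\}$ and using $\E_x[\e^{-r_n(\mu)\tau_\eps}]=(x/x_\eps)^{\gamma^+_n(\mu)}$, then reproduces $W^N(0,n,\mu)>0$ upon letting $x\downarrow 0$.

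For part (b) the inductive hypothesis $W^N(0,n+1,\mu)=0$ for every $\mu$ gives $V^N(0,n+1,z)=-K\hat f(n+1,z)$, so
\[
\hat V^N(0,n+1;\mu)=-K\!\int_{\mu_m}^{\mu_M}\!\hat f(n+1,z)\,q(n,\mu,\ud z).
\]
The main obstacle is establishing the annuity recursion
\[
r_n(\mu)\hat f(n,\mu) \;=\; (\hat\rho+\hat\mu) + \lambda_{n+1}\!\int_{\mu_m}^{\mu_M}\!\hat f(n+1,z)\,q(n,\mu,\ud z),
\]
which I would derive by conditioning the individual's annuity valuation $a(n,\mu)=\E[\int_0^\infty \e^{-\rho u}{}_u p_{\eta+t}\ud u\mid n_t=n,\mu_t=\mu]$ on the first mortality jump $\xi\sim\exp(\lambda_{n+1})$, applying the strong Markov property of $(\chi_t)$ and Assumption \ref{Ass:indipendence}, and multiplying through by $\hat\rho+\hat\mu$. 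Plugging this identity into the expression for $M^N(0,n,\mu)$ produces the clean cancellation $M^N(0,n,\mu)=K(\hat\rho+\hat\mu)<0$.

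With $M^N(0,n,\mu)<0$ in place, I would follow the two-step contradiction argument of Proposition \ref{w0Kposw}(b) verbatim. First, assuming $W^N(0,n,\mu)>2\eps>0$ and $M^N(0,n,\mu)<-2\eps$, Lipschitz continuity of $W^N$ (Proposition \ref{convexw}) and of $\hat V^N$ propagate both inequalities to some $[0,\hat x_\eps)$, and the exit-time estimate reproduced from \eqref{eq:cubs} with $\tau_\eps=\inf\{t:X_t\ge \hat x_\eps\}$ yields a contradiction upon $x\downarrow 0$. Second, convexity together with $W^N\ge 0$ and $W^N(0,n,\mu)=0$ forces $\{W^N(\cdot,n,\mu)=0\}=[0,y]$; the possibility $y=0$ is excluded because it would force $\tau^*_{n,\mu}=\infty$ for every $x>0$, giving $W^N(x,n,\mu)=w_N(x,n,\mu)$, and dominated convergence (justified by the Lipschitz bound on $M^N$ and Proposition \ref{supgsfinite}) then yields $w_N(x,n,\mu)\to M^N(0,n,\mu)/r_n(\mu)<0$, contradicting $W^N\ge 0$.
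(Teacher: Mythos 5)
Your proposal is correct and, in both parts, the probabilistic core (the exit-time lower bound for (a), the two-step contradiction for (b)) is exactly the argument the paper reproduces from Proposition \ref{w0Kposw}. Where you diverge from the paper is in how you pin down the sign of $M^N(0,n,\mu)$, and there the divergence is genuine and worth noting.

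For (a), the paper runs a full induction on the hypothesis $M^N(0,n+1,\mu)>0$, passes through $W^N(0,n+1,\mu)>0$ and the bound $V^N(0,n+1,\mu)>-\hat f(n+1,\mu)K$, and finally invokes Lemma \ref{lem:mw} to close the step. You short-circuit this entirely: since $V^N\ge 0$ (Proposition \ref{Vnfiniteness}, extended to $x=0$ by Lipschitz continuity), $\hat V^N(0,n+1;\mu)\ge 0$, and the definition \eqref{Def:M_njumps} gives $M^N(0,n,\mu)\ge r_n(\mu)\hat f(n,\mu)K>0$ directly, with no induction inside this proof and no appeal to Lemma \ref{lem:mw}. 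This is cleaner and is correct as stated.

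For (b), the paper uses the strict inequality of Lemma \ref{lem:mw},
\[
r_n(\mu)\hat f(n,\mu)>\lambda_{n+1}\int_{\mu_m}^{\mu_M}\hat f(n+1,z)\,q(n,\mu,\ud z),
\]
to conclude $M^N(0,n,\mu)<0$. You instead prove the equality
\[
r_n(\mu)\hat f(n,\mu)=(\hat\rho+\hat\mu)+\lambda_{n+1}\int_{\mu_m}^{\mu_M}\hat f(n+1,z)\,q(n,\mu,\ud z),
\]
which is exactly what one gets by \emph{not} discarding the running-integral term in the martingale/optional-sampling computation in the proof of Lemma \ref{lem:mw} (the paper drops $\E[\int_0^\xi\e^{-(\rho+\mu)u}\ud u]=1/r_n(\mu)$ to get the inequality; keeping it and multiplying by $\hat\rho+\hat\mu$ yields your identity). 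This gives the sharper conclusion $M^N(0,n,\mu)=K(\hat\rho+\hat\mu)<0$, not merely its sign, and in fact shows that the boundary value of $M^N$ is the \emph{same} for all $n\le N$ (consistent with \eqref{M0} at $n=N$). This is a nice observation the paper does not state. The remainder of your part (b) (inductive step $W^N(0,n+1,\mu)=0\Rightarrow V^N(0,n+1,z)=-K\hat f(n+1,z)$, then the Part 1 / Part 2 contradiction with $\hat x_\eps$ and dominated convergence) matches the paper's argument verbatim. No gaps.
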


\begin{proof}
We prove (a). First, we show that $M^N(0,n,\mu)>0$ implies $W^N(0,n,\mu)>0$. By continuity of $M^N$ there exists $\varepsilon>0$ and $x_\varepsilon>0$ such that $M^N(x,n,\mu)> \varepsilon$ for $x\in[0,x_\varepsilon)$. Then, by the same argument as in the first paragraph of the proof of Proposition \ref{w0Kposw} we obtain $W^N(0,n,\mu)>0$. 

Now we show that $K>0$ implies $M^N(0,n,\mu)>0$. The claim holds for $n=N$ by Proposition \ref{w0Kposw}. Arguing by induction, we assume $M^N(0,n+1,\mu)>0$.
From the previous paragraph, it follows $W^N(0,n+1,\mu)>0$. Then, \eqref{ConnectionWandVnjumps} yields  $V^N(0,n+1,\mu)>-\hat{f}(n+1,\mu)K$ and, plugging the latter, into the definition of $M^N$ in \eqref{Def:M_njumps}, it follows that
\begin{align}
    \begin{aligned}
        M^N(0,n,\mu )&> K \Big( r_n(\mu) \hat{f}(n,\mu) -\lambda_{n+1} \int_{\mu_m}^{\mu_M}\hat{f}(n+1,z) q(n,\mu,\ud z)\Big)> 0,
    \end{aligned}
\end{align}
where the last inequality is derived in Lemma \ref{lem:mw} in Appendix. This concludes the induction step and the proof of (a).

We prove (b). We repeat the exact same arguments as in Part 1 and Part 2 of the proof of Proposition \ref{w0Kposw}-(b), but replacing $M^N(x,N,\mu)$, $W^N(x,N,\mu)$, $\gamma^+_{N}(\mu)$ and $r_N(\mu)$ with $M^N(x,n,\mu)$, $W^N(x,n,\mu)$, $\gamma_{n}^+(\mu)$ and $r_n(\mu)$, respectively. That allows us to show the following: if $M^N(0,n,\mu)<0$ then there exists $b>0$ such that $W^N(x,n,\mu)=0\iff x\in[0,b]$. Thus, it remains to show that $K<0$ implies $M^N(0,n,\mu )<0$. 
The latter holds for $n=N$ by Proposition \ref{w0Kposw}. Arguing by induction, let us assume $M^N(0,n+1,\mu)<0$ so that $W^N(0,n+1,\mu)=0$. Using \eqref{ConnectionWandVnjumps}, we have $V^N(0,n+1,\mu)=-\hat{f}(n+1,\mu)K$. Substituting into \eqref{Def:M_njumps} yields
\begin{align}
    \begin{aligned}
       M^N(0,n,\mu )&=-|K| \Big(r_n(\mu) \hat{f}(n,\mu) -\lambda_{n+1} \int_{\mu_m}^{\mu_M} \hat{f}(n+1,z) q(n,\mu,z)\Big)<0,
    \end{aligned}
\end{align}
where the inequality is by Lemma \ref{lem:mw}. That concludes the induction step and proves (b).
\end{proof}

Analogously to Corollary \ref{cor:CS}, we use Proposition \ref{Kposw} along with positivity and convexity of $x\mapsto W^N(x,n,\mu)$ to describe the structure of the continuation and stopping regions (cf.\ Figure \ref{Fig:casesN(2)}).
\begin{corollary}\label{cor:CSn}
For $K>0$, three cases may arise:
\begin{itemize}
\item \textit{Case 1}: $\cC_{n,\mu}=[0,\infty)$ and $\cS_{n,\mu}=\varnothing$.
\item \textit{Case 2}: $\cC_{n,\mu}=[0,b)$ and $\cS_{n,\mu}=[b,\infty)$, for some $b=b(n,\mu)\in(0,\infty)$.
\item \textit{Case 3}: $\cC_{n,\mu}\!=\![0,b_1)\cup(b_2,\infty)$\! and $\cS_{n,\mu}=[b_1,b_2]$,\! for some $0\!<\!b_1\!=\!b_1(n,\mu)\!\leq\! b_2(n,\mu)\!=\!b_2\!<\!\!\infty$. 
\end{itemize}
For $K<0$ we have, either 
\begin{itemize}
\item Case 4: $\cC_{n,\mu}=(b,\infty)$ and $\cS_{n,\mu}=[0,b]$ for some $b=b(n,\mu)\in(0,\infty)$, 
\end{itemize}
or 
\begin{itemize}
\item Case 5: $\cC_{n,\mu}=\varnothing$ and $\cS_{n,\mu}=[0,\infty)$.
\end{itemize}
\end{corollary}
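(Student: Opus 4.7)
}
The plan is to mimic the argument used for Corollary \ref{cor:CS} but now leveraging Propositions \ref{convexw} and \ref{Kposw}, which extend the key convexity/positivity ingredients to arbitrary $n\le N$. The starting observation is that, via the Lagrange transform \eqref{ConnectionWandVnjumps},
\[
\cS_{n,\mu}=\{x\in[0,\infty):W^N(x,n,\mu)=0\},\qquad \cC_{n,\mu}=\{x\in[0,\infty):W^N(x,n,\mu)>0\},
\]
where we use the continuous extension of $W^N(\cdot,n,\mu)$ to $[0,\infty)$ guaranteed by Proposition \ref{convexw}. Admissibility of $\tau=0$ in \eqref{VF:recursive} gives $V^N(x,n,\mu)\ge \hat f(n,\mu)(x-K)$, so $W^N(\cdot,n,\mu)\ge 0$.

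Next I would use convexity to pin down the geometric shape of $\cS_{n,\mu}$. By Proposition \ref{convexw}, $x\mapsto W^N(x,n,\mu)$ is convex and continuous, and it is non-negative. Thus its zero set is a convex (hence interval) and closed subset of $[0,\infty)$: indeed, if $W^N$ vanishes at two points $x_1<x_2$, convexity forces $W^N\equiv0$ on $[x_1,x_2]$, while closedness is immediate from continuity. This leaves only the following possibilities for $\cS_{n,\mu}$: $\varnothing$, $[b_1,b_2]$ with $0\le b_1\le b_2<\infty$, $[b,\infty)$ with $b\ge 0$, or $[0,\infty)$.

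To conclude in the case $K>0$, I would apply Proposition \ref{Kposw}--(a), which gives $W^N(0,n,\mu)>0$, so $0\notin\cS_{n,\mu}$. This rules out $[0,b]$ and $[0,\infty)$ and leaves exactly: $\cS_{n,\mu}=\varnothing$ (Case 1), $\cS_{n,\mu}=[b_1,b_2]$ with $0<b_1\le b_2<\infty$ (Case 3, where the degenerate $b_1=b_2$ is a single point), or $\cS_{n,\mu}=[b,\infty)$ with $b>0$ (Case 2). The corresponding shapes of $\cC_{n,\mu}$ follow by complementation in $[0,\infty)$. For $K<0$, Proposition \ref{Kposw}--(b) directly asserts $\cS_{n,\mu}=[0,b]$ for some $b\in(0,\infty]$: if $b<\infty$ we obtain Case 4, and if $b=\infty$ we obtain Case 5.

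Since all the hard work — continuity, convexity and the dichotomy between $K>0$ and $K<0$ at $x=0$ — is already carried out in Propositions \ref{convexw} and \ref{Kposw}, there is no substantive obstacle in this corollary; it is a purely geometric consequence of those two results. The only subtlety is to recall that $W^N(\cdot,n,\mu)$ must be viewed as its continuous extension to $[0,\infty)$ in order to include $0$ in the description of $\cS_{n,\mu}$ or $\cC_{n,\mu}$.
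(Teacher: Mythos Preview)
Your proposal is correct and follows essentially the same approach as the paper, which simply notes that the corollary follows from Proposition \ref{Kposw} together with positivity and convexity of $x\mapsto W^N(x,n,\mu)$ (analogously to Corollary \ref{cor:CS}). Your explicit identification of the zero set of a non-negative convex continuous function as a closed subinterval of $[0,\infty)$, and the subsequent case splitting via Proposition \ref{Kposw}, is exactly the intended argument.
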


In Corollary \ref{cor:CSn}, we identified all possible geometries for the continuation and stopping regions.
Next, we establish the smooth-fit condition in Cases 2, 3, 4 and show that the value function and the optimal stopping boundary form a solution to a suitable free-boundary problem.

\begin{proposition} 
For $n\le N$, the mapping $x\mapsto W^N(x,n,\mu)$ is continuously differentiable at $\partial \cC_{n,\mu}\cap(0,\infty)$. Hence, $W^N(\cdot,n,\mu)\in C^1((0,\infty))$.
\end{proposition}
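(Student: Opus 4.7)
The plan is to adapt the argument of Proposition \ref{smoothfitv1} to the general setting $n\le N$; the only genuinely new ingredient is that $M^N(\cdot,n,\mu)$ is no longer affine but merely convex and Lipschitz, which will still be enough. Since $W^N(\cdot,n,\mu)\in C^2(\cC_{n,\mu})$ follows from continuity of $V^N$ and the standard optimal stopping theory recalled in Section \ref{sec:prelimOS}, and $W^N\equiv 0$ on $\cS_{n,\mu}$, the global $C^1$ claim follows from smooth fit at the boundary. By Corollary \ref{cor:CSn}, the set $\partial \cC_{n,\mu}\cap(0,\infty)$ contains at most two points (one in Cases 2 and 4, two in Case 3, none in Cases 1 and 5), so it suffices to verify smooth fit at a single representative boundary point.

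Fix $b\in \partial\cC_{n,\mu}\cap(0,\infty)$ and consider the typical situation $(b,b+\delta)\subset \cC_{n,\mu}$ and $[b-\delta,b]\subset \cS_{n,\mu}$ for some $\delta>0$; the mirror case and Case 3 are handled identically. Convexity of $x\mapsto W^N(x,n,\mu)$ from Proposition \ref{convexw} guarantees that $W^N_x(b\pm,n,\mu)$ exist, and because $W^N\equiv 0$ on $[b-\delta,b]$ we obtain $W^N_x(b-,n,\mu)=0$ for free; the task therefore reduces to proving $W^N_x(b+,n,\mu)=0$. For $\eps\in(0,\delta)$ I would take $\tau^*_\eps\coloneqq \inf\{t\ge 0:X^{b+\eps}_t\in\cS_{n,\mu}\}$, which is optimal for $W^N(b+\eps,n,\mu)$. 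Combining optimality at $b+\eps$ with suboptimality of $\tau^*_\eps$ at $b$ and the equality $W^N(b,n,\mu)=0$ yields
\[
0\le W^N(b+\eps,n,\mu)\le \E\Big[\int_0^{\tau^*_\eps}\e^{-r_n(\mu) t}\big(M^N(X^{b+\eps}_t,n,\mu)-M^N(X^b_t,n,\mu)\big)\ud t\Big].
\]
Invoking the Lipschitz continuity of $x\mapsto M^N(x,n,\mu)$ with a constant $L_M$ independent of $(n,\mu)$, established inside the proof of Proposition \ref{convexw}, together with the scaling $X^{b+\eps}_t-X^b_t=\eps X^1_t$ gives
\[
0\le \frac{W^N(b+\eps,n,\mu)}{\eps}\le L_M\, \E\Big[\int_0^{\tau^*_\eps}\e^{-r_n(\mu) t}X^1_t\ud t\Big].
\]

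To conclude, I would argue that $\tau^*_\eps\downarrow 0$, $\P$-a.s., as $\eps\downarrow 0$: by the law of iterated logarithm, $X^{b+\eps}_t$ re-enters $[0,b]\subset \cS_{n,\mu}$ arbitrarily soon after time $0$, so $\tau^*_\eps$ is dominated by that hitting time, which tends to zero. Combined with Proposition \ref{supgsfinite} and dominated convergence, the right-hand side above tends to $0$, giving $W^N_x(b+,n,\mu)=0$ as required. The step that requires the most care is controlling $M^N(X^{b+\eps}_t,n,\mu)-M^N(X^b_t,n,\mu)$ in the absence of the explicit affine structure used in Proposition \ref{smoothfitv1}; however this is already packaged in the Lipschitz bound from Proposition \ref{convexw}, and the remainder of the argument is identical to the constant-mortality case.
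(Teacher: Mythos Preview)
Your proof is correct and follows essentially the same route as the paper: the inequality set-up, the use of the law of iterated logarithm to force $\tau^*_\eps\downarrow 0$, and the passage to the limit via dominated convergence are all identical. The one difference is in how you control $M^N(X^{b+\eps}_t,n,\mu)-M^N(X^b_t,n,\mu)$: the paper argues by induction on $n$, using smooth fit at level $n+1$ to make $M^N(\cdot,n,\mu)$ continuously differentiable and then applying the mean value theorem, whereas you invoke directly the Lipschitz constant of $M^N$ obtained inside the proof of Proposition~\ref{convexw}. Your shortcut is legitimate and in fact slightly more economical, since it avoids the inductive loop altogether.
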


\begin{proof}
The claim holds for $n=N$ by Proposition \ref{smoothfitv1}. Arguing by induction, we assume that $x\mapsto W^N(x,n+1,\mu)$ is continuously differentiable at $\partial \cC_{n+1,\mu}\cap(0,\infty)$. Since it is also Lipschitz (cf.\ Proposition \ref{convexw}), there exists $L>0$ such that $|V^N_x(x,n+1,\mu)|\leq L$ for $x\in(0,\infty)$. By dominated convergence
$|\hat V^N_x(x,n+1,\mu)|  
\le\int_{\mu_m}^{\mu_M} |V^N_x(x,n+1,\vartheta)| q(n,\mu, \ud \vartheta)\le L$.
Using \eqref{Def:M_njumps}, it is now immediate that $
|M^N_x(X^{x}_t,n,\mu )| \leq L$, possibly with a different constant $L>0$. For some $\varrho \in (0,1)$,
$|M^N(X^{b+\varepsilon}_t,n,\mu )-M^N(X^{b}_t,n,\mu )|= |M^N_x(X^{b+\varrho\varepsilon}_t,n,\mu )| X^{1}_t \varepsilon \le L X^{1}_t \varepsilon$, 
by the mean value Theorem.
The rest of the proof follows the same arguments as in the proof of Proposition \ref{smoothfitv1} and it is therefore omitted. 
\end{proof}

Thanks to the asserted regularity of $W^N(\cdot,n,\mu)$ and to \eqref{eq:ODE}, we obtain the next result. 
\begin{proposition}
The value function $V^N\in C^1((0,\infty))\cap C([0,\infty))$ and the set $\mathcal{C}_{n,\mu}$ satisfy the following free-boundary problem: for any $n\le N$ and $\mu\in[\mu_m,\mu_M]$, $V^N\in C^2(\overline\cC_{n,\mu}\cap(0,\infty))$ and
\begin{align}
\begin{alignedat}{2} \label{freebp}
    (\cL- r_n(\mu)) V^N(x,n,\mu )&=-(\alpha+\nu\mu)x-\lambda_{n+1}\hat V^N(x,n+1;\mu), \quad && x\in\mathcal{C}_{n,\mu },\\
    V^N(x,n,\mu)&= \hat f(n,\mu)(x-K), \quad && x\in\partial \mathcal{C}_{n,\mu}\cap(0,\infty),\\
    V^N_x(x,n,\mu)&= \hat f(n,\mu), \quad && x\in\partial \mathcal{C}_{n,\mu}\cap(0,\infty).
\end{alignedat}
\end{align}    
\end{proposition}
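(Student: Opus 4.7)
The proposition is essentially a synthesis of results already established in Sections \ref{ConstantForceOfMortality} and \ref{OptimalAnnuitizationwithJumps}, so the plan is to assemble them rather than develop new technical machinery. The decomposition $V^N(x,n,\mu)=W^N(x,n,\mu)+\hat f(n,\mu)(x-K)$ reduces every regularity statement about $V^N$ to the corresponding statement about $W^N$, because the affine term is $C^\infty$ in $x$. The plan is to treat the four conclusions in order: $C([0,\infty))$, $C^1((0,\infty))$, the ODE with boundary conditions, and finally $C^2(\overline{\cC_{n,\mu}}\cap(0,\infty))$.

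First I would observe that $V^N(\cdot,n,\mu)\in C([0,\infty))$ is an immediate consequence of Proposition \ref{convexw}: $W^N(\cdot,n,\mu)$ is Lipschitz on $(0,\infty)$ with a constant independent of $(n,\mu)$ and therefore extends continuously to $0$, and the same holds for $V^N$. Next, $V^N(\cdot,n,\mu)\in C^1((0,\infty))$ follows directly from the previous proposition, which gives $W^N(\cdot,n,\mu)\in C^1((0,\infty))$; adding the smooth linear term preserves $C^1$ regularity. For the ODE, the identity in the first line of \eqref{freebp} on the open set $\cC_{n,\mu}$ was already recorded in \eqref{eq:ODE} as a standard consequence of the martingale property of the discounted process together with continuity of $x\mapsto V^N(x,n,\mu)$. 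The value-matching condition on $\partial\cC_{n,\mu}\cap(0,\infty)$ is immediate from continuity of $V^N(\cdot,n,\mu)$ and the definition of $\cS_{n,\mu}$. For the smooth-fit condition, I would use that $W^N(x,n,\mu)=0$ on $\cS_{n,\mu}$, so $W^N_x=0$ on the interior of $\cS_{n,\mu}$; together with $W^N\in C^1((0,\infty))$ this yields $W^N_x=0$ at every boundary point in $(0,\infty)$, hence $V^N_x=\hat f(n,\mu)$ there.

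The remaining and slightly more delicate point is the $C^2$ extension up to $\overline\cC_{n,\mu}\cap(0,\infty)$. Following the indication in the paper after Theorem \ref{maintheorem1}, I would rewrite the ODE algebraically as
\begin{equation*}
V^N_{xx}(x,n,\mu)=\frac{2}{\sigma^2 x^2}\Big(r_n(\mu)V^N(x,n,\mu)-(\theta-\alpha)xV^N_x(x,n,\mu)-(\alpha+\nu\mu)x-\lambda_{n+1}\hat V^N(x,n+1;\mu)\Big),
\end{equation*}
valid on the open set $\cC_{n,\mu}$. The task then reduces to showing that each term on the right-hand side admits a continuous extension to $(0,\infty)$. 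The terms $V^N$ and $V^N_x$ are already continuous on $(0,\infty)$ by the first two parts of the statement; the affine terms are trivially continuous; and continuity of $x\mapsto \hat V^N(x,n+1;\mu)$ on $(0,\infty)$ follows from dominated convergence applied to the definition \eqref{Eq:hatV}, using the Lipschitz bound $|V^N(x,n+1,z)-V^N(y,n+1,z)|\le L|x-y|$ from Proposition \ref{convexw} with $L$ independent of $z$. This gives a continuous expression for $V^N_{xx}$ on $(0,\infty)$, which coincides with the classical second derivative on the open set $\cC_{n,\mu}$ and therefore serves as its continuous extension to $\overline\cC_{n,\mu}\cap(0,\infty)$.

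The main and essentially only technical obstacle is ensuring the continuity of $\hat V^N(\cdot,n+1;\mu)$ needed for the $C^2$ extension; everything else is bookkeeping based on Propositions \ref{convexw}, the smooth-fit proposition, and \eqref{eq:ODE}. I do not foresee any further difficulty because the uniform Lipschitz estimate already proved makes the dominated-convergence argument immediate.
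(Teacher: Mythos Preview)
Your proposal is correct and follows essentially the same approach as the paper: the paper itself does not give a separate proof of this proposition but simply states that it follows from the regularity of $W^N(\cdot,n,\mu)$ established in the smooth-fit proposition together with \eqref{eq:ODE}, and the $C^2$ extension is justified exactly via the rewritten ODE argument spelled out after Theorem~\ref{maintheorem1}. Your write-up is a faithful and complete fleshing out of that outline.
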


\subsection{Proof of Theorem \ref{maintheorem}}\label{sec:proof}

Thanks to Corollary \ref{cor:CSn} we are in a position to prove Theorem \ref{maintheorem}. We recall that Theorem \ref{maintheorem}--(3) was already addressed in Proposition \ref{prop:K=0}. 
\begin{proof}[Proof of Theorem \ref{maintheorem}--(1)]
Since $W^{N}(x,n,\mu)\ge w_N(x,n,\mu)$, then (i) holds trivially. 

In order to prove (ii) and (iii), we first show that $I_{n,\mu}\subsetneq[0,\infty)$ implies $\cS_{n,\mu}\neq\varnothing$.
Arguing by contradiction, suppose that $\mathcal{C}_{n,\mu }=[0,\infty)$, then 
$W^N(x,n,\mu )
=w_N(x,n,\mu )$.
However, for $x\notin I_{n,\mu}$ it must be 
$W^N(x,n,\mu)=0$, which contradicts the assumption. Then, Corollary \ref{cor:CSn} implies that either Case 2 or Case 3 therein holds.

Let us now prove (ii). First, we prove that it must be $I_{n,\mu}\subsetneq[0,\infty)$. Recall that $W^N(0,n,\mu)>0$ (Proposition \ref{Kposw}).
Convexity of $x\mapsto V^N(x,n,\mu)$ (cf.\ Proposition \ref{convexw}) implies  convexity of the mapping $x\mapsto M^N(x,n,\mu)$.
Thus, when $M^N(x,n,\mu )\to-L$ as $x\to\infty$, $M^N$ is monotonically decreasing for all $x\in[0,\infty)$. Then, by the monotone convergence theorem
\[
\lim_{x\to\infty}w_N(x,n,\mu )= \E\Big[ \int_0^\infty \e^{-r_n(\mu) t} \lim_{x\to\infty}M^N(X_t^x,n,\mu ) \ud t\Big]= -\frac{L}{r_n(\mu)}<0,
\]
showing that $w_N(x,n,\mu)<0$ for large values of $x$. Hence, $I_{n,\mu}\subsetneq[0,\infty)$ and $\cS_{n,\mu}\neq\varnothing$ by the paragraph above.
By linearity of $x\mapsto X^x_t=xX^1_t$, for $y>x$ it holds $M^N(X_t^y,n,\mu )=M^N(yX_t^1,n,\mu )\le M^N(xX_t^1,n,\mu )$, implying $x\mapsto W^N(x,n,\mu)$ decreasing on $[0,\infty)$.
Then, $x\in\cS_{n,\mu}\implies y\in\cS_{n,\mu}$ for $y>x$  and $\cC_{n,\mu}=[0,b)$ for $b\in(0,\infty)$.

Finally, we prove (iii). When $M^N(x,n,\mu)\to+\infty$ as $x\to\infty$, there exists a point $x_m>0$ such that $M^N(x,n,\mu )> 0$ for all $x\in[x_m,\infty)$. Using that $\{x\in[0,\infty): M^N(x,n,\mu )>0\}\subseteq \mathcal{C}_{n,\mu }$ we conclude that we must be in Case 3 from Corollary \ref{cor:CSn}.
\end{proof}

\begin{remark} \label{allbehaviourM}
Notice that, in the setting of Theorem \ref{maintheorem}--(1) (i.e., $K>0$), by convexity of $x\mapsto M^N(x,n,\mu)$ it cannot be $M^N(x,n,\mu)\to L$, as $x\to\infty$, for some $L>M^N(0,n,\mu)$. Thus (i), (ii), (iii) in Theorem \ref{maintheorem} cover all possible asymptotic behaviours of the function $M^N$.  
\end{remark}

\begin{proof}[Proof of Theorem \ref{maintheorem}--(2)]
    Item (v) easily follows from $\E[ \int_0^\tau \e^{-r_n(\mu) t} M^N(X_t^x,n,\mu ) \ud t]\leq 0$, for any $\tau\in\cT(\bF^B)$,
    which implies $W^N(x,n,\mu)=0$, for all $x \in[0,\infty)$, i.e. $\mathcal{S}_{n,\mu }=[0,\infty)$.

    We prove (iv): $M^N(\hat{x},n,\mu )>0$ implies $\Gamma\coloneqq \{x\in [0,\infty): M^N(x,n,\mu )>0\}\neq\varnothing$ and open. Since $\mathcal{C}_{n,\mu }\supset\Gamma\neq\varnothing$, 
  we must be in Case 4 from Corollary \ref{cor:CSn}. 
\end{proof}

\subsection{Asymptotic properties of $M^N$}

Theorem \ref{maintheorem} highlights how the structure of $\cC_{n,\mu}$ 
depends on the sign and asymptotic behaviour of the function $M^N(\cdot,n,\mu)$, which we study in the remainder of this section.
Let us start by introducing, for $n\in\N_0$, $n\le N-1$, 
\begin{align} 
\begin{aligned} \label{constantvcninfty} 
V_{n}^\infty(\mu)\coloneqq \sup_{\tau\in\cT(\bF^B)} \E \Big[ \e^{- r_n(\mu) \tau}  \hat{f}(n,\mu) X^1_\tau+\int_0^\tau \e^{- r_n(\mu) t} X^1_t\big( \alpha+\nu \mu +\lambda_{n+1} \hat V_{n+1}^\infty(\mu)\big) \ud t \Big],
\end{aligned}
\end{align} 
with $\hat V^\infty_{n+1}(\mu)\coloneqq \int_{\mu_m}^{\mu_M} V^\infty_{n+1}(z)q(n,\mu,\ud z)$.
This generalises $V_{N}^\infty(\mu)$ defined in \eqref{DefVnmuinfty}.
Taking $\tau=0$ yields $V_{n}^\infty(\mu)\ge \hat f(n,\mu)$. Since $\sup_{\mu\in[\mu_m,\mu_M]}V^\infty_{N}(\mu)<\infty$, arguing by induction and using \eqref{boundsupermg} and Proposition \ref{supgsfinite}, it is easy to check that $V^\infty_{n}(\mu)\le L$ for a constant $L>0$ independent of $(n,\mu)$.

\begin{proposition} \label{Vnatinftprop} 

For $(n,\mu)\in\N_0\times[\mu_m,\mu_M]$, $n\le N$, we have
$\lim_{x\to\infty}V^N(x,n,\mu)/x= V_{n}^\infty(\mu)$.
\end{proposition}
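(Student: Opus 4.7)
The strategy is a backward induction on $n$, with base case $n=N$ supplied by Proposition~\ref{prop:asympt}. Assuming the claim for $n+1$, i.e., $V^N(y,n+1,z)/y\to V^\infty_{n+1}(z)$ as $y\to\infty$ for every $z\in[\mu_m,\mu_M]$, I would use the scaling $X_t^x=xX_t^1$ to rewrite
\begin{align*}
\frac{V^N(x,n,\mu)}{x}=\sup_{\tau\in\cT(\bF^B)}\E\bigg[&\e^{-r_n(\mu)\tau}\hat f(n,\mu)\big(X_\tau^1-K/x\big)\\
&+\int_0^\tau\e^{-r_n(\mu)t}\Big((\alpha+\nu\mu)X_t^1+\lambda_{n+1}\tfrac{\hat V^N(xX_t^1,n+1;\mu)}{x}\Big)\ud t\bigg],
\end{align*}
and compare with \eqref{constantvcninfty}.

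For the upper bound I plan to strengthen the induction by proving the algebraic estimate $V^N(y,n+1,z)\le yV^\infty_{n+1}(z)+C_{n+1}$ with $C_{n+1}$ independent of $(y,z)$. In the base case, replacing $-\hat f(N,\mu) K$ in \eqref{Eq:Value_f_mrkv} by the upper bound $|K|\hat f(N,\mu)$ and comparing with \eqref{DefVnmuinfty} gives $V^N(y,N,z)\le yV^\infty_N(z)+|K|\sup_{z}\hat f(N,z)$; the right constant is finite since $\hat f(N,z)=(\hat\rho+\hat\mu)/(\rho+z)$ is uniformly bounded on $[\mu_m,\mu_M]$. Inductively, integrating the estimate for $n+1$ against the probability kernel $q(n,\mu,\cdot)$ yields $\hat V^N(xX_t^1,n+1;\mu)\le xX_t^1\hat V^\infty_{n+1}(\mu)+C_{n+1}$, and substituting into \eqref{VF:recursive} leads to $V^N(x,n,\mu)\le xV^\infty_n(\mu)+C_n$ with $C_n\le|K|\sup_\mu\hat f(n,\mu)+\bar\lambda C_{n+1}/(\rho+\mu_m)$ independent of $\mu$. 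Dividing by $x$ produces $\limsup_{x\to\infty}V^N(x,n,\mu)/x\le V^\infty_n(\mu)$.

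For the lower bound, I would fix $\varepsilon>0$, select an $\varepsilon$-optimal stopping rule $\tau_\varepsilon$ for $V^\infty_n(\mu)$ (crucially independent of $x$), and use $V^N(x,n,\mu)/x\ge G_x(\tau_\varepsilon)$, where $G_x$ denotes the payoff functional displayed above. Subtracting $V^\infty_n(\mu)-\varepsilon$ reduces the problem to showing that
\[
\E\Big[-\e^{-r_n(\mu)\tau_\varepsilon}\hat f(n,\mu)\tfrac{K}{x}+\int_0^{\tau_\varepsilon}\e^{-r_n(\mu)t}\lambda_{n+1}\Big(\tfrac{\hat V^N(xX_t^1,n+1;\mu)}{x}-X_t^1\hat V^\infty_{n+1}(\mu)\Big)\ud t\Big]\to 0
\]
as $x\to\infty$. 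The first contribution is $O(1/x)$, while for the second, the induction hypothesis applied pointwise on $\{X_t^1>0\}$ (which has full probability for $t>0$), combined with dominated convergence under $q(n,\mu,\cdot)$, shows that the integrand converges to zero a.s.; the linear bound $V^N(y,n+1,z)\le L(1+y)$ from Proposition~\ref{Vnfiniteness} supplies a dominating function proportional to $1+X_t^1$, whose integrability under $\E\int_0^\infty\e^{-r_n(\mu)t}(\cdot)\ud t$ is secured by Proposition~\ref{supgsfinite}. Letting $x\to\infty$ first and $\varepsilon\to 0$ second gives $\liminf_{x\to\infty}V^N(x,n,\mu)/x\ge V^\infty_n(\mu)$, closing the induction.

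The main technical hurdle is securing the $z$-uniform refined bound $V^N(y,n+1,z)\le yV^\infty_{n+1}(z)+C_{n+1}$ at each inductive step: without $z$-uniformity of $C_{n+1}$ one could not safely propagate the estimate through the integration against $q(n,\mu,\ud z)$, and the naive bound from Proposition~\ref{Vnfiniteness} has the wrong asymptotic slope. Once this sharper bound is in place, all subsequent limit exchanges are routine applications of the dominated convergence theorem.
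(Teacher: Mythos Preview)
Your argument is correct, and your route for the upper bound is genuinely different from the paper's. The paper does not prove the non-asymptotic pointwise inequality $V^N(y,n+1,z)\le yV^\infty_{n+1}(z)+C_{n+1}$; instead, it relies only on the limit $\hat V^N(y,n+1;\mu)/y\to\hat V^\infty_{n+1}(\mu)$ (obtained by dominated convergence from the induction hypothesis) and an $\varepsilon$--$S$ splitting: for each $\varepsilon>0$ choose $S$ so that the ratio is within $\varepsilon$ of its limit for $y>S$, then decompose the running payoff on the events $\{X_t^x\le S\}$ and $\{X_t^x>S\}$ and bound the first piece by $x^{-1}\sup_{0\le y\le S}|\hat V^N(y,n+1;\mu)|$. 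The same device is re-used for the lower bound, keeping the supremum over $\tau$ throughout, whereas you fix an $\varepsilon$-optimal $\tau_\varepsilon$ and invoke dominated convergence directly. Your approach buys a clean, non-asymptotic estimate that propagates algebraically through the kernel $q$ and avoids the splitting entirely; it also makes the role of the uniform boundedness of $\hat f(n,\cdot)$ on $[\mu_m,\mu_M]$ explicit (this is what delivers the $z$-uniform constant $C_{n+1}$). The paper's approach is slightly less sharp but treats the upper and lower bounds symmetrically with the same $\varepsilon$--$S$ mechanism and does not need to strengthen the induction hypothesis.
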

\begin{proof}
The proof is by induction. The claim holds for $V^N(x,N,\mu)$, thanks to Proposition \ref{prop:asympt}.
Let us now suppose that  
 $\lim_{x\to\infty} V^N(x,n+1,\mu)/x= V_{n+1}^\infty(\mu)$.
Then, thanks to Proposition \ref{Vnfiniteness}, we can use dominated convergence to obtain
\begin{align}
\begin{aligned} \label{vnmeno1}
\lim_{x\to\infty}\frac{\hat V^N(x,n+1;\mu)}{x}
&=\int_{\mu_m}^{\mu_M} V^\infty_{n+1}(z) q(n,\mu,\ud z)=\hat V^\infty_{n+1}(\mu).
\end{aligned}
\end{align}
For $K>0$, we can drop the fee payment in $V^N(x,n,\mu)$ and obtain the upper bound
\begin{align}
    \begin{aligned} \label{vnpremaggio0}
    \frac{V^N(x,n,\mu)}{x}
    &\leq  \sup_{\tau\in\cT(\bF^B)} \E \Big[ \e^{- r_n(\mu) \tau}  \hat{f}(n,\mu) X^1_\tau\\
    &\qquad\qquad +\int_0^\tau \e^{- r_n(\mu) t} \Big( ( \alpha+\nu \mu  ) X^1_t +\lambda_{n+1} \frac{\hat V^N(X^x_t,n+1;\mu)}{x} \Big) \ud t \Big]. 
    \end{aligned}
\end{align}
We focus on the last term inside the integral. For fixed $\mu\in[\mu_m,\mu_M]$, \eqref{vnmeno1} implies that for any $\varepsilon>0$, we can find $S=S_{n,\mu}(\eps)>0$ such that with $x>S$,
\begin{align} 
\begin{aligned} \label{vcpiu1nlimit}
\Big|\frac{\hat V^N(x,n+1;\mu)}{x}- \hat V_{n+1}^\infty(\mu)\Big|<\varepsilon, 
\end{aligned}
\end{align}  
yielding
$\ind_{\{X_t^x > S\}} X^1_t [\hat V^N(X^x_t,n+1;\mu)/X^x_t] \leq \ind_{\{X_t^x > S\}} X^1_t (\hat V_{n+1}^\infty(\mu)+\varepsilon)$.
Hence
\begin{align}
    \begin{aligned} \label{vnpremaggio2}
    \frac{\hat V^N(X^x_t,n+1;\mu)}{x}
    &=X^1_t \frac{\hat V^N(X^x_t,n+1;\mu)}{X^x_t} \Big(\ind_{\{X_t^x\leq S\}}+\ind_{\{X_t^x > S\}}\Big)\\
    &\leq \ind_{\{X_t^x\leq S\}} X^1_t \frac{\hat V^N(X^x_t,n+1;\mu)}{X^x_t}+ \ind_{\{X_t^x > S\}} X^1_t (\hat V_{n+1}^\infty(\mu)+\varepsilon)\\
    &\leq \ind_{\{X_t^x\leq S\}}  \frac{\hat V^N(X^x_t,n+1;\mu)}{x} + X^1_t (\hat V_{n+1}^\infty(\mu)+\varepsilon)\\
    &\leq x^{-1}\sup_{0\leq y \leq S}|\hat V^N(y,n+1;\mu)|+  X^1_t (\hat V_{n+1}^\infty(\mu)+\varepsilon),
    \end{aligned}
\end{align}
where 
in the second inequality we remove the indicator because $\hat V_{n+1}^\infty(\mu)+\varepsilon$ is positive. 
Combining \eqref{vnpremaggio0} and \eqref{vnpremaggio2} yields
\begin{align*}
    \begin{aligned}
    \frac{V^N(x,n,\mu)}{x}
    &\leq  \sup_{\tau\in\cT(\bF^B)} \E \Big[ \e^{- r_n(\mu) \tau}  \hat{f}(n,\mu) X^1_\tau\!+\!\int_0^\tau\!\! \e^{- r_n(\mu) t} X^1_t\Big( \alpha\!+\!\nu \mu \!+\!\lambda_{n+1} \hat V_{n+1}^\infty(\mu)\!+\!\lambda_{n+1}\eps\Big) \ud t \\
    &\qquad\qquad +\int_0^\tau\!\! \e^{- r_n(\mu) t} \lambda_{n+1} \sup_{0\leq y \leq S}|\hat V^N(y,n+1;\mu)|/x  \ud t\Big]\\
    &\le V^\infty_{n}(\mu)\!+\!\int_0^\infty\e^{-r_n(\mu)t}\E\big[X^1_t\big]\lambda_{n+1}\eps\ud t\!+\!\sup_{0\leq y \leq S}|\hat V^N(y,n+1;\mu)|/x \int_0^\infty\e^{-r_n(\mu)t}\lambda_{n+1}\ud t.
    \end{aligned}
\end{align*}
The first integral is finite because $\E[X^1_t]=\e^{(\theta-\alpha)t}$ and $r_n(\mu)>\theta-\alpha$. Letting $x\to\infty$, the last term in the above expression vanishes because $\hat V^N(\cdot,n+1;\mu)$ is bounded on compacts. Afterwards we let $\eps\to 0$, so that 
$\limsup_{x\to\infty}V^N(x,n,\mu)/x \leq V_{n}^\infty(\mu)$.

Next, we prove the reverse inequality. 
From \eqref{vcpiu1nlimit},
$\hat V^N(X^x_t,n+1;\mu)/X^x_t \geq (\hat V_{n+1}^\infty(\mu)-\varepsilon)$ on the event $\{X_t^x > S\}$.
Hence,
\begin{align}
    \begin{aligned}
    \frac{\hat V^N(X^x_t,n+1;\mu)}{x}&\geq \ind_{\{X_t^x\leq S\}} X^1_t \frac{\hat V^N(X^x_t,n+1;\mu)}{X^x_t} + \ind_{\{X_t^x > S\}} X^1_t (V_{n+1}^\infty(\mu)-\varepsilon)\\
    &\geq \ind_{\{X_t^x\leq S\}}  \frac{\hat V^N(X^x_t,n+1;\mu)}{x} - \ind_{\{X_t^x\leq S\}}X_t^1 \hat V_{n+1}^\infty(\mu) +  X^1_t (\hat V_{n+1}^\infty(\mu)-\varepsilon)\\
    &\geq -x^{-1}\sup_{0\leq y \leq S}|V^N(y,n+1,\mu)| - \frac{S}{x}\hat V_{n+1}^\infty(\mu) +  X^1_t (\hat V_{n+1}^\infty(\mu)-\varepsilon).
    \end{aligned}
\end{align}
For large $x$ we have $\frac{K}{x}<\varepsilon$ and therefore 
\begin{align}
    \begin{aligned} 
    \frac{V^N(x,n,\mu)}{x}&= \sup_{\tau\in\cT(\bF^B)} \E \Big[ \e^{- r_n(\mu) \tau}  \hat{f}(n,\mu) \Big(X^1_\tau-\frac{K}{x}\Big)\\
    &\qquad\qquad +\int_0^\tau \e^{- r_n(\mu) t} \Big( ( \alpha+\nu \mu  ) X^1_t +\lambda_{n+1} \frac{\hat V^N(X^x_t,n+1;\mu)}{x} \Big) \ud t \Big] \\
    &\geq  \sup_{\tau\in\cT(\bF^B)} \E \Big[ \e^{- r_n(\mu) \tau}  \hat{f}(n,\mu) (X^1_\tau-\varepsilon) +\int_0^\tau \e^{- r_n(\mu) t} X^1_t\Big(  \alpha+\nu \mu   +\lambda_{n+1} \hat V_{n+1}^\infty(\mu)-\varepsilon \Big) \ud t\Big]\\
    &\quad  -\frac{S \hat V_{n+1}^\infty(\mu)+\sup_{0\leq y \leq S}|\hat V^N(y,n+1;\mu)|}{x}\int_0^\infty \e^{- r_n(\mu) t} \lambda_{n+1}   \ud t.
    \end{aligned}
\end{align}
Letting $x\to\infty$, the last term in the expression above vanishes. Then, letting also $\eps \to 0$ we get
$\liminf_{x\to\infty} V^N(x,n,\mu)/x\geq V_{n}^\infty(\mu)$. 

We have shown $\lim_{x\to\infty} V^N(x,n,\mu)/x = V_{n}^\infty(\mu)$ for $K>0$. The case $K<0$ is analogous.
\end{proof}

The above proposition implies that $\hat{V}^N(x, n+1; \mu)$ behaves like $x\cdot\hat V_{n+1}^\infty (\mu)$ when $x\to \infty$ (cf.\ \eqref{vnmeno1}). 
Therefore, the asymptotic behaviour of $M^N$ can be summarised in the next corollary.
\begin{corollary}
Let $P(n,\mu)\coloneqq (\hat{f}(n,\mu)-\beta_{n}(\mu))(\theta-\alpha- r_n(\mu)) +\lambda_{n+1} \hat V_{n+1}^\infty(\mu)$
and recall \eqref{Def:M_njumps}. Then, 
\begin{equation*}
\begin{aligned}
&P(n,\mu)>0\implies \lim_{x\to\infty} M^N(x,n,\mu )=+\infty,\\
&P(n,\mu)<0\implies \lim_{x\to\infty} M^N(x,n,\mu )=-\infty.
\end{aligned}
\end{equation*}
Finally, if $P(n,\mu)=0$, then $\lim_{x\to \infty}M^N(x,n,\mu)/x=0$ and, using the convexity of $M^N(\cdot,n,\mu)$, either $M^N$ diverges to $-\infty$ or it converges to a finite value $L\le M^N(0,n,\mu)$. \end{corollary}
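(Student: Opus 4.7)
The plan is to reduce the whole corollary to a single asymptotic computation of the ratio $M^N(x,n,\mu)/x$ as $x\to\infty$. Dividing the explicit expression \eqref{Def:M_njumps} by $x$, the constant term $r_n(\mu)\hat f(n,\mu)K$ contributes $0$ in the limit, the linear part contributes the constant $(\hat f(n,\mu)-\beta_n(\mu))(\theta-\alpha-r_n(\mu))$, and the remaining part contributes $\lambda_{n+1}\lim_{x\to\infty}\hat V^N(x,n+1;\mu)/x$, which equals $\lambda_{n+1}\hat V^\infty_{n+1}(\mu)$ by \eqref{vnmeno1}. Summing gives $\lim_{x\to\infty} M^N(x,n,\mu)/x = P(n,\mu)$.

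From this, the first two cases are immediate. If $P(n,\mu)>0$, then $M^N(x,n,\mu)/x$ is eventually bounded below by $P(n,\mu)/2>0$, which forces $M^N(x,n,\mu)\to+\infty$; the case $P(n,\mu)<0$ is symmetric and yields divergence to $-\infty$.

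For $P(n,\mu)=0$, I would invoke the convexity of $x\mapsto M^N(x,n,\mu)$ already established in the proof of Proposition \ref{convexw}: the nontrivial piece is the convexity of $\hat V^N(\cdot,n+1;\mu)$ (inherited from $V^N(\cdot,n+1,\mu)$), while the remaining terms are affine. For any convex function $f$ on $[0,\infty)$ the right derivative $f'_+$ is non-decreasing and the asymptotic slope $\lim_{x\to\infty}f(x)/x$ coincides with $\lim_{x\to\infty}f'_+(x)$, a standard fact that I would cite rather than rederive. With this limit equal to zero, one obtains $f'_+(x)\le 0$ for every $x\ge 0$, so $f=M^N(\cdot,n,\mu)$ is non-increasing on $[0,\infty)$. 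Such a function admits a limit in $\{-\infty\}\cup\R$ as $x\to\infty$, bounded above by $f(0)=M^N(0,n,\mu)$, which yields precisely the two alternatives in the statement.

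The only mild subtlety is the identification of the asymptotic slope of a convex function with $\lim_{x\to\infty}f(x)/x$; beyond that the argument is a direct substitution of \eqref{vnmeno1} into \eqref{Def:M_njumps} combined with the elementary monotonicity behaviour of convex functions with vanishing asymptotic slope.
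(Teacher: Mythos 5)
Your proof is correct and follows the same approach the paper implicitly uses: divide \eqref{Def:M_njumps} by $x$, apply \eqref{vnmeno1} to identify $\lim_{x\to\infty}M^N(x,n,\mu)/x=P(n,\mu)$, and then invoke convexity of $M^N(\cdot,n,\mu)$ (inherited from $\hat V^N(\cdot,n+1;\mu)$) for the borderline case $P(n,\mu)=0$. The identification of $\lim_{x\to\infty}f(x)/x$ with $\lim_{x\to\infty}f'_+(x)$ for a convex $f$, and the ensuing deduction that $f'_+\le 0$ everywhere so $f$ is non-increasing, is the standard argument and correctly yields the two alternatives in the statement.
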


\appendix
\section{Markovian structure of the problem} \label{app:markovian}
In this appendix, we present results concerning the Markovian structure of the problem.
First, we show that $a_{\eta+t}$, defined in \eqref{Eq: a_eta+t}, is a $\mathcal{E}$-measurable function of $(n_t,\mu_t)$.
In particular, it holds
\begin{align}\label{annuitymarkov}
\begin{aligned}  
a_{\eta+t}&=\E \Big[\int_{0}^{\infty} \e^{-\rho u} {}_u p_{\eta+t}\ud u  \Given \cF_t \Big] = \int_{0}^{\infty} \e^{-\rho u} \E \Big[ \e^{ -\int_{t}^{t+u}\mu_s\ud s} \Given \cH_t \Big]\ud u\\
&=\int_{0}^{\infty} \e^{-\rho u} \E \Big[ \e^{ -\int_{t}^{t+u}\mu_s\ud s} \Given \sigma(n_t,\mu_t) \Big]\ud u, 
\end{aligned}
\end{align}
where the second equality is by independence of $\cF^B_\infty$ and $\cH_\infty$ and Fubini's theorem, the third equality is by Markov property of $\chi_t=(n_t,\mu_t)$.
By \cite[Ch.\ 1.3, Lemma 1.1, p.5]{baldi2017stochastic}, for any fixed $u$ there exists a $\mathcal{E}$ measurable function $\tilde{f}(u,\cdot)$ such that 
$\E [ \e^{ -\int_{t}^{t+u}\mu_s\ud s} | \sigma(n_t,\mu_t) ]= \tilde{f}(u,n_t,\mu_t)$. 
Moreover, $u\mapsto \tilde{f}(u,n,\mu)$ is continuous uniformly in $(n,\mu)$, because $|\tilde{f}(u,n,\mu)-\tilde{f}(v,n,\mu)|\leq \mu_M|v-u|$.
Then, $(u,n,\mu)\mapsto \tilde{f}(u,n,\mu)$ is $\mathcal{B}(\R_+)\times \mathcal{E}$ measurable, by \cite[Ch.\ 4.10, Lemma 4.51, p.\ 153]{Aliprantis06infinitedimenstional}.
Hence, continuing \eqref{annuitymarkov},
\begin{align} \begin{aligned} \label{annuitymarkov2}
    a_{\eta+t}&=\int_{0}^{\infty} \e^{-\rho u} \tilde{f}(u,n_t,\mu_t)\ud u\eqqcolon f(n_t,\mu_t),
    \end{aligned}\end{align}
where $f$ is $\mathcal{E}$ measurable, by Fubini's Theorem, see \cite[Ch.\ 1.3, Thm 1.2, p. 8-9]{baldi2017stochastic}.

Next, we embed the individual optimisation problem \eqref{Def:Value_f_nonmrkv} into a Markovian framework. 
With no loss of generality, by independence of $\Theta$ from $\cF_\infty$ we enforce throughout the paper a product-space structure:
\begin{equation}\label{eq:probabspace}
(\Omega,\cG,\P)=(\Omega^1\times\Omega^2,\cG^1\times\cG^2,\P^1\times\P^2).
\end{equation}
For $\omega=(\omega_1,\omega_2)\in\Omega^1\times\Omega^2$, we have
$((X_t(\omega),\chi_t(\omega))_{t\ge 0},\Theta(\omega))=((X_t(\omega_1),\chi_t(\omega_1))_{t\ge 0},\Theta(\omega_2))$.
The expectations under $\P^i$ are indicated by $\E^i[\cdot]$ for $i=1,2$. 
\begin{proposition} \label{prop:markovian}
For $\P_{x,n,\mu}(\cdot)=\P(\cdot|X_0=x,n_0=n,\mu_0=\mu)$, letting     
\begin{align*}
    \begin{aligned} 
    V(x,n,\mu)\coloneqq \sup_{\tau \in\cT(\bF)} \E_{x,n,\mu} \Big[& \int_{0}^{\tau}\e^{-\int_0^t (\rho+\mu_s)\ud s} ( \alpha+\nu \mu_t ) X_t\ud t+ \e^{-\int_0^\tau (\rho+\mu_s)\ud s} \hat{f}(n_\tau,\mu_\tau)(X_\tau-K)\Big],
    \end{aligned}
\end{align*}
we have $V(X_0,0,\mu_0)=V_0$ (cf.\ \eqref{Def:Value_f_nonmrkv}).
\end{proposition}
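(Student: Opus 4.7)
The plan is to fix an arbitrary $\bF$-stopping time $\tau$ and establish, term by term, that the expectation of the payoff in \eqref{Def:Value_f_nonmrkv} equals that in \eqref{Eq:Value_f_mrkv} initialised at $(X_0,0,\mu_0)$; taking suprema over $\cT(\bF)$ on both sides then yields $V_0=V(X_0,0,\mu_0)$. The key tool is the product structure \eqref{eq:probabspace}: since $\tau$, $X$ and $\chi$ depend only on $\omega_1$, conditioning on $\cF_\infty$ amounts to integrating out $\Theta$ against its exponential law. By \eqref{Eq:tau_d_F_infty}, $\P(\tau_d>t\given\cF_\infty)=\e^{-\Lambda_t}$, and more generally $\tau_d$ admits conditional density $\mu_u\e^{-\Lambda_u}\ud u$ given $\cF_\infty$.

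Applying these identities and Fubini's theorem to the first two terms of the payoff yields
\begin{align*}
\E\Big[\int_0^{\tau_d\wedge\tau}\e^{-\rho t}\alpha X_t\ud t\Given\cF_\infty\Big]&=\int_0^\tau \e^{-\int_0^t(\rho+\mu_s)\ud s}\alpha X_t\ud t,\\
\E\Big[\ind_{\{\tau_d\le\tau\}}\e^{-\rho\tau_d}\nu X_{\tau_d}\Given\cF_\infty\Big]&=\int_0^\tau \e^{-\int_0^u(\rho+\mu_s)\ud s}\nu\mu_u X_u\ud u,
\end{align*}
whose sum reproduces the running payoff $\int_0^\tau \e^{-\int_0^t(\rho+\mu_s)\ud s}(\alpha+\nu\mu_t)X_t\ud t$ in \eqref{Eq:Value_f_mrkv}. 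For the annuity term I rewrite $\int_{\tau_d\wedge\tau}^{\tau_d}\e^{-\rho t}\ud t=\int_\tau^\infty\e^{-\rho t}\ind_{\{t<\tau_d\}}\ud t$ and apply the same survival identity to get, after factoring out $\e^{-\rho\tau-\Lambda_\tau}$,
\begin{align*}
\E\Big[P_\tau\int_{\tau_d\wedge\tau}^{\tau_d}\e^{-\rho t}\ud t\Given\cF_\infty\Big]=(\hat\rho+\hat\mu)(X_\tau-K)\,\e^{-\int_0^\tau(\rho+\mu_s)\ud s}\int_0^\infty\e^{-\rho u-(\Lambda_{\tau+u}-\Lambda_\tau)}\ud u,
\end{align*}
where I used $P_\tau=(\hat\rho+\hat\mu)(X_\tau-K)$ because $\hat a_{\eta+\tau}=(\hat\rho+\hat\mu)^{-1}$.

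Taking outer expectation and conditioning on $\cF_\tau$, the strong Markov property of the PDMP $\chi$ (cf.\ \cite[Thm.\ 25.5]{davis2018markov}) combined with Assumption~\ref{Ass:indipendence} identifies the remaining $u$-integral with $a_{\eta+\tau}=(\hat\rho+\hat\mu)^{-1}\hat f(n_\tau,\mu_\tau)$, by \eqref{Eq: a_eta+t}, \eqref{fhatcmu} and \eqref{annuitymarkov2}. Summing the three contributions gives the payoff in \eqref{Eq:Value_f_mrkv} for the same $\tau$, and passing to the supremum finishes the proof. The main obstacle is precisely this last step: to legitimately invoke the strong Markov property of $\chi$ at the $\bF$-stopping time $\tau$ inside the outer $u$-integral, one needs joint measurability of $(u,n,\mu)\mapsto\tilde f(u,n,\mu)$ and the representation $a_{\eta+t}=f(n_t,\mu_t)$, which is exactly what the discussion leading to \eqref{annuitymarkov2} is designed to provide. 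All interchanges of integrals and expectations are justified uniformly in $\tau$ by the exponential discounting $\e^{-\rho t}$ together with the boundedness $\mu_s\in[\mu_m,\mu_M]$, which deliver absolute integrability of every term handled above.
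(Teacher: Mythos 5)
Your proof is correct and follows essentially the same route as the paper's: decompose the payoff term by term, integrate out $\Theta$ against its exponential law (your $\E[\cdot\mid\cF_\infty]$ formulation is equivalent to the paper's explicit Fubini argument with $\tau_d=\hat\tau(\omega_1,\Theta(\omega_2))$ and the change of variable $\vartheta=\Lambda_s$), and then use the Markovian representation $a_{\eta+t}=f(n_t,\mu_t)$ from the discussion preceding \eqref{annuitymarkov2} together with the strong Markov property of $\chi$ at the $\bF$-stopping time $\tau$ to identify the annuity factor as $\hat f(n_\tau,\mu_\tau)$. You have also correctly pinpointed the one genuine technical issue (joint measurability of $\tilde f$) and where the paper addresses it.
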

\begin{proof}
The construction in \eqref{eq:probabspace} allows us to decompose $\tau_d(\omega)=\hat \tau(\omega_1,\Theta(\omega_2))$ with 
\[
\hat \tau(\omega_1,\vartheta)\coloneqq\inf\{t\ge 0:\Lambda_t(\omega_1)\ge \vartheta\}\quad\text{for $\vartheta\in[0,\infty)$.}
\]
Moreover, $\tau\in\cT(\bF)$ implies $\tau(\omega)=\tau(\omega_1)$ and, therefore, by Fubini's theorem, we easily obtain
\begin{align*}
&\E \Big[  \int_{0}^{\tau_d \wedge \tau }\e^{-\rho t}\alpha X_t\ud t  +\ind_{\{\tau_d \leq \tau \}}\e^{-\rho \tau_d} \nu X_{\tau_d} +P_{\tau}\hspace{-3pt}\int_{\tau_d\wedge \tau}^{\tau_d}\e^{-\rho t}\ud t \Big]\\
&=\E^1 \Big[\int_0^\infty \!\Big(\int_{0}^{\hat \tau(\vartheta) \wedge \tau }\!\!\e^{-\rho t}\alpha X_t\ud t\Big)\e^{-\vartheta} \ud \vartheta \\
&\quad \quad \quad +\int_0^\infty\!\!\ind_{\{\hat \tau(\vartheta) \leq \tau \}}\e^{-\rho \hat \tau(\vartheta)} \nu X_{\hat \tau(\vartheta)}\e^{-\vartheta}\ud \vartheta +P_{\tau}\hspace{-3pt}\int_0^\infty\Big(\int_{\hat \tau(\vartheta)\wedge \tau}^{\hat \tau(\vartheta)}\!\!\e^{-\rho t}\ud t\Big)\e^{-\vartheta}\ud \vartheta \Big]\\
&=\E^1 \Big[\int_0^\infty \!\Big(\int_{0}^{\infty}\!\!\ind_{\{t\le \hat \tau(\vartheta) \}}\e^{-\vartheta} \ud \vartheta\Big)\e^{-\rho t}\alpha X_t 1_{\{t\le \tau\}}\ud t \\
&\quad \quad \quad +\int_0^\infty\!\!\ind_{\{\hat \tau(\vartheta) \leq \tau \}}\e^{-\rho \hat \tau(\vartheta)} \nu X_{\hat \tau(\vartheta)}\e^{-\vartheta}\ud \vartheta +P_{\tau}\hspace{-3pt}\int_0^\infty\Big(\int_0^\infty \ind_{\{t\le \hat \tau(\vartheta)\}}\e^{-\vartheta}\ud \vartheta\Big)\e^{-\rho t}1_{\{t\ge \tau\}}\ud t \Big].
\end{align*}
Since $t\mapsto\Lambda_t$ is strictly increasing, its inverse is continuous. Continuing from the expression above and using \cite[Ch.\ 0.4, pp. 7-9]{revuz2013continuous} we obtain
\begin{align*}
& \E^1 \Big[\int_0^\infty \Big(\int_{t}^{\infty }\e^{-\Lambda_s} \ud \Lambda_s\Big)\! \ind_{\{t\leq \tau\}} \e^{-\rho t}\alpha X_t \ud t \\
&\quad +\int_0^\tau \e^{-\rho t} \nu X_{t}\e^{-\Lambda_t} \ud\Lambda_t +P_{\tau}\hspace{-3pt}\int_{0}^{\infty }\!\!\Big(\int_{t}^{\infty }\e^{-\Lambda_s} \ud \Lambda_s\Big)\ind_{\{t\ge \tau\}}\e^{-\rho t}\ud t \Big]\\
&=\E^1 \Big[ \int_{0}^{\tau}\e^{-\rho t} {}_tp_{\eta} ( \alpha+\nu \mu_t ) X_t\ud t+P_\tau \int_{\tau}^\infty\e^{-\rho t} {}_tp_{\eta} \ud t\Big],
\end{align*}
where the equality holds upon observing that
$\int_{t}^{\infty }\e^{-\Lambda_s} \ud \Lambda_s=\e^{-\Lambda_t}= {_t} p_\eta$ and $\ud \Lambda_t=\mu_t\ud t$.

Using \eqref{Eq: a_eta+t}, it is not difficult to check that 
\begin{align}\label{eq:mr1}
\E^1 \Big[ \int_{\tau}^{\infty}\e^{-\rho t} \ {}_tp_{\eta} \ud t \Given \cF_\tau\Big]={}_\tau p_{\eta} \ \e^{-\rho \tau} \ a_{\eta+\tau}.
\end{align}
Finally, we express everything under the measure $\P$ for notational simplicity and upon observing that for any $Y:\Omega_1\to\R$, $\E^1[Y]=\E[Y]$. Expanding the expression of $P_\tau$ we reach our problem formulation
\begin{align*}
&\sup_{\tau\in\cT(\bF)}\E \Big[ \int_{0}^{\tau}\e^{-\rho t} {}_tp_{\eta} ( \alpha+\nu \mu_t ) X_t\ud t+\e^{-\rho \tau} {}_\tau p_{\eta} \frac{a_{\eta+\tau}}{\hat{a}_{\eta+t}}(X_\tau-K)\Big]\\
&=\sup_{\tau\in\cT(\bF)}\E \Big[ \int_{0}^{\tau}\e^{-\rho t} {}_tp_{\eta} ( \alpha+\nu \mu_t ) X_t\ud t+\e^{-\rho \tau} {}_\tau p_{\eta} \hat f(n_\tau,\mu_\tau)(X_\tau-K)\Big],
\end{align*}
where, recalling $f(n_t,\mu_t)$ from \eqref{annuitymarkov2}, we have set  
$a_{\eta+t}/\hat{a}_{\eta+t}=(\hat \rho+\hat \mu)f(n_t,\mu_t)\eqqcolon\hat{f}(n_t,\mu_t)$.

Since the process $(X_t,n_t,\mu_t)_{t\ge 0}$ is a Markov process, we embed the optimization problem into a Markovian framework by evaluating the objective function above starting from any initial condition $(x,n,\mu)\in (0,\infty)\times\N_0\times[\mu_m,\mu_M]$. That yields the function $V(x,n,\mu)$ and the equivalence $V(X_0,0,\mu_0)=V_0$ is by construction.
\end{proof}
Next we state a property of the money's worth, which we use in the proof of Proposition \ref{Kposw}.
\begin{lemma}\label{lem:mw}
For all $(n,\mu)$ it holds
   $(\rho+\mu) \hat{f}(n,\mu) >\lambda_{n+1} \big(\int_{\mu_m}^{\mu_M}\hat{f}(n+1,z) q(n,\mu,\ud z)-\hat{f}(n,\mu)\big)$.
\end{lemma}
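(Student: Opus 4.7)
The plan is to derive a Feynman--Kac-type recursion for $\hat f(n,\mu)$ by conditioning on the first jump time of the mortality process, and then to read off the strict inequality by inspection.

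I would start from \eqref{annuitymarkov2}, which gives $\hat f(n,\mu) = (\hat\rho + \hat\mu) f(n,\mu)$ with
\[
f(n,\mu) = \E_{n,\mu}\Big[\int_0^\infty \e^{-\rho u}\, \e^{-\int_0^u \mu_s \ud s}\, \ud u\Big].
\]
Let $\xi$ denote the time of the first jump of $(\chi_t)_{t\ge 0}$ from the initial state $(n,\mu)$, as in \eqref{eq:xi}, so that $\xi\sim\exp(\lambda_{n+1})$ and $\mu_s \equiv \mu$ on $[0,\xi)$. Splitting the time integral at $\xi$ and applying the strong Markov property together with Assumption \ref{Ass:indipendence} (independence of $\xi$ and $\bar\mu_{n+1}$), I would write
\[
f(n,\mu) = \E\Big[\int_0^\xi \e^{-(\rho+\mu) u}\, \ud u\Big] + \E\big[\e^{-(\rho+\mu)\xi}\big] \int_{\mu_m}^{\mu_M} f(n+1,z)\, q(n,\mu,\ud z).
\]

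Using the identities $\E[\e^{-(\rho+\mu)\xi}] = \lambda_{n+1}/(\lambda_{n+1}+\rho+\mu)$ and $\E[\int_0^\xi \e^{-(\rho+\mu) u} \ud u] = 1/(\lambda_{n+1}+\rho+\mu)$ (direct computation with the exponential distribution and Fubini, justified by boundedness of $\mu_s$ on $[\mu_m,\mu_M]$), the display above reduces to the recursion
\[
(\rho+\mu+\lambda_{n+1})\, f(n,\mu) = 1 + \lambda_{n+1} \int_{\mu_m}^{\mu_M} f(n+1,z)\, q(n,\mu,\ud z).
\]
Multiplying through by $(\hat\rho+\hat\mu)$ and rearranging yields the exact identity
\[
(\rho+\mu)\hat f(n,\mu) - \lambda_{n+1}\Big(\int_{\mu_m}^{\mu_M} \hat f(n+1,z)\, q(n,\mu,\ud z) - \hat f(n,\mu)\Big) = \hat\rho+\hat\mu,
\]
and the strict inequality in the lemma follows since $\hat\rho+\hat\mu > 0$.

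The only delicate point is the conditioning argument for the recursion: one must separate the contributions of $[0,\xi)$ and $[\xi,\infty)$, apply the strong Markov property at $\xi$ to recognise that, conditional on $\bar\mu_{n+1}=z$, the shifted process has the law of $(\mu_s)_{s\ge 0}$ started from $(n+1,z)$, and then use independence of $\xi$ from $\bar\mu_{n+1}$ to factorise the expectation. The boundary case $n=N$ is trivial under Assumption \ref{assnomorejumps}, since $\lambda_{N+1}=0$ reduces the identity to $(\rho+\mu)\hat f(N,\mu)=\hat\rho+\hat\mu$, consistently with the explicit formula \eqref{delta1}.
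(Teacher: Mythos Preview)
Your proof is correct and follows essentially the same route as the paper's: both arguments decompose at the first jump time $\xi$, invoke the strong Markov property and the independence of $\xi$ from $\bar\mu_{n+1}$, and then integrate out the exponential law of $\xi$. The paper packages the decomposition as an optional sampling argument for the martingale $s\mapsto {_s p_{\eta+t}}\e^{-\rho s}f(\chi_{t+s})+\int_0^s \e^{-\rho u}{_u p_{\eta+t}}\ud u$ and then drops the strictly positive integral term, whereas you compute that term explicitly and obtain the sharper identity $(\rho+\mu)\hat f(n,\mu)-\lambda_{n+1}\big(\int_{\mu_m}^{\mu_M}\hat f(n+1,z)q(n,\mu,\ud z)-\hat f(n,\mu)\big)=\hat\rho+\hat\mu$; the inequality is then immediate.
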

\begin{proof}
Recall \eqref{annuitymarkov2} and set $f(n_t,\mu_t)=f(\chi_t)$ with a slight abuse of notation. From \eqref{Eq: a_eta+t} we obtain 
\[
_{s} p_{\eta+t}\e^{-\rho s}f(\chi_{t+s})+\int_0^s\e^{-\rho u} {_{u}p_{\eta+t}}\ud u=\E\Big[\int_0^\infty\e^{-\rho u} {_u p_{\eta+t}}\ud u\Big|\cF_{t+s}\Big],\quad \text{for $s\ge 0$},
\]
showing that the process
$s\mapsto {_{s} p_{\eta+t}}\e^{-\rho s}f(\chi_{t+s})+\int_0^s\e^{-\rho u} {_{u}p_{\eta+t}}\ud u$
is indeed a bounded $\bF$-martingale (and a $\bH$-martingale). Hence, for any $\bF$ stopping time $\tau\ge 0$, by optional sampling
\[
f(\chi_t)=\E\Big[{_{\tau} p_{\eta+t}}\e^{-\rho \tau}f(\chi_{t+\tau})+\int_0^\tau\e^{-\rho u} {_{u}p_{\eta+t}}\ud u\Big|\cF_t\Big].
\]
In particular, when $\chi_t=(n,\mu)$ and $\tau=\xi$, that implies
\begin{align*}
f(n,\mu)&>\E_{n,\mu}\Big[\e^{-(\rho+\mu) \xi}f(n+1,\bar\mu_{n+1})\Big]\\
&=\frac{\lambda_{n+1}}{\rho+\mu+\lambda_{n+1}}\E_{n,\mu}\Big[f(n+1,\bar\mu_{n+1}\big)\Big]=\frac{\lambda_{n+1}}{\rho+\mu+\lambda_{n+1}}\int_{\mu_m}^{\mu_M}f(n+1,z)q(n,\mu,\ud z),
\end{align*}
where the first equality is obtained by integrating out the random time $\xi$ (recall independence of $\xi$ and $\bar\mu_{n+1}$) and the second one uses the law of $\bar\mu_{n+1}$ (cf.\ Remark \ref{rem:q}). The claim in the lemma follows.
\end{proof}

\begin{proof}[{\bf Proof of Proposition \ref{Prop:recursive}}]
\label{app.proof} For $n=N$, the problem reduces to a one-dimensional optimal stopping problem for the process $(X_t)_{t\ge 0}$. Hence, we can restrict our attention to $\cT(\bF^B)$-stopping times.
In particular, \eqref{VF:recursive} holds for $V^N(x,n,\mu)$.
Next, we prove \eqref{VF:recursive} for $n< N$. Recall $\xi\coloneqq \inf\{t>0: n_t\neq n\}$. 
The Dynamic Programming Principle (DPP)  yields\footnote{The argument is standard, but interested readers can find further details in \cite{mythesis}.}:
    \begin{align}
        \begin{aligned}  \label{Eq:DPP}
            V^N(x,n,\mu)=\sup_{\tau \in \cT(\bF)} \E_{x,n,\mu}\Big[& \int_{0}^{\tau\wedge \xi}\!\!\e^{-(\rho+\mu) t} ( \alpha\!+\!\nu \mu ) X_t\ud t\!+\!\ind_{\{\tau<\xi\}} \e^{-(\rho+\mu) \tau}   \hat{f}(n,\mu) (X_\tau\!-\!K) \\
            &+ \ind_{\{\tau\geq\xi\}}\e^{-(\rho+\mu)\xi}V^N(X_{\xi},n+1,\bar\mu_{n+1}) \Big].
        \end{aligned}
    \end{align} 
Thanks to independence of $\xi$ and $\cF^B_\infty$, in \eqref{Eq:DPP} we can restrict our attention to stopping times from $\cT(\bF^B)$ by the same argument used to obtain \eqref{Def:Value_f_nonmrkv}. Integrating with respect to the density of $\xi$ we further simplify the problem. Indeed, with no loss of generality we may assume that $(\Omega^1,\cG^1,\P^1)$ in \eqref{eq:probabspace} is itself a product space, that accommodates mutually independent random variables and processe $\bar \mu_{n+1}$, $\xi$ and $(X_t)_{t\ge 0}$. 
For any $\tau\in\cT(\bF^B)$, by Fubini's theorem we deduce
\begin{align}
    \begin{aligned}  
&\E_{x,n,\mu}\Big[\int_{0}^{\tau\wedge \xi}\e^{-(\rho+\mu) t} ( \alpha+\nu \mu ) X_t\ud t+\ind_{\{\tau<\xi\}} \e^{-(\rho+\mu) \tau}   \hat{f}(n,\mu) (X_\tau-K)\Big]\\
&=\E_{x,n,\mu}\Big[\int_0^\infty\Big(\int_{0}^{\tau\wedge s}\e^{-(\rho+\mu) t} ( \alpha+\nu \mu ) X_t\ud t\Big)\lambda_{n+1} \e^{\lambda_{n+1} s}\ud s+\e^{-\lambda_{n+1} \tau} \e^{-(\rho+\mu) \tau}   \hat{f}(n,\mu) (X_\tau-K)\Big]\\
&=\E_{x,n,\mu}\Big[\int_{0}^{\tau}\e^{-r_n(\mu) t} ( \alpha+\nu \mu ) X_t\ud t+ \e^{-r_n(\mu) \tau}   \hat{f}(n,\mu) (X_\tau-K)\Big].
\end{aligned}
\end{align}
Similarly,
\begin{align}
    \begin{aligned}  
\E_{x,n,\mu}\Big[\ind_{\{\tau\geq\xi\}}\e^{-(\rho+\mu)\xi}V^N(X_{\xi},n+1,\bar\mu_{n+1}) \Big]
&=\E_{x,n,\mu}\Big[\int_0^\tau \lambda_{n+1}\e^{-\lambda_{n+1} t}\e^{-(\rho+\mu)t}V^N(X_{t},n+1,\bar\mu_{n+1})\ud t\Big]\\
&=\E_{x,n,\mu}\Big[\int_0^\tau \lambda_{n+1}\e^{-r_n(\mu)t}\hat{V}^N(x,n+1;\mu)\ud t\Big],
\end{aligned}
\end{align}
with $\hat{V}^N(x,n+1;\mu)$ as in \eqref{Eq:hatV}.
Combining \eqref{Eq:DPP} with the expressions above completes the proof.
\end{proof}
\medskip
\noindent{\bf Funding}: G. Stabile received financial support from EU -- Next Generation EU -- PRIN2022 (2022FWZ2CR) CUP:
B53D23009970006. G. Stabile was also partially supported by Sapienza University of Rome, research project ``\emph{On the role of bequest motives in the annuitization decision}'', grant no.~RP124190EA4DDA9F.
\medskip

T.\ De Angelis received financial support from -- Next Generation EU -- PRIN2022 (2022BEMMLZ) CUP: D53D23005780006 and PRIN-PNRR2022 (P20224TM7Z) CUP: D53D23018780001.

\medskip

M. Buttarazzi received financial support from Sapienza University of Rome, research project ``\emph{Optimal annuitization under piecewise deterministic mortality force}'', grant no.~ AR2241906EB2D3B9.

\bibliographystyle{plain}
\bibliography{biblio}

\end{document}